\def\submit{0}
\newtheorem*{question}{Question}
\newcommand{\dan}[1]{{\color{red}[{\bf DM:} #1]}}
\newcommand{\elchanan}[1]{{\color{blue}[{\bf EM:} #1]}}
\newcommand{\madhu}[1]{{\color{olive}[{\bf MS:} #1]}}
\newcommand{\omri}[1]{{\color{purple}[{\bf OB:} #1]}}
\newcommand{\note}[1]{{\color{orange}[{\bf Note:} #1]}}
\newcommand{\dan}[1]{}
\newcommand{\elchanan}[1]{}
\newcommand{\madhu}[1]{}
\newcommand{\omri}[1]{}
\newcommand{\note}[1]{}
\newcommand{\PF}{\textsf{PF}}
\newcommand{\CF}{\textsf{CF}}
\newcommand{\PT}{\textsf{PT}}
\newcommand{\CT}{\textsf{CT}}
\newcommand{\True}{{\textsf{True}}}
\newcommand{\False}{{\textsf{False}}}
\newcommand{\temp}{\textsf{temp}}
\newcommand{\LeavesComponentsPotential}{{\Phi_{\text{LC}}}}
\newcommand{\AdapatedLeavesComponentsPotential}{{\Phi_{\text{LC}}}}
\newcommand{\ExponentialPotential}{{\Phi_{\exp{}}}}
\newcommand{\AdaptedExponentialPotential}{{\widetilde{\Phi}_{\exp{}}}}
\newcommand{\tree}{\mathcal{T}}
\newcommand{\N}{\mathbb{N}}
\newcommand{\E}{\mathbb{E}}
\newcommand{\PP}{\mathbb{P}}
\newcommand{\eps}{\varepsilon}
\newcommand{\C}{{\mathcal{C}}}
\newcommand{\CPT}{{\mathcal{C}_{\text{PT}}}}
\newcommand{\ECR}{{E_{\text{check-root}}}}
\title{Is This Correct? Let's Check!}
\author{%
	Omri Ben-Eliezer\thanks{Department of Mathematics, Massachusetts Institute of Technology, Cambridge, Massachusetts, USA. Email: \texttt{omrib@mit.edu}}
	\and Dan Mikulincer\thanks{Department of Mathematics, Massachusetts Institute of Technology, Cambridge, Massachusetts, USA. Supported in part by a Vannevar Bush Faculty Fellowship ONR-N00014-20-1-2826. Email: \texttt{danmiku@mit.edu}} 
	\and Elchanan Mossel\thanks{Department of Mathematics, Massachusetts Institute of Technology, Cambridge, Massachusetts, USA. Supported in part by a Simons 
		Investigator Award, Vannevar Bush Faculty Fellowship ONR-N00014-20-1-2826, 
		ARO MURI W911NF1910217 and NSF awards 
		DMS-2031883 and CCF 1918421. Email: \texttt{elmos@mit.edu}} 
	\and Madhu Sudan\thanks{School of Engineering and Applied Sciences, Harvard University, Cambridge, Massachusetts, USA. Supported in part by a Simons Investigator Award and NSF Award CCF 2152413. Email: \texttt{madhu@cs.harvard.edu}.}
}
\date{}
\begin{document}
	
	\newtheorem{theorem}{Theorem}[section]
	\newtheorem{definition}[theorem]{Definition}
	\newtheorem{observation}[theorem]{Observation}
	\newtheorem{claim}[theorem]{Claim}
	\newtheorem{lemma}[theorem]{Lemma}
	\newtheorem{proposition}[theorem]{Proposition}
	\newtheorem{corollary}[theorem]{Corollary}

	\maketitle
	
	\begin{abstract}
		Societal accumulation of knowledge is a complex process.  The correctness of new units of knowledge depends not only on the correctness of new reasoning, but also on the correctness of old units that the new one builds on. The errors in such accumulation processes are often remedied  by error correction and detection heuristics. 
		Motivating examples include the scientific process based on scientific publications, and software development based on libraries of code. 
		
		Natural processes that aim to keep errors under control, such as peer review in scientific publications, and testing and debugging in software development, would typically check existing pieces of knowledge -- both for the reasoning that generated them and the previous facts they rely on. In this work, we present a simple process that models such accumulation of knowledge and study the persistence (or lack thereof) of errors.
		We consider a simple probabilistic model for the generation of new units of knowledge based on the preferential attachment growth model, which additionally allows for errors. Furthermore, the process includes checks aimed at catching these errors. We investigate when effects of errors persist forever in the system (with positive probability) and when they get rooted out completely by the checking process. 
		The two basic parameters associated with the checking process are the {\em probability} of conducting a check and the {\em depth} of the check. We show that errors are rooted out if checks are sufficiently frequent and sufficiently deep. In contrast, shallow or infrequent checks are insufficient to root out errors. 
		
	\end{abstract}

	\section{Introduction}
	
	Understanding the robustness of systems to errors is one of the main goals of theoretical computer science. One set of examples lies within information and coding theory, which study this question for electronic information transmission processes. Another important example is quantum computing; the empirical success of this field crucially relies on the difficult challenge of controlling errors in quantum computers. 
	
	In this work, we focus on yet another area where errors are prevalent, and error correction has an important everyday role: \emph{societal knowledge accumulation}. 
	Accumulation of knowledge in the modern world is a very rapid yet noisy process, prone to significant errors as new units of knowledge are established \cite{sep-scientific-knowledge-social}. This, in turn, requires proper error mitigation strategies. For instance, the scientific publication process is based upon the assumption that peer reviewing is able to identify errors in submitted papers, ensuring that (for the most part) the scientific literature remains correct and well-founded. This assumption is however very problematic~\cite{Ioannidis2005,IannidisSelfCorrect2012}: there are numerous examples of important works with a huge impact on the scientific community, whose findings were later found to be completely incorrect, either because of errors or malicious actions, deeming decades of subsequent research essentially useless. One very recent and prominent example is in the study of Alzheimer's disease \cite{blots22, SelkoeCummings22}, where researchers have found evidence that some of the most influential works in the field may be fabricated. 
	
	Knowledge accumulation processes such as the scientific process or software development are based on incremental advances that add to the body of knowledge. 
	Each new unit of knowledge may rely on previously discovered ones. Each proclaimed new unit of knowledge may potentially be erroneous on its own or because it relies on an erroneous unit. 
	Without some checks, errors can overwhelm such cumulative processes. 
	As anyone who has ever developed software knows, debugging and testing are crucial for developing reliable code. Similarly, it is unreasonable to trust scientific discoveries in areas where reviews and replication are not taken seriously (see more below).  
	
	In these and other areas, natural mechanisms
	for checking have been introduced. 
	Our work is motivated by the goal of quantifying the success of such procedures: How should such checking mechanisms be measured and evaluated? Which indicators may suggest a proclaimed fact is likely to be true? What steps would be efficient and effective if one had control of the checking process? 
	Of course, we also do not want to spend too many resources on checking as this slows down the accumulation process, so identifying the sweet spot, where errors are rooted out without spending too many resources on checking, is perhaps the ultimate goal. 
	
	In this work, we study these questions under a very simple model.  This model 
	which we call the \emph{Cumulative Knowledge Process} (CKP) includes certain ingredients addressing the following fundamental questions (which are essential in any knowledge accumulation process):
	\begin{enumerate}
		\item How is knowledge generated and represented?
		\item How do errors arise?
		\item When checks occur, what do they check for and what do they do in case an error is found?
	\end{enumerate}
	
	We describe the precise model that we work with in Section \ref{ssec:model}. This model involves some choices and here we describe the issues and try to explain our choices. 
	For question (1) above, it is natural to represent the body of knowledge as a \emph{directed acyclic graph} (DAG), where units of knowledge are represented as nodes, and an edge from $u$ to $v$ indicates that $v$ ``builds upon'' or ``inherits from'' $u$. Indeed, since $v$ can only build upon units of knowledge that were created before it, the directed graph describing such relations must be a DAG. Now, when a new node $v$ arrives, we need to pick a subset of ``parents'' that $v$ shall build upon.
	The choice of parents should take into account the relevance of the previous node to the new node, which is correlated with latent features such as topics, language, or goals, and also to the importance or impact of the previous node. The former notion (relevance) is somewhat hard to model --- multiple models exist and the best choices are still up for debate. The latter is more familiar with models such as preferential attachment forming a good starting point.  
	
	In this work, we bypass the challenge of assessing relevance by considering a simpler model where knowledge is represented by a {\em tree}, i.e., every new node has only one parent (the challenge of relevance emerges only when we try to model the set of parents, and in particular in determining the correlation between children of two or more parents). In the tree model, we can bypass this issue and simply consider the setting where a newly generated node picks its (unique) parent based on the preferential attachment model for trees. 
	
	Thus in our model, the body of knowledge is a rooted tree, where edges are always directed away from the root and higher degree nodes are more likely to be connected to. We acknowledge that this choice is limiting and more general and realistic DAG processes for cumulative knowledge growth should be studied in future work (we comment more on this, as well as the subsequent work \cite{brandenberger2023combinative}, in Section \ref{sec:open_questions}).

	We now turn to question (2), i.e, the model of errors. Here we introduce the so-called ``primary erroneous facts'' in our model by allowing every newly generated node to be an erroneous one with some fixed probability $\eps$ (this is one of three parameters that specify our process). Erroneous nodes remain hidden until some checking process reveals the error. Till such stage, erroneous processes continue to produce offspring at the same rate as true nodes. Such ``children'' nodes and their descendants in the tree are also erroneous and we refer to them as secondary errors. 
	
	Finally, turning to question (3), i.e., the checking process, we couple the checking of facts with the generation of new children. When a new child is generated, with some probability $p$ a ``check'' is performed. In this check, a path of length $k$ is checked for any primary erroneous node. If any such node is found, all its descendants along the path being checked are ``discovered'' to be erroneous and no longer participate in the growth.

	Our model above captures many natural ingredients of cumulative knowledge (modulo the issue of knowledge being a tree), while still being simple enough to allow for analytic studies. While the error and checking models also involve multiple choices, most are natural (and perhaps not new). The checking process however merits further discussion. First, the model associates checking with the growth of the tree --- checks start at new nodes. This, we feel, reflects a natural choice empirically. Units that are not relevant and not used by others are less frequently checked.
	This is true both in scientific publications and in software development. 
	Indeed, the empirical nature of checking is that facts get checked with probability growing with their impact. In our model, the only impact of a node is in the subtree it generates (and the fraction of the subtree that is not publicly known to be erroneous) and so it makes sense to check only when this impact grows. This leads us to the choice of checking only up to a bounded depth $k$. Checking all ancestors of a node may make checking too expensive. Our choice ensures that every new node seems to add a ``constant'' amount of work independent of the size and shape of the tree, making the checking a plausible process.

	Some of the basic questions that one might want to study about errors in societal knowledge can be posed in this model. In this paper, we introduce some phenomena one might wish to study, such as: ``Does the effect of an error survive for long, or is it fleeting?'', and ``What would be the characteristics of a CKP that would deem it reliable?'', see Definition~\ref{def:CKPprops}. Our results (see exact formulations in Section \ref{sec:results}) can roughly be classified into two types:
	\begin{itemize}
		\item Qualitative results: we identify two contrasting regimes of error propagation. In the first, nodes carrying false information are guaranteed, with probability $1$, to have a finite number of descendants. In the second regime, errors may propagate ad infinitum, and false nodes will exist that serve as roots of trees whose size grows to infinity.
		\item Quantitative results: within the regimes described above we prove quantitative bounds on the types of error. Thus, even when a false tree grows to infinity we show that, depending on the parameters, its size cannot be too large. Moreover, in the setting where false trees are finite almost surely, we demonstrate a very desirable property, most nodes in the tree carry truthful information. 
	\end{itemize}
	
	\subsection{Related Work}
	
	\paragraph{Noisy Computation and the PMC Model.}
	The question of how to address errors in computation has been extensively studied in many communities. In von Neumann's model of noisy computation~\cite{vonNeumann:56}, the model describes noisy gates and it is shown that with sufficient duplication and provided that the error rate is sufficiently small, errors can be controlled by duplicating gates, see also \cite{EvansSchulman:99}. 
	The noisy computation model is in some sense stronger than the one considered here as it considers gates with multiple inputs while in our model each unit depends directly only on one unit. 
	However, the noisy computation model 
	has a central planner that can duplicate many copies of the same computation, while in ours such a planner does not exist. Moreover,  the noisy computation model has an $\omega(1)$ bigger circuit than the noiseless circuit  while the number of operations in our model is only $O(1)$ compared to the model without errors. 
	
	Another extensively studied model of error correction that was introduced is called the PMC model~\cite{PMC:67}. The goal of the PMC model is for functional units to detect the (adversarially) faulty units and this is shown to be achievable under various conditions, see~\cite{AMP:20} and the references within. 
	Again in the PMC model, it is assumed that the structure of the network can be designed beforehand. Moreover, in our model a node may be incorrect due to  errors in previous generations and these could not be checked by immediate neighbors in models of this type. 
	
	\paragraph{Local Error Correction and the ``Positive Rate" Conjecture.}
	
	There are related models of ``memory" on graphs with noisy gates. The goal of these models is to remember a single bit forever using noisy gates. Again in these models, the errors and the checking are very local (depends just on the immediate neighbors), see e.g.,~\cite{gray2001reader,MaMoPo:20}.

	\paragraph{The Reproducibility and Replication Crisis.} 
	
	There is a large body of work indicating that a substantial fraction of published scientific research is incorrect, as it cannot be replicated or reproduced, see, e.g.,~\cite{IannidisSelfCorrect2012,Ioannidis2005,Grcar2013,LarcombeEnvironmental2018}.
	The scientific community is trying to come up with standards and protocols that will reduce that fraction, see e.g.,~\cite{allison2018reproducibility}.
	However, the study of errors in scientific literature mostly ignores the problem of research that is incorrect because it is \emph{based on} incorrect prior research, as these dependencies are hard to understand and control. 
	Our results provide a theoretical framework for addressing this issue.

	\paragraph{Knowledge Aggregation vs. Information Spreading.}
	
	We finally comment on one recent work~\cite{information-spread} by a subset of the current authors that considers a similar process where some information spreads noisily through a network with some local checking. In the setting of \cite{information-spread}, some node in a given network receives a piece of knowledge and spreads it through the network by local communication. Errors in this setting arise from communication errors when transmitting information, and the checking procedure aims to check the local consistency of knowledge. The paper studied the rate at which potentially erroneous information spreads through the network, as opposed to the spread of the corrected information. 
	
	The model in \cite{information-spread} shares similarity with our work in that errors, once generated, may spread and affect other parts of the collective knowledge/belief. However, from this point on, the settings diverge. In particular, a central component of our model is the knowledge graph (the tree) which grows with the process, whereas in \cite{information-spread} the network is extraneous. The checking also models somewhat different settings, and in particular, in our case, when a node is proclaimed to be erroneous, it is actually erroneous (while nodes that proclaim themselves to be true may later end up being found to be false). In contrast, such one-sided guarantees do not hold in the previous work. Finally, the nature of the questions explored in the two models is quite different.

	\subsection{Models}
	\label{ssec:model}
	Here we give the necessary detail and background for our model. The {\em Cumulative Knowledge Process (CKP)} has states $X_0,X_1,\ldots,X_t,\ldots$ where $X_t$ is given by a finite rooted tree $\tree_t$ with each node being given a label from the set $\{\PF,\CF,\CT\}$. We refer to any such labeled tree as a {\em knowledge state}. 
	
	\paragraph{Semantics:} 
	$\CT$ and $\CF$ stand for ``Conditionally True'' and ``Conditionally False'' respectively. A node $v$ represents true knowledge if all nodes on the path from $v$ to the root (including $v$ and the root) are $\CT$. We refer to such a node as a $\True$ node. All other nodes are $\False$ nodes. $\PF$ nodes, for ``proclaimed false'', are those that are publicly false.
	A priori the $\CT$ vs. $\CF$ values are not ``public'' (but can be checked with effort) and so given a node the observable state is either $\PF$ or $\PT$ (for ``proclaimed true'') where $\PT$ is the observation associated with both labels in $\{\CF,\CT\}$. (Formally the observation is given by the function $O:\{\PF,\CT,\CF\} \to \{\PF,\PT\}$ with $O(\PF) = \PF$ and $O(\CT) = O(\CF) = \PT$). 
	
	\paragraph{Parameters:}
	The CKP has three parameters: $\eps \in [0,1]$ denoting the probability of introducing new errors, $p \in [0,1]$ denoting the probability of checking, and $k \in \N$ denoting the length of the check. We note that in principle, $k$ may itself be a random variable. For simplicity, we focus on the case where $k$ is constant. However, as will become evident in our results, not much generality is lost by this choice. Given these parameters, we refer to the process $X_k$ as an $(\eps,p,k)$-CKP. 
	
	\paragraph{State evolution:} Given a state $X_{t}$ at time $t$, the state at time $t+1$, i.e., $X_{t+1}$ is obtained by the following stochastic process: 
	\begin{itemize}
		
		\item\textbf{Choosing a parent.}
		If every node in $\tree_t$ is $\PF$ then the process ``stops'', i.e., $X_{t+1} = X_t$. Else
		first a random $\PT$ node $u$ is selected from the tree $\tree_t$ with probability proportional to $1 + \deg_{\PT}(u)$, where $\deg_{\PT}(u)$ denotes the number of $\PT$ children of $u$. This is reminiscent of the preferential attachment random tree model. The main difference lies in the fact that once a node has been identified as $\PF$, it may not generate new children. As we shall see, this can drastically change the evolution of the CKP, when compared to preferential attachment trees.

		\item\textbf{Creating a new child.}
		A new leaf $v$ is attached as a child to $u$. Set $\tree_{t+1}$ to be $\tree_t$ with the added leaf $v$.
		
		\item \textbf{Error introduction.} $v$ is given the label $\CT$ with probability $1-\eps$
		and the label $\CF$ with probability $\eps$. Let $X_{\temp}$ denote the new knowledge state. 
		
		\item \textbf{Error correction.} With probability $p$, the path with $k$ edges is ``checked'' and if an error (either a $\PF$ or $\CF$ node) is found then all descendants on the path are proclaimed false. Specifically, let $v_0 = v$, $v_1,\ldots,v_{k}$ denote the path of length $k$ starting at $v$. (i.e., $v_i$ is the parent of $v_{i-1}$). If all vertices $v_i$ are labelled $\CT$, then $X_{t+1} = X_{\temp}$. Else let $j$ be the smallest index such that $v_i$ is not labeled $\CT$. We modify $X_{\temp}$ by relabelling $v_{j'} = \PF$ for every $0 \leq j' \leq j$. That is, the entire path between $v$ and $v_j$ is labeled $\PF$. $X_{t+1}$ is the resulting knowledge state. 
	\end{itemize}
	
	At this point we emphasize the fact that errors can propagate in two ways: either a new $\CF$ node is added to the tree in the \emph{Error Introduction} phase, or a $\CT$ node is added as the child of a \False\ parent whose observable state is $\PT$.

	\paragraph{Initial state:} The process we care about starts with $X_0$ being a single root labelled $\CT$ (in other words, the root is always $\True$). This initialization leads to a dichotomy in the tree, \False\ nodes are those nodes which have a $\CF$ ancestor, added at some \emph{error introduction} phase, while $\True$ nodes are connected to the root by a path of $\CT$ nodes. Our aim is to study the difference in behaviors between these two sets.\\
	There is one exception to this initialization rule which we discuss now.
	
	\paragraph{The simple CKP:} To build some intuition and to simplify the proofs we consider a simplified version of the $(\eps,p,k)$-CKP process, in which we set $\eps = 0$. We call such a process a {\em $(p,k)$-simple CKP}. To avoid trivialities, we always set $X_0$ to be a single $\CF$ root in the simple process. Thus, in this process only $\CT$ nodes are added to the tree, all nodes are $\False$, and no new errors are introduced during the process's evolution. This restriction introduces a top-down directionality to the process, since a node may be labeled as $\PF$ only after the same happens to its parent.

	\paragraph{Phenomena we care about:}
	We now make some definitions to describe the different types of behaviors demonstrated by our results.

	\begin{definition} [Properties of CKPs] \label{def:CKPprops}
		For $\eps,p \in [0,1]$ and $k \in \N$, let $\tree_t$ be the knowledge state of an $(\eps,p,k)$-CKP. For a $\CF$ node $u$ added to the tree at some time $t_u$, and for any time $t \geq t_u$, let $\tree_t^u$ denote the sub-tree process rooted at $u$ at time $t$. 
		\begin{itemize}
			\item {\bf Survival of error effects:} 
			We say that a CKP exhibits {\em survival of error effects} if the following holds:
			With positive probability, there exists some $\CF$ node $u$ for which the process $\tree_t^u$ goes forever. That is, for every $t$, there is at least one $\PT$ node in $\tree_t^u$. 
			\item {\bf Elimination of error effects:} We say that a CKP exhibits {\em elimination of error effects} if it does not exhibit survival of error effects. Specifically for every $\CF$ node $u$, there exists a time $t$ after which every node in $\tree_t^u$ is $\PF$.  In particular, in the simple process, elimination of the error effects means that all nodes have become $\PF$ and so the process stops.
			\item {\bf $\varphi$-reliable process:} Let $\varphi:\N \to \N$ satisfy $\varphi(t)=o(t)$. We say the CKP process is \emph{$\varphi$-reliable} when the following holds. Let $M_t$ stand for the maximal size of a sub-tree in $\tree_t$ of $\False$ and $\PT$ nodes. Then, for any constant $c>0$,
			$$\PP\left(M_t > \varphi(t)\right) \xrightarrow{t\to \infty}0.$$
			Since at time $t$ the tree always has $t$ nodes this condition means that any \False{} and $\PT$ node, can only have a negligible number of descendants. We will usually take $\varphi(t) = \Theta(t^a)$, for some $a <1$.
			\item\textbf{$\delta$-highly reliable process:} Let $\delta > 0$. We say that the CKP process \emph{$\delta$-highly reliable} if, for any time $t$, the expected proportion of $\False$ nodes labeled as $\PT$, as opposed to $\True$ nodes, is at most $\delta.$ We say that a process is highly reliable if for every $\delta > 0$ it is $\delta$-highly reliable. (In other words, a process is highly reliable if most of the nodes that declare themselves as $\True$ are indeed $\True$.)
		\end{itemize}
		Observe that all definitions above are only a function of the parameters $\eps$, $p$, and $k$.
	\end{definition}
	We note that, other than the fact that survival and elimination of error effects are mutually exclusive, it is not a-priori clear that one definition implies or denies another. Our main results will identify regimes of parameters where the CKP (or its simple variant) satisfies one, or more, of these definitions.
	
	There are additional natural questions one may ask about the model. For example, one could consider highly noisy regimes, when the proportion of \False\ nodes with a $\PT$ label is $1-o(1)$, and so most of the information carried by the process is corrupted. We leave such questions for future investigations.
	
	\paragraph{\PT\ components:}
	We finish this section with the following basic definition, of a proclaimed true (\PT) component. This refers to a sub-tree $T$ in any of our processes (CKP or simple CKP), that at some time $t$ consists only of proclaimed true nodes; and furthermore, is maximal with respect to this property. 
	
	\begin{definition}[\PT\ Component]
		Consider any of the cumulative knowledge processes we define (CKP or simple CKP) at some time $t \geq 0$, and let $\tree_t$ denote the (undirected version of the) tree generated by the process. We call the sub-graph of $\tree_t$ consisting of all $\False$ nodes whose observable state is Proclaimed True (\PT), simply, the \PT\ sub-graph. A connected component of the \PT\ sub-graph is called a \PT\ component. We denote by $\CPT(t)$ the set of all \PT\ components in the process at time $t$.
		
		If $C\in \CPT(t)$ is a $\PT$ component, we denote by $|C|$ the number of $\PT$ nodes in $C$, and if $C' \in \CPT(t+1)$ we use the notation $C' \subset C$ to indicate that $C'$ was created from $C$ in one step. Formally, this means that the root of $C'$ is a descendent of the root of $C$.
	\end{definition}
	
	As we shall see soon, \PT\ components play an essential role in many of our arguments. In particular, we base several potential functions used in our proofs on these components.

	\subsection{Proof Ideas and Techniques}
	Our model is based on the preferential attachment model, which has been studied for nearly a century, originating in the work of Yule, \cite{yule1925mathematical}. By now, the model is well-understood and the growth and evolution of many quantities of interest have been thoroughly analyzed. Thus, our choice of the model should allow us to tap into many known results, and moreover, the recursive nature of the tree is expected to often lend itself to an exact analysis of the dynamics. 
	
	Having mentioned the above, we now remark that the addition of the checking procedure to our model can be thought of as a destructive process, competing with the natural growth process of the preferential attachment tree. While the recursive nature of our process still exists, the dynamics of the preferential attachment tree are often distorted, and some of the underlying symmetry is broken. This becomes particularly evident when one looks at the graph structure of $\PT$ nodes, which will now form a forest of fractured components, rather than an ever-expanding tree. Still, we show that the model lends itself to the probabilistic analysis of several relevant functionals, fundamental to our analysis. Most of the functionals are defined as a sum of simpler functionals applied to individual components. We consider functionals that take into account the number of leaves, the degrees of nodes and their depth, the size of components, etc.
	
	The functionals are analyzed by probabilistic techniques in particular using the analysis of (sub/super)-martingales. Roughly speaking, once we show that the processes have a drift in a certain direction, these techniques allow making asymptotic conclusions about the process. Of course, coming up with the right functionals requires creativity and intuition about various aspects of our process.

    \paragraph{Acknowledgements:} We thank Anna Brandenberger and Peter Gacs for spotting some mistakes in an earlier version.
    
	\section{Our Results} \label{sec:results}
	We now describe our results. We begin by addressing the simple CKP model and then proceed by establishing analogs for the general model.
	
	\subsection{Results for the Simple Model}
	Our first two results show that, depending on the parameters $p$ and $k$, error effects can both survive and be eliminated, in the simple model.
	
	\begin{restatable}[Error effect elimination in the simple model]{theorem}{simpleerrorelim}
		\label{thm:error_effect_elim_simple}
		For all $p \geq \frac{6}{7}$ and $k \geq 4$, the error effect in the $(p,k)$-simple CKP is completely eliminated.
	\end{restatable}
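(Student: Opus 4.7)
The plan is to construct a non-negative potential function $\Phi$ on knowledge states of the simple CKP so that $\Phi_t := \Phi(X_t)$ is a supermartingale whose one-step drift is bounded above by a strict negative constant as long as some $\PT$ node remains. Once such a $\Phi$ is produced, standard potential-function arguments for non-negative supermartingales with uniform negative drift on their positive states imply that the process is absorbed at $\Phi = 0$ in finite time almost surely. Since $\Phi = 0$ corresponds exactly to the all-$\PF$ knowledge state at which the simple CKP halts, this will give elimination of error effects.

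I expect $\Phi$ to decompose as a sum over the current $\PT$ components, $\Phi(X_t) = \sum_{C \in \CPT(t)} \phi(C)$, where $\phi$ is a component-local functional. The authors' notation $\LeavesComponentsPotential$ strongly suggests that $\phi(C)$ is a weighted combination of the number of leaves of $C$ and a per-component constant, of the form $\phi(C) = \alpha \cdot \mathrm{leaves}(C) + \beta$. This shape is motivated by the preferential-attachment rule: each leaf carries attachment weight exactly $1$, so the leaf count governs how much of $C$'s growth is ``light'' (attaching to a leaf, which turns it into an internal node but leaves the leaf count unchanged) versus ``heavy'' (attaching to an internal node, which bumps the leaf count). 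A per-component additive constant $\beta$ is needed to control fragmentation, because a single triggering check can split one component into several, and the sum of $\phi$ over fragments must be tracked carefully against $\phi$ of the parent.

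The core of the proof is the one-step drift calculation. In each step a new node $v$ attaches to some node $u$ in a component $C^*$ at depth $d$ from $C^*$'s effective root; with probability $p$ a check of length $k$ is performed, which triggers whenever $d \leq k-1$ in the root-containing component (or $d \leq k-2$ in a component whose root's parent is $\PF$), causing the ancestral path of $v$ up to the first $\CF$ or $\PF$ node to be relabelled $\PF$ and fragmenting $C^*$ into several smaller pieces. The expected change in $\Phi$ then splits into a positive growth contribution (on steps with no triggering, one leaf or one internal node is added) and a negative contribution (where $\phi(C^*)$ is replaced by $\sum_{C' \subset C^*} \phi(C')$ over the fragments). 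Tuning $\alpha$ and $\beta$ so that the expected negative contribution dominates uniformly over the possible shapes of $C^*$ is the main technical obstacle. Its resolution will require (i) a lower bound on the preferential-attachment weight carried by the top $k$ levels of every $\PT$ component, so that triggering events occur with probability bounded away from zero, and (ii) an upper bound on how many fragments a single triggering check can produce, so that the $\beta$-overhead is paid a controlled number of times. The quantitative thresholds $p \geq 6/7$ and $k \geq 4$ should emerge as the exact balance point of this inequality, and I expect the sharpest step to be the lower bound in (i), which must hold for arbitrarily evolved (and possibly very skinny) components.

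With the drift estimate in place, the conclusion is routine: a non-negative supermartingale with bounded one-step increments and drift $\leq -c < 0$ on its positive states is absorbed at $0$ in finite time with probability $1$, yielding elimination of error effects in the $(p,k)$-simple CKP for the stated parameter range.
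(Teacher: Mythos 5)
Your high-level plan (build a component-decomposed potential, prove it is a supermartingale, and conclude absorption) matches the paper's strategy, but the specific potential you propose cannot work, and the obstruction is exactly at the point you flag as step (ii).

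You propose $\phi(C) = \alpha\cdot\mathrm{leaves}(C) + \beta$, which is the shape of the paper's $\LeavesComponentsPotential$. That functional is used in the paper to prove the \emph{survival} result (Theorem~\ref{thm:simpleerrorsurvival}): for small $p$ it is a \emph{sub}martingale precisely because fragmentation makes it grow. When a triggering check fires, the removed path can pass through nodes of arbitrarily high $\PT$-degree, and each removed node on the path spawns one fragment per surviving child; so the number of new components created by a single check is unbounded. There is no upper bound of the sort your step (ii) requires, and consequently no choice of $\alpha,\beta$ makes the linear-in-leaves potential decrease in expectation on triggering steps. The per-component constant $\beta$ is exactly the term that explodes.

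The paper's fix is a different potential altogether, the exponential potential
\[
\ExponentialPotential(C) = \sum_{v \in C} \bigl(1 + \deg_{\PT}(v)\bigr)\cdot 2^{|v|},
\]
where $|v|$ is the depth of $v$ within $C$. The depth-exponential weight is the key idea you are missing: when a check removes the root, every surviving node in the old component has its depth decrease by at least one, so the total potential over all resulting fragments is at most half of $\ExponentialPotential(C)$ \emph{regardless of how many fragments are created}. This converts the uncontrolled combinatorics of fragmentation into a clean multiplicative contraction, and the tension with growth on non-triggering steps then reduces to checking a single scalar inequality ($(2k-1)p \ge 6$, together with the small-component case), which yields the thresholds $p\ge 6/7$, $k\ge 4$. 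Your final absorption step also needs adjustment: the exponential potential's drift is proportional to the potential itself, not uniformly bounded below zero, and its increments are unbounded; the paper instead applies the martingale convergence theorem to the nonnegative supermartingale and then observes that whenever the potential is nonzero its one-step change has magnitude at least $1$, forcing the a.s.\ limit to be $0$.
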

	\begin{restatable}[Error effect survival in the simple model]{theorem}{simpleerrorsurvive}
		\label{thm:simpleerrorsurvival}
		For all $0 < p \leq \frac{1}{4}$ and $1< k \leq \infty$, the error effect in the $(p,k)$-simple CKP survives with positive probability.
	\end{restatable}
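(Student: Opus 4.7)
The plan is to reduce survival to the survival of a supercritical Galton--Watson branching process, treating each maximal $\PT$ component as an ``individual'' whose offspring are the fresh $\PT$ components created when the component is pruned. When a check triggers at a new leaf $v$ at depth $d_v \le k$ from its component's $\PF$ top $w$, the trunk $v_0 = v, v_1, \ldots, v_{d_v-1} = r_C$ is relabeled $\PF$, and the non-trunk subtrees hanging off the trunk nodes become
\[
R = \sum_{j=1}^{d_v - 1} \bigl(\deg_{\PT}(v_j) - 1\bigr)
\]
new fresh $\PT$ components, each with a trunk node as its new $\PF$ top.

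The key step is to show $\mathbb{E}[R \mid \text{trigger}] > 1$ for all $p \le 1/4$ and $k \ge 2$ (including $k = \infty$). I would analyze a single fresh component $C$ in isolation; this is valid since, conditional on the next node being added to $C$, its internal dynamics are independent of the other components. For the illustrative case $k = 2$, triggers occur only from direct children of $r_C$, so at size $|C|=n$ the conditional trigger probability per node added to $C$ is $p(1+\deg_{\PT}(r_C))/W(C) = \Theta(p/\sqrt{n})$, since $\deg_{\PT}(r_C)$ scales as $\sqrt{n}$ in preferential attachment. A size-biased calculation (e.g.\ via a continuous-time Yule approximation) gives $\mathbb{E}[\sqrt{\tau^*}] = \Theta(1/p)$ for the first trigger time $\tau^*$, and hence $\mathbb{E}[R \mid \text{trigger}] \approx \mathbb{E}[\deg_{\PT}(r_C(\tau^*))] - 1 \approx 1/p - 1 \ge 3$ for $p \le 1/4$. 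For larger $k$, triggers may come from deeper nodes---shortening $\tau^*$---but more trunk nodes contribute to $R$, and an analogous analysis keeps $\mathbb{E}[R \mid \text{trigger}] > 1$ uniformly in $k$.

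Once $\mathbb{E}[R \mid \text{trigger}] > 1$ is in hand, survival follows by coupling the component-level dynamics with an independent supercritical Galton--Watson branching process. Coexisting components share the preferential attachment environment, but their internal dynamics (conditional on receiving the next node) are independent, making the coupling valid; applying the standard supercritical survival theorem then yields positive probability that the branching process, and hence the $\PT$ structure, survives forever. The initial component, with the $\CF$ root acting as its own virtual $\PF$ top, is handled analogously (with $R$ having an extra term from the root's non-trunk children) and seeds the branching process with positive probability.

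The main obstacle is bounding $\mathbb{E}[R \mid \text{trigger}] > 1$ uniformly in $k$, especially for $k = \infty$ where every check fires and the trunk length is the full depth of $v$ within its component; one must simultaneously control the trunk length and the degrees along it. A rigorous execution will likely rely on carefully chosen potential functions (candidates being $\LeavesComponentsPotential$, $\SizePotential$, or $\ExponentialPotential$) together with martingale estimates. The explicit threshold $p \le 1/4$ is a clean constant emerging from such a computation rather than a sharp survival threshold.
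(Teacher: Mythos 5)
The paper's proof takes a fundamentally different and much cleaner route: it defines the \emph{leaves and components potential} $\LeavesComponentsPotential$ (one per singleton component, and $1 + \#\{\text{leaves}\}$ otherwise), shows in Lemma~\ref{lem:leavespotentialevol} that this has bounded increments $|\Delta_t|\le 2$ and strictly positive drift $\E[\Delta_t]\ge \tfrac{1}{2}-2p$ whenever it is nonzero, and then invokes an Azuma-type estimate (Lemma~\ref{lem:positivesubmartingale}) to show that such a sub-martingale stays positive forever with positive probability. Your branching-process reduction is genuinely different, and I do not think it survives as stated, for the following reasons.

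First, the core claim $\E[R\mid\text{trigger}]>1$ is never established---it rests entirely on asymptotic preferential-attachment heuristics ($\deg_{\PT}(r_C)\sim\sqrt{n}$, continuous-time Yule approximations, a ``size-biased calculation'' yielding $\E[\sqrt{\tau^*}]=\Theta(1/p)$) that are not rigorous and, more importantly, describe a component only once it is large. But most fresh components produced by a pruning event are small: a size-one component $\{r_C\}$ gets a new child and, if the check fires (probability $p$), yields $R=0$ offspring; a size-two component for $k\ge 3$ can also be wiped out with $R=0$. These small fragments are overwhelmingly common offspring, so the mean offspring of the aggregated process is dominated by them, not by the $\sqrt{n}$-scaling regime. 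Second, even if $\E[R\mid\text{trigger}]>1$ held in some averaged sense, the offspring distribution is not i.i.d.: fresh components are born with varying sizes, shapes, and root degrees, so what you actually have is a multi-type branching process with an unbounded type space, and supercriticality then requires control of the Perron--Frobenius eigenvalue of the mean-offspring operator, not just a single scalar bound. Third, the coupling step (``each component evolves independently conditional on receiving the next node'') quietly discards the timing structure; you would need to argue that every live component receives infinitely many insertions almost surely, which is plausible but not automatic when many components coexist and the preferential-attachment weights shift over time. The paper's potential-function argument sidesteps every one of these difficulties: the potential is additive over components, its drift is bounded below uniformly regardless of component size or structure, and the Azuma bound needs no independence or type decomposition. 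If you want to pursue the branching-process picture, you would likely end up re-deriving a potential-function bound anyway to control the offspring of small components, so I would recommend switching to the martingale approach directly.
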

	The two theorems above demonstrate contrasting behaviors, which mainly depend on $p$, the probability of checking for errors. Thus, if one wants to ensure complete elimination of errors, one should be willing to look for errors in a reasonable proportion of knowledge units. 
	
	Let us note that the two regimes in Theorems \ref{thm:error_effect_elim_simple} and \ref{thm:simpleerrorsurvival} do not cover the entire parameter space. The behavior of the process for intermediate $p \in (\frac{1}{4}, \frac{6}{7})$ is an interesting question which is left open. In particular, identifying the critical $p$ in which there is a phase transition between survival and elimination of the error effect would be appealing.
	
	To address the role of the parameter $k$ in this model, we focus on the assumption $k\geq 4$ in Theorem \ref{thm:error_effect_elim_simple}. Remarkably this is not a technical issue, and the model displays a striking transition when $k$ is small. We show that small changes to the depth of error correction can have dramatic effects.
	
	\begin{restatable}[Error effect survival when $k = 2$]{theorem}{notwolevels}
		\label{thm:notwolevels}
		For any $0\leq p < 1$ the error effects in the $(p,2)$-simple CKP survive with positive probability.
	\end{restatable}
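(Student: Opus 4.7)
The plan is to use a branching-process analysis on the collection of $\PT$ components. The critical feature enabling survival for all $p<1$ when $k=2$ is that checks are extremely myopic: they can only relabel nodes on a path of length two, so inside any $\PT$ component $C$ with exposed root $w$, every child added to an \emph{interior} node of $C$ (a node whose parent lies in $C$) has a grandparent that is itself $\CT$, and the check passes automatically. Only $w$ is at risk, and then only when it is selected as the parent of a new child and a check is actually performed.

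First I would set up a burn-in: with probability at least $(1-p)^{T_0} > 0$ no check fires during the first $T_0$ steps, so the tree at time $T_0$ is a standard preferential-attachment tree with the root labeled $\CF$ and every other node $\CT$. Taking $T_0$ sufficiently large, well-known structural properties of preferential-attachment trees give, with positive conditional probability, a favorable seed configuration containing a $\CT$ node $b$ at depth $\geq 2$ whose subtree is nontrivial.

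Next I would analyze the branching structure of components. When an exposed root $w$ of a component $C$ is relabeled $\PF$, $C$ is replaced by one new $\PT$ component per remaining $\CT$ child of $w$, each inheriting its entire subtree as its initial configuration. To control this, I would design a potential $\Phi(t) = \sum_{C \in \CPT(t)} \phi(|C|)$ in the spirit of $\SizePotential$, with $\phi$ a carefully chosen concave function of component size such that safe additions (which contribute the positive increment $\phi(|C|+1) - \phi(|C|)$) dominate in expectation the loss $\phi(|C|) - \sum_i \phi(s_i)$ incurred at a split. The point is that, at a split, one typically gets $d_w$ of order $\sqrt{|C|}$ sub-components of mean size of order $\sqrt{|C|}$ by the classical fact that the root of a PA tree of size $n$ has degree of order $\sqrt{n}$; by concavity of $\phi$ this keeps $\sum_i\phi(s_i)$ close to $\phi(|C|)$. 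The goal is to verify the submartingale inequality $\E[\Phi(t+1) \mid \mathcal{F}_t] \geq \Phi(t)$ uniformly in $p<1$, with bounded one-step variance.

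The main obstacle is controlling the expected drop at a split event, which requires sharp estimates for the joint distribution of the root degree $d_w$ and the subtree-size vector of the component at the random time of the killing check, including non-asymptotic tail bounds. Once the submartingale property holds and $\Phi(T_0)>0$ on a positive-probability event from the burn-in, a standard Doob-type argument yields $\PP(\Phi(t) > 0 \text{ for all } t) > 0$. Since $\Phi(t)>0$ is equivalent to the existence of at least one $\PT$ component in $\tree_t$, which in turn is equivalent to the survival of the error effect emanating from the $\CF$ root, this establishes Theorem~\ref{thm:notwolevels}.
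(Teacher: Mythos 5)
Your reading of the $k=2$ mechanism is correct: since a check only looks two levels up, once the original $\CF$ root has been caught, the only dangerous attachment is to the root of a $\PT$ component. (During the initial phase, while the $\CF$ root is still $\PT$, attachments to its children are also dangerous because their grandparent is the $\CF$ node, but a burn-in handles this.) The gap is in the proposed potential. If $\Phi(t)=\sum_{C\in\CPT(t)}\phi(|C|)$ with $\phi$ concave and $\phi(0)=0$, take a singleton component $C$ and condition on the next node attaching to $C$: with probability $p$ the check fires and $C$ vanishes (change $-\phi(1)$), and with probability $1-p$ it grows to size $2$ (change $\phi(2)-\phi(1)$). The conditional drift is $(1-p)\bigl(\phi(2)-\phi(1)\bigr)-p\,\phi(1)$; concavity with $\phi(0)=0$ is subadditivity, so $\phi(2)-\phi(1)\le\phi(1)$, and the drift is at most $(1-2p)\phi(1)<0$ for every $p>1/2$. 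The very subadditivity you invoke to keep $\sum_i\phi(s_i)$ close to $\phi(|C|)$ at a split is exactly what kills the singleton drift near $p=1$. Fixing $n=1$ would require $\phi(2)\ge\phi(1)/(1-p)$, i.e.\ superadditive growth blowing up as $p\to1$; that makes increments unbounded and removes the second ingredient Lemma~\ref{lem:positivesubmartingale} needs (bounded differences plus a uniform positive drift) to convert a submartingale bound into positive survival probability. Relatedly, the obstacle you flag — tail control of $d_w$ and the subtree-size vector at the kill time — is not the binding one: the submartingale inequality must hold pointwise in $\tree_t$, and configurations with many singleton components (which appear whenever a component root is killed) are where it fails, not some atypical distributional event.

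The paper's actual proof of Theorem~\ref{thm:notwolevels} is of a different type and does not rely on a drift condition for a size-type potential. It counts mutually independent \emph{univalent} nodes (nodes with exactly one $\PT$ child). Starting from a univalent initialization with $\PF$ root $r'$ and its univalent $\CT$ child $r$, the stopping time $\tau=\max\{t:\deg_{\PT}(r)=1\}$ is the first time $r$ acquires a second child; since a degree-$1$ node carries preferential-attachment weight only $2$ out of a total $\approx 2t$, one gets $\PP(\tau=t)=\Omega(1/t^2)$, so $\E[\tau]=\infty$ and $\E[|\tree_\tau|]=\infty$. Combining Lemma~\ref{lem:induni} with the facts that a PA tree of size $n$ has $\Theta(n)$ univalent nodes and height $\Theta(\log n)$ (plus a uniform-integrability argument), the expected number of mutually independent univalent nodes at the kill time is infinite, giving a comparison branching process with infinite offspring mean and hence positive survival probability. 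The leverage here — that the lifetime of a univalent-rooted component is heavy-tailed with infinite mean — is regenerative and is invisible to a one-step drift on $\sum_C\phi(|C|)$; this is precisely why the $p>1/2$ regime, where no concave size potential can have nonnegative drift, requires the fundamentally different argument.
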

	
	Our next result further elucidates the role of $k$ in the model by examining the reliability of the process.  Specifically, we show that even when the error effect in the simple model survives with positive probability, for example, when $p \leq \frac{1}{4}$, the process can still be \emph{reliable}, provided $k$ is large enough.
	\begin{theorem} \label{thm:reliableprocess}
		Let $p \in (0,1)$. If $k \in \N$ is such that $\frac{12}{2k-1} \leq p$ then the $(p,k)$-simple CKP is $\varphi$-reliable, with $\varphi(t)=\Theta(t^{0.55})$.
	\end{theorem}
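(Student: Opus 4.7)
The plan is to establish $\varphi$-reliability via a polynomial size-potential $\Phi(t) = \sum_{C \in \CPT(t)} |C|^{1+\eta}$ for a small $\eta > 0$ to be tuned against $p$ and $k$. If I can show the approximate supermartingale inequality $\E[\Phi(t+1) \mid \tree_t] \leq (1 + C(p,k,\eta)/t)\Phi(t)$, then iteration gives $\E[\Phi(t)] \leq t^{C(p,k,\eta)}$ and Markov's inequality gives
\[
\PP\!\left(\max_{C} |C| \geq t^{0.55}\right) \leq \PP\!\left(\Phi(t) \geq t^{0.55(1+\eta)}\right) \leq t^{C(p,k,\eta) - 0.55(1+\eta)},
\]
which tends to $0$ whenever $C(p,k,\eta) < 0.55(1+\eta)$. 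The target will be to show that this strict inequality is precisely what the hypothesis $p \geq 12/(2k-1)$ buys us.

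The first real step is to compute the one-step drift of $\Phi$. A PT component $C$ of size $s$ is selected with probability $(2s-1)/Z(t)$, and inside $C$ a vertex $u$ is chosen with probability $w(u)/w(C)$, where $w(u) = 1 + \deg_{\PT}(u)$ and $w(C) = 2s-1$. If $u$ lies at depth $d \leq k-2$ in $C$ and a check fires (probability $p$), then the path from the new leaf $v$ up to the root $r$ of $C$, together with $v$ itself, is relabeled $\PF$; $C$ breaks into new components of sizes $s_1,\dots,s_m$ with $\sum_j s_j = s - d - 1$. Convexity of $x \mapsto x^{1+\eta}$ yields $\sum_j s_j^{1+\eta} \leq (s-d-1)^{1+\eta}$, and a Taylor expansion then yields the bound
\[
\E[\Delta \Phi_C \mid C \text{ chosen}] \leq (1+\eta) s^{\eta}\Bigl[\,1 - \frac{p}{w(C)} \sum_{u \in C,\, d(u) \leq k-2} w(u)\bigl(d(u)+2\bigr)\Bigr] + O(s^{\eta-1}),
\]
where the growth term is $\approx (1+\eta)s^\eta$ and each break event at depth $d$ contributes $\approx -(d+2)(1+\eta)s^\eta$.

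The main obstacle is the second step: lower-bounding the bracketed reduction factor by a constant away from $1$. Adversarially shaped deep components (e.g. long chains) have almost all weight below depth $k-1$, rendering the check ineffective for them; one would worry this breaks the argument. To handle this I plan to exploit that the number $2k-1$ appearing in the hypothesis is exactly the total weight of a chain of length $k$, which suggests the right structural identity is $\sum_{d=0}^{k-1} w(u_d) \geq 2k-1$ along any root-to-leaf path within $C$. Summing this identity over appropriate weighted paths should yield $\sum_{u:d(u)\leq k-2} w(u)(d(u)+2) \geq \Omega(k) \cdot w(C)$ for any component whose structure is inherited from the preferential attachment dynamics, up to correction terms that can be absorbed into the $1/t$ drift. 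If the bound cannot be established for all tree shapes, I will fall back on the adapted potentials hinted at in the paper's macros (e.g., $\widetilde{\Phi}_{\exp}$) which penalize deep chain-like components directly, trading cleanliness of definition for a provable drift inequality.

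Finally, once the drift bound $\E[\Delta \Phi \mid \tree_t] \leq C(p,k,\eta) \Phi(t)/t$ is established, summation over $C$ requires only that $Z(t) = \sum_C (2|C|-1) = \Omega(t)$, which holds because each step adds $2$ to $Z$ while a check removes at most $k$, so that $Z(t) \geq (1 - pk) t$ is automatic (the regime $12/(2k-1) \leq p$ still permits $p k$ bounded). An elementary Gronwall-type iteration of the drift inequality then yields $\E[\Phi(t)] \leq t^{C(p,k,\eta)}$, and selecting $\eta$ so that the resulting constant is strictly less than $0.55(1+\eta)$ when $p(2k-1) \geq 12$ completes the proof via Markov. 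I expect the structural shallow-weight estimate to be the delicate part; the rest is a standard supermartingale-and-Markov argument.
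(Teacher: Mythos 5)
Your approach has a genuine gap at the step you yourself flag as delicate, and the gap is fatal rather than merely technical. The structural inequality you hope to establish,
\[
\sum_{u:\,d(u)\leq k-2} w(u)\bigl(d(u)+2\bigr) \;\geq\; \Omega(k)\cdot w(C),
\]
is simply false for a chain-like component. If $C$ is a path of length $L \gg k$, every node has $w(u) \in \{1,2\}$, so the left side is $\sum_{d=0}^{k-2} 2(d+2) = \Theta(k^2)$ while $w(C) = 2L-1 = \Theta(L)$, and the right side is $\Theta(kL) \gg k^2$. No amount of ``correction terms absorbed into the $1/t$ drift'' repairs this, because the supermartingale inequality must hold pointwise for every configuration $\tree_t$, not merely for configurations that are typical of the preferential-attachment dynamics; the chain is a legitimate conditioning event even if it is exponentially unlikely. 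Consequently your drift bound on $\sum_C |C|^{1+\eta}$ degenerates: for a chain the bracket is $1 - O(pk^2/L) \to 1$, the potential drifts like an unchecked preferential attachment tree, and you cannot pick $\eta$ so that $C(p,k,\eta) < 0.55(1+\eta)$.

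This is precisely the obstruction the paper's $\AdaptedExponentialPotential(C) = |C|\sum_v d(v)\,2^{|v|}$ is designed to defeat: the exponential-in-depth weighting makes deep chains intrinsically expensive, so that removing the root halves the potential multiplicatively (regardless of shape), while a non-check step adds only an additive $O(1/D)$ fraction; balancing these two gives a supermartingale under $p(2k-1)\geq 12$ with no structural assumption on $C$. The paper then couples this with the leaves-and-components potential $\LeavesComponentsPotential$ in a composite $\Phi$, shows $\Phi$ is a \emph{sub}-martingale, and deduces $\E\bigl[\sum_{|C|>k}\AdaptedExponentialPotential(C)\bigr] = O(t)$, after which Markov and the elementary inequality $\AdaptedExponentialPotential(C) \geq |C|^2$ yield the $t^{0.55}$ bound. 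Your fallback (``use the adapted potential'') is in fact this argument, but it is not a minor patch to your polynomial-potential scheme; it replaces the central lemma. Two secondary issues: your bound $Z(t) \geq (1-pk)t$ is vacuous in the operative regime, since $p \geq 12/(2k-1)$ forces $pk > 6 > 1$; and the iteration $\E[\Phi(t)] \leq t^{C(p,k,\eta)}$ would in any case need $C < 0.55(1+\eta)$, whereas the paper obtains the much stronger linear bound $\E[\text{potential}] = O(t)$ and extracts $t^{0.55}$ from the quadratic lower bound $\AdaptedExponentialPotential(C) \geq |C|^2$ rather than from tuning an exponent.
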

	Thus, if $k$ is large, even if a few units of knowledge are periodically checked, it can still be ensured that no false source will corrupt a non-negligible portion of the knowledge base.
	
	At this point, we do not know whether an absolute constant (such as $0.55$) is the correct term in the exponent in Theorem \ref{thm:reliableprocess}.
	In fact, it is reasonable to expect that the actual power will depend on $k$ and perhaps also on $p$.
	However, below, in Theorem \ref{thm:bigcomponents}, we shall show that one cannot expect better than polynomial dependency and that with constant probability, there will exist ($\False$) $\PT$ components of polynomial size.
	
	\subsection{Results for the General Model}
	In the general model, we first generalize Theorems \ref{thm:simpleerrorsurvival} and \ref{thm:error_effect_elim_simple}. Because of the added parameter $\eps$, the overall dependence on the other parameters becomes slightly more complicated. For now, it will suffice to state simplified versions of the results. The reader is referred to Section \ref{sec:generalproofs} for the exact dependencies.
	\begin{restatable}[Error effect elimination in the general model]{theorem}{generalerrorelim}
		\label{thm:error_effect_elim_general}
		For every $\eps \in (0,1)$, there exists $p_0 \in (0,1)$ and $k_0 \in \N$, such that for any $p \geq p_0$ and $k \geq k_0$, the error effects in the $(\eps,p,k)$-CKP are completely eliminated.
	\end{restatable}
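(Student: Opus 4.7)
The plan is to generalize the supermartingale potential-function argument behind Theorem~\ref{thm:error_effect_elim_simple} to handle the additional error introductions. Fix $\eps \in (0,1)$ and an arbitrary $\CF$ node $u$ born at some time $t_u$. I would show that, for $p_0$ and $k_0$ chosen sufficiently large in terms of $\eps$, the subtree $\tree^u_t$ becomes entirely $\PF$ almost surely. Since almost surely only countably many $\CF$ nodes are ever generated, applying this to each of them yields the desired elimination of error effects.

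The first step is to reduce to analyzing only the time steps in which a new child is attached inside $\tree^u_t$. Any step that grows the tree outside $\tree^u_t$ leaves $\tree^u_t$ untouched, because the checking procedure triggered by a new leaf only inspects its own ancestors and nodes outside $\tree^u_t$ are not descendants of $u$. On this sub-sequence of relevant time steps, and by the form of the preferential-attachment rule, the process inside $\tree^u_t$ evolves exactly like a CKP rooted at the single $\CF$ node $u$ with parameters $(\eps,p,k)$; any almost-sure claim is preserved by this random time change.

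The second step is to define a nonnegative potential $\Phi_t = \sum_{C \in \CPT(\tree^u_t)} \phi(|C|)$, with $\phi$ taken from the simple-case proof (possibly rescaled to absorb $\eps$). I would analyze $\E[\Phi_{t+1}-\Phi_t \mid \mathcal{F}_t]$ by decomposing over the three kinds of event at a relevant step: a new $\CT$ child is added and no check triggers a cut, so some $\PT$ component grows by one; a check triggers and finds a $\PF$ or $\CF$ ancestor within depth $k$, so the containing $\PT$ component shrinks and may split; or a new $\CF$ child is added with probability $\eps$, which extends its $\PT$ component by one and plants a latent error source for future checks to find. The first two cases are handled essentially verbatim by the simple-case analysis, which provides a strictly negative drift when $p$ and $k$ are large. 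The third case is new and must be controlled by noting that each newly planted $\CF$ seed behaves like a smaller copy of $u$, so its expected future impact can be bounded by applying the simple-case result recursively.

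The main obstacle is to show that the negative drift from the cut events dominates the extra positive contribution from $\CF$ introductions. Informally, cuts contribute a drift of order $-\Omega(p)\,\Phi_t$ while new $\CF$ seeds contribute at most $+O(\eps)\,\Phi_t$, since seeds appear in proportion to the $\PT$-size and each has bounded expected future impact whenever $p,k$ are large. Choosing $p_0$ and $k_0$ large enough in terms of $\eps$ so that the first term dominates yields $\E[\Phi_{t+1}-\Phi_t \mid \mathcal{F}_t] \leq -c\,\Phi_t$ for some $c>0$. Standard convergence for nonnegative supermartingales then forces $\Phi_t \to 0$ almost surely, and since $\Phi_t=0$ forces every node of $\tree^u_t$ to be $\PF$, this completes the argument.
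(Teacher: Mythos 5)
Your strategy — reduce to a fixed $\CF$ node $u$, define a supermartingale potential, and show that cut events produce a dominant negative drift — is indeed the one the paper uses, and your reduction to the subtree $\tree^u_t$ via a random time change is sound. But the proposal contains a genuine gap in the definition and analysis of the potential. Writing the potential as $\Phi_t = \sum_{C\in\CPT(\tree^u_t)}\phi(|C|)$, a sum over $\PT$ components, and asserting that "the first two cases are handled essentially verbatim by the simple-case analysis," does not work once $\eps>0$. In the general model a $\PT$ component $C$ may contain $\CF$ nodes in its interior, and a check launched from a new leaf $v$ inside $C$ cuts at the \emph{nearest} $\CF$ or $\PF$ ancestor of $v$ within depth $k$. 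That cut point may be a $\CF$ node sitting at some large depth $j$ inside $C$, not the root of $C$. After such a cut, everything in $C$ above depth $j$ is untouched, so the exponential potential of that part — whose depths are all measured from $C$'s root — does not shrink at all. The simple-case inequality $\sum_{C'\subset C}\ExponentialPotential(C') < \tfrac12\ExponentialPotential(C)$ relies crucially on every successful check hitting the root of the component; once $\CF$ nodes live in the interior, your $\Phi_t$ is simply not a supermartingale.

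Your remark that "each newly planted $\CF$ seed behaves like a smaller copy of $u$" is exactly the right intuition, but it has to be turned into a concrete repartitioning of the $\PT$ nodes rather than a recursive invocation of the simple-case result. The paper introduces \emph{minimal $\False$ nodes}: $\PT$ nodes that are either $\CF$, or $\CT$ with a $\PF$ parent. Each such node $u'$ is assigned the tree $\widetilde{\tree}^{u'}_t$ of its $\PT$ descendants whose only $\CF$ ancestor (at or below $u'$) is $u'$ itself, and the potential is $\sum_{u'\in\mathcal{MF}}\ExponentialPotential(\widetilde{\tree}^{u'}_t)$, with depths measured from $u'$ in each term. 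Within $\widetilde{\tree}^{u'}_t$ the interior is entirely $\CT$, so any check that reaches far enough hits $u'$ and the factor-of-two contraction returns. A new $\CF$ seed becomes the root of a fresh singleton $\widetilde{\tree}$ of potential $1$, and this extra $\eps(1-p)$-weighted increment is absorbed by the strengthened drift condition $(1-\eps)\max\left(-\tfrac12(2k-1)p+3,\ -\tfrac p2 +3(1-p)\right)+2\eps(1-p)<0$. Without this repartitioning the argument does not close.
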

	\begin{restatable}[Error effect survival in the general model]{theorem}{generalerrorsurvive}
		For every $\eps \in (0,1)$, there exists $p_0 \in (0,1)$, such that for any $k \in \N$ and $p \leq p_0$, the error effects in the $(\eps,p,k)$-CKP survives.
		\label{thm:generalerrorsurvival}
	\end{restatable}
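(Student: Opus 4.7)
The plan is to prove survival by focusing on the sub-tree rooted at the first $\CF$ node and adapting the potential-function argument behind Theorem~\ref{thm:simpleerrorsurvival} to treat the general CKP as a small perturbation of the simple one. Since $\eps > 0$, at time $1$ the newly attached child $u$ of the root is $\CF$ with probability $\eps$; I would condition on this event and aim to show that, for $p_0 = p_0(\eps)$ small enough and any $k \in \N$, the sub-tree process $\tree_t^u$ contains at least one $\PT$ node for all $t$ with positive probability. This implies error-effect survival in the sense of Definition~\ref{def:CKPprops}.

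Inside $\tree_t^u$ the node $u$ plays the role of the $\CF$ root of a simple CKP, and the local dynamics are almost identical to those of a $(p,k)$-simple CKP: new nodes attach by preferential attachment, and each check relabels as $\PF$ the path from the new leaf up to and including the first $\CF$ or $\PF$ ancestor. There are two new features compared with the simple case: (i) the attachment events that actually feed $\tree_t^u$ are thinned relative to the global tree, but since growth inside $\tree_t^u$ is still driven by the same preferential-attachment rule this only amounts to a time rescaling and does not flip the sign of the drift; and (ii) each newly attached node in $\tree_t^u$ is independently $\CF$ with probability $\eps$, so future checks may discover these internally-born $\CF$ nodes within their $k$-neighborhoods and $\PF$-mark additional paths. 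I would take a potential $\Phi$ on the $\PT$-components of $\tree_t^u$ of the type $\ExponentialPotential$ or $\LeavesComponentsPotential$ used for Theorem~\ref{thm:simpleerrorsurvival}, and split the one-step drift $\E[\Phi(\tree_{t+1}^u) - \Phi(\tree_t^u) \mid \tree_t^u]$ into three contributions: a positive term from new attachments, a loss term from simple-CKP-style checks that reach $u$ (already handled by the simple-model analysis), and a new loss term from checks that find one of the internally-born $\CF$ nodes.

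The main work is to bound the third contribution uniformly in $k$. A single check can mark up to $k+1$ nodes $\PF$, and the probability that at least one of the $k$ examined ancestors was independently born $\CF$ is $1 - (1-\eps)^k$, which tends to $1$ as $k$ grows; a crude bound on this extra loss therefore worsens with $k$ and would force $p_0$ to depend on $k$. The fix I would use is to exploit locality: when a check finds a $\CF$ ancestor at depth $j \le k$, it only $\PF$-marks the $j+1$ nodes on a single specific path, and the decrease in $\Phi$ from this event can be bounded using only the geometry of that path. With a carefully chosen $\Phi$ that is additive across $\PT$-components and decays sufficiently with depth, the expected per-check decrease in $\Phi$ becomes $O(p)$ times a constant $c(\eps)$ depending on $\eps$ but not on $k$, so taking $p_0 < 1/c(\eps)$ (up to smaller adjustments to also absorb the simple-CKP loss term) makes the overall drift of $\Phi$ strictly positive.

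Once a positive drift is established, a standard non-extinction argument for submartingales (in the spirit of those used to conclude Theorem~\ref{thm:simpleerrorsurvival}) gives that $\Phi(\tree_t^u)$ stays bounded away from $0$ with positive probability, and hence $\tree_t^u$ has a $\PT$ node for all $t$. The main obstacle is precisely the uniform-in-$k$ step: constructing a potential whose per-step drift from ``extra'' $\CF$-induced checks does not deteriorate as $k$ grows, despite each such check potentially marking more nodes $\PF$ when $k$ is large.
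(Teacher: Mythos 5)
Your high-level plan — condition on a first $\CF$ node $u$, follow the sub-tree $\tree_t^u$, show a leaves-and-components type potential on its $\PT$-components has positive drift, and conclude via the bounded-differences submartingale argument — matches the paper's strategy, and the concluding survival step would go through once the drift bound is in hand. However, you misidentify where the new difficulty actually lies, and consequently never find the modification to the potential that the argument requires. The obstacle you fear — that the loss per check grows with $k$, since an internally-born $\CF$ node is hit with probability about $1-(1-\eps)^k$ and a check can then $\PF$-mark up to $k+1$ nodes — is simply not present for $\LeavesComponentsPotential$: $\PF$-marking a path changes this potential by at most $2$ regardless of the path's length, because each interior node that is cut spawns a new component and contributes $+1$, so only the removed leaf and the removed component-root can decrease the total. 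Hence the per-step loss is already $O(p)$ uniformly in $k$, with no depth-decaying $\Phi$ needed; indeed the paper's final drift bound has no $k$ in it at all.

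What you overlook is that the \emph{gain} term degrades. In the simple model the positive drift rests on $\PP(p(v)\text{ is a leaf}) < 1/2$, which relies on leaves having $\deg_\PT = 0$. In the general model this fails inside $\widetilde{\tree}^u_t := \tree^u_t \setminus \CF(u,t)$: a ``leaf'' of a $\PT$-component there can carry arbitrarily many $\CF$ children, each of which is observably $\PT$ and counts toward that leaf's preferential-attachment weight. A single leaf with many $\CF$ children can therefore absorb almost all new attachments, pushing $\PP(p(v)\text{ is a leaf})$ toward $1$ and collapsing the gain term of the unmodified potential to nearly $0$. The paper's fix (Definition~\ref{def:adapted_comps_leaves_potential}) is to have each leaf $v$ contribute $\frac{1}{\deg_\CF(v)+1}$ rather than $1$: promoting a high-$\CF$-degree leaf to an interior node then costs only $\frac{1}{\deg_\CF(v)+1}$ of potential while the new leaf adds $1$, which precisely compensates for that leaf's inflated attachment probability. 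This re-weighting is what yields the $k$-independent one-step bound $\E[\Delta_t | \widetilde{\tree}^u_t] \ge \frac{1-p}{2} - 3(1-\eps)p$ of Lemma~\ref{lem:adaptedleavespotentialevol}, positive for $p$ small. Without it, the positive contribution you rely on is uncontrolled and the argument does not close.
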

	Like in the simple model, we see the critical role $p$ plays. At least for low levels of error, one can always invest appropriate effort by performing enough checks and guaranteeing low-term correctness of the information state in the tree. On the other hand, if $p$ is small enough, the process could get overwhelmed by errors.
	
	Using the same ideas used to generalize Theorem \ref{thm:simpleerrorsurvival} into Theorem \ref{thm:generalerrorsurvival}, an analog of Theorem \ref{thm:reliableprocess} could also be derived for the general model. However, due to the dependencies introduced by new $\CF$ nodes, the obtained result is somewhat complicated and not immediately interpretable. Instead, we prove another desirable property that emerges when the error effect is eliminated. Namely, we show that, not only do false sub-trees get eliminated, but that the overall proportion of $\True$ nodes in the process remains large. Thus, we show that the process is highly reliable.
	
	\begin{theorem}\label{thm:highreliableprocess}
		Under the same conditions of Theorem \ref{thm:error_effect_elim_general}, if $p \geq p_0$, and $k \geq k_0$, then the $(\eps,p,k)$-CKP is $O(\eps(1-p))$-highly reliable.
	\end{theorem}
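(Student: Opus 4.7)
The plan is to decompose $N_t^F = \sum_{u \in \mathcal{U}_t} S_u(t)$, where $\mathcal{U}_t$ is the set of \emph{surviving primary} $\CF$ nodes in $\tree_t$ --- $\CF$ nodes whose parent at the time of birth is $\True$ and which are not relabelled $\PF$ by a check at that same step --- and $S_u(t)$ counts the $\PT$ descendants of $u$ (including $u$ itself, when $u$ is still $\PT$) at time $t$. This is a valid partition of the $\False$ $\PT$ nodes: every $\False$ node has a unique highest $\CF$ ancestor, which must be primary (otherwise its parent is already $\False$), and a non-surviving primary $\CF$ has $S_u \equiv 0$ because, once marked $\PF$ at birth, it is never selected as a parent and so has no descendants at all. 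It therefore suffices to bound $\E[|\mathcal{U}_t|]$ and $\E[S_u(t) \mid u \in \mathcal{U}_t]$ separately.

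Bounding the first quantity is immediate. At step $s$, the new node $v$ becomes a surviving primary $\CF$ only if it is labelled $\CF$ (probability $\eps$), its chosen parent is $\True$, and no check occurs at that step (probability $1-p$), since if a check does occur then $v_0 = v$ is already the first non-$\CT$ node on the checked path and $v$ is immediately relabelled $\PF$. Therefore the per-step probability of producing a surviving primary $\CF$ is at most $\eps(1-p)$, whence $\E[|\mathcal{U}_t|] \leq \eps(1-p)\, t$.

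The crux, and the main obstacle, is the uniform bound $\E[S_u(t) \mid u \in \mathcal{U}_t] \leq C = C(\eps,p,k)$, holding for all $t$, all birth times of $u$, and all possible ambient states of $\tree$ at the birth of $u$. For this I would revisit the potential argument underlying Theorem \ref{thm:error_effect_elim_general}, specifically the exponential-type potential $\AdaptedExponentialPotential$, and upgrade its qualitative extinction statement into a quantitative supermartingale inequality. The structural observation to exploit is that, conditional on the $(t{+}1)$-st step attaching its new node inside $F_u$ (an event of probability equal to the preferential-attachment weight of $F_u$ divided by $D_t$, where $D_t$ is the total weight of $\tree_t$), the effect on $F_u$ is exactly one step of a simple $(p,k)$-CKP rooted at $\CF$; in the regime $p \geq p_0, k \geq k_0$ this simple dynamics has strictly negative drift in the exponential potential, by an adaptation of the analysis behind Theorem \ref{thm:error_effect_elim_simple}. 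On all other steps $\Phi_u$ and $S_u$ are unchanged. Combining the two cases yields a time-inhomogeneous Lyapunov inequality
\begin{equation*}
\E\bigl[\Phi_u(t+1) - \Phi_u(t) \,\big|\, X_t\bigr] \;\leq\; \frac{-\alpha\, \Phi_u(t) + \beta}{D_t},
\end{equation*}
with positive $\alpha,\beta$ depending only on $\eps,p,k$. Writing $m_t := \E[\Phi_u(t)]$, the scalar recursion $m_{t+1} \leq m_t(1-\alpha/D_t) + \beta/D_t$ forces $m_t \leq \max(m_0,\beta/\alpha) = O(1)$ by induction, and, since a suitable exponential potential dominates $S_u$ up to absolute constants, this yields $\E[S_u(t)] \leq C$ as required.

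Combining the two bounds by linearity of expectation,
\begin{equation*}
\E[N_t^F] \;=\; \sum_{s=1}^{t} \Pr[\text{step } s \text{ yields a surviving primary } \CF] \cdot \E[S_{u_s}(t) \mid u_s \in \mathcal{U}_t] \;\leq\; C\,\eps(1-p)\, t,
\end{equation*}
so the expected proportion of $\False$ $\PT$ nodes in $\tree_t$ is $O(\eps(1-p))$, establishing the claimed $O(\eps(1-p))$-high reliability. The technical heart of the argument is to push the drift analysis from almost-sure extinction to a geometric-decay supermartingale bound uniform in the state, and to handle the time-scale mismatch between the subtree's slow effective clock (of rate $|F_u|/D_t$) and the linear growth of the ambient tree.
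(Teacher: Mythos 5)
Your decomposition by surviving primary $\CF$ sources is genuinely different from the paper's route. The paper never splits the count: it defines a single submartingale
\begin{equation*}
\Phi(\tree_t) \;=\; \frac{\eps(1-p)}{1-\eps}\,|T_t| \;-\; \sum_{u\in \mathcal{MF}(t)}\ExponentialPotential\bigl(\widetilde{\tree}^u_t\bigr),
\end{equation*}
shows the drift is nonnegative (strictly positive when the new parent is $\False$, exactly zero when the new parent is $\True$), and reads off the ratio bound $\eps(1-p)\E[|T_t|] \geq (1-\eps)\E[|F_t|]$ directly from $\E[\Phi(\tree_t)] \geq \Phi(\tree_0) > 0$. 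Your route decouples the count of sources from their footprints. The source count $\E[|\mathcal{U}_t|] \leq \eps(1-p)\,t$ is correct. The footprint bound is also sound, though your description is slightly off: the dynamics restricted to $F_u$ are \emph{not} those of a $(p,k)$-simple CKP, since new $\CF$ nodes arise inside $F_u$; the relevant supermartingale is the general-model exponential potential (the sum of $\ExponentialPotential(\widetilde{\tree}^w_t)$ over minimal $\False$ nodes $w$ below $u$), whose negative drift is exactly what the proof of Theorem \ref{thm:generalerrorelim} establishes. With that potential, $\Phi_u(\text{birth}) = 1$ and $S_u \leq \Phi_u$, so the supermartingale argument gives $\E[S_u(t) \mid u \in \mathcal{U}_t] \leq 1$, and no additive $\beta$ is needed in your Lyapunov inequality.

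The gap is in the last step. You bound $\E[N_t^F] \leq C\eps(1-p)\,t$ and then divide by $t$, concluding the proportion is $O(\eps(1-p))$. But the definition of $\delta$-high reliability, and what the paper proves, is a bound on the ratio of $\False$ $\PT$ nodes to $\True$ nodes, i.e.\ $\E[|F_t|]/\E[|T_t|] = O(\eps(1-p))$, not $\E[|F_t|]/t$. The tree also contains $\PF$ nodes, and nothing you have shown rules out $\E[|T_t|] = o(t)$ a priori. The two statements coincide up to constants only if $\E[|T_t|] = \Omega(t)$, which you never establish. That lower bound is true in this regime but requires a separate argument (for instance, that the preferential-attachment weight of $T_t$ dominates the $\False$ weight, using the very bound on $\E[\Phi_u]$ you derived), and this is precisely what the paper's combined potential supplies for free by tracking $|T_t|$ and the false-subtree potential \emph{jointly} in one submartingale. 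Without that extra step, your proof gives a weaker conclusion than the theorem claims.
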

	Note that by a given time $t$, we should expect to add $\eps(1-p)\cdot t$ $\CF$ nodes to the tree; the $(1-p)$ factor comes from the fact that a check is performed a new node is automatically labeled as $\PF$ and can thus be disregarded. Thus, $\eps(1-p)$ is the proportion of errors introduced to the tree without accounting for further propagation via attaching new descendants. With this in mind, one way to interpret Theorem \ref{thm:highreliableprocess} is as a statement about types of errors in the process; most errors are expected to come from very shallow sub-trees, and errors, when introduced, tend to be quickly rectified before spreading along the process.
	
	Let us note that there is no hope in improving Theorem \ref{thm:highreliableprocess} by more than a constant. To see this, for a given time $t > 0$, it's enough to consider all the $\CF$ nodes added after time $\frac{t}{2}$. It can be shown that with some constant probability, each such node will not spawn a descendent, and hence will survive up to time $t$. Thus, the expected proportion of $\CF$ nodes will be $\Omega(\eps(1-p))$. 
	\section{Proofs for the Simple Model}
	
	\subsection{Error Effect Elimination in the Simple Model}
	Our first aim is to prove that when $p$ is large, the error effects in the simple model are completely eliminated. For convenience, we restate the theorem.
	\simpleerrorelim*
	
	Towards the proof of Theorem \ref{thm:error_effect_elim_simple} we define our first potential function, the exponential potential.
	
	\begin{definition}[Exponential potential]\label{def:exponential_potential}
		Consider any of the CKP models at time $t \geq 0$. The \emph{root} $r$ of a PT component $C \in \CPT(t)$ is the oldest (earliest birth) node in $C$. The \emph{depth} $|v|$ of a node $v$ in $C$ is defined as the length (number of edges) of the shortest path between $v$ and $r$ in $C$. The \emph{degree} of $v$ in $C$ is defined as $d(v) = 1 + \deg_{\PT}(v)$. 
		
		Finally, we define the \emph{exponential potential} of $C$ as above by 
		$$
		\ExponentialPotential(C) = \sum_{v \in C} d(v) \cdot 2^{|v|},
		$$
		and the potential over the whole tree $\tree_t$ at time $t$ as the sum over all $C \in \CPT(t)$,
		$$
		\ExponentialPotential(\tree_t) = \sum_{C \in \CPT(t)}\ExponentialPotential(C) .
		$$
	\end{definition}
	
	The main component of the proof is the following lemma, asserting that the exponential potential decreases in expectation, i.e., the sequence of potentials at any time along the process is a super-martingale. 
	\begin{lemma}
		\label{lem:exponential_potential_contracts}
		For all $p \geq \frac{6}{7}$, $k \geq 4$, and tree $\tree_t$ representing the knowledge state of the $(p,k)$-simple CKP, the following holds. If $\ExponentialPotential(\tree_t)>0$, then
		$$
		\E\left[\ExponentialPotential(\tree_{t+1}) \ | \ \tree_t\right] < \ExponentialPotential(\tree_t).
		$$
	\end{lemma}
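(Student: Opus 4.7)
The strategy is a per-component bound. Since one step of the process only modifies the PT component $C$ containing the selected parent $u$, leaving the depths and degrees of all other components untouched,
\begin{equation*}
\E\!\left[\ExponentialPotential(\tree_{t+1}) - \ExponentialPotential(\tree_t) \,\middle|\, \tree_t\right]
= \frac{1}{N_t}\sum_{C \in \CPT(t)} W(C),
\quad
W(C) := \sum_{u \in C} d(u)\,\E\!\left[\Delta\ExponentialPotential(C) \,\middle|\, u \text{ selected}\right],
\end{equation*}
with $N_t = \sum_u d(u)$. It therefore suffices to show $W(C) \le 0$ for every component $C$, strictly for at least one when $\ExponentialPotential(\tree_t) > 0$.

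Fix $C$ with root $r_C$ and $u \in C$ at depth $|u|$; let $\pi_0 = r_C, \pi_1, \ldots, \pi_{|u|} = u$ be the ancestor path, $L_i$ the set of $\CT$ children of $\pi_i$ off this path (note $L_{|u|}$ picks up \emph{all} $\CT$ children of $u$, as there is no further path node to exclude), and $\Psi_\ell := \sum_{w \in \text{subtree}_C(\ell)} d(w)\, 2^{|w|_C}$. A short analysis of the check procedure shows it triggers iff the length-$k$ path from the new leaf reaches a non-$\CT$ node, which (in the generic case that $r_C$'s parent is $\PF$) amounts to $|u| \le k-2$; the boundary case where $C$ contains the $\CF$ tree root admits triggers up to $|u| \le k-1$ and only strengthens the negative drift, so I would handle only the generic case. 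No-trigger outcomes yield $\Delta\ExponentialPotential(C) = 3 \cdot 2^{|u|}$ (from the new leaf at depth $|u|+1$ plus the unit increment of $d(u)$). The triggered outcome relabels $\pi_0, \ldots, \pi_{|u|}$ as $\PF$ and re-roots each off-path subtree $\ell \in L_i$ at depth $0$, scaling its $\ExponentialPotential$ weight by $2^{-(i+1)}$, giving
\begin{equation*}
\Delta_{\text{trig}}(u) = -\sum_{i=0}^{|u|} d(\pi_i)\, 2^i - \sum_{i=0}^{|u|}\sum_{\ell \in L_i} \Psi_\ell \bigl(1 - 2^{-(i+1)}\bigr).
\end{equation*}
Assembling the per-$u$ expectation gives
\begin{equation*}
W(C) = 3\, \ExponentialPotential(C) - p \sum_{u:\, |u| \le k-2} d(u)\, \bigl[3 \cdot 2^{|u|} - \Delta_{\text{trig}}(u)\bigr],
\end{equation*}
so the task reduces to lower-bounding the right-hand sum by $3\, \ExponentialPotential(C)/p$.

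The technical core is a double-counting argument. Swapping the order of summation, the path contribution re-expresses as $\sum_{v \in C} d(v)\, 2^{|v|} \cdot D_v$ with $D_v := \sum_{u \in T_v,\, |u| \le k-2} d(u)$, and the subtree contribution becomes an analogous sum over off-path pairs $(\ell, u)$. Using $d(\pi_i) \ge 2$ for $i < |u|$ (every non-terminal path node has a path-child) and $1 - 2^{-(i+1)} \ge 1/2$, each node $w \in C$ contributes its weight $d(w)\, 2^{|w|_C}$ to the aggregate with a multiplicity bounded below by a constant which, for $k \ge 4$, exceeds $7/2 = 3/p$ at $p = 6/7$. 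Shallow $w$ contribute via the path term (appearing as $\pi_i$ for descendant $u$'s), while deep $w$ contribute via the subtree term when activated by a shallow ancestor $u$ --- here the observation that $L_{|u|}$ absorbs \emph{all} of $u$'s $\CT$ children is essential, since in deep-but-narrow components (e.g., a long path) the path term alone is grossly insufficient. The main obstacle is this final counting: simultaneously tracking shallow nodes (through path contributions) and deep nodes (through subtree re-rooting), and verifying that the combined multiplicity meets $7/2$; the thresholds $p = 6/7$ and $k = 4$ appear essentially tight. Strict negativity for $\ExponentialPotential(\tree_t) > 0$ is directly verifiable on the smallest nonempty component $|C| = 1$, where the formula reduces to $W(C) = 3 - 4p$, negative whenever $p > 3/4$.
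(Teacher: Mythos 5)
Your approach is genuinely different from the paper's, and it is sound in outline, but the key quantitative step is left unverified and the bounds you propose to use for it are too weak as stated.

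The paper proves the lemma via a much coarser observation: it splits on whether $|C| \geq k$ or $|C| < k$, and in either case uses the fact that when a successful check occurs, \emph{every surviving node's depth drops by at least one}, so the potential of whatever remains is at most $\tfrac{1}{2}\ExponentialPotential(C)$. It then lower-bounds the probability of a successful check (at least $p(2k-1)/D$ when $|C| \geq k$, at least $p$ when $|C| < k$) and balances this against the expected increase $\tfrac{3}{D}\ExponentialPotential(C)$ from a no-check step. This avoids ever writing down the exact triggered change. Your approach instead computes $\Delta_{\text{trig}}(u)$ exactly and reduces the inequality to a per-node multiplicity bound via double-counting over pairs $(u, w)$; this is more delicate but, if carried through, actually yields a slightly sharper threshold on $p$. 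The setup of the reduction is correct (in the generic case the check triggers iff $|u| \leq k-2$, $\Delta_{\text{trig}}$ is as you write it, and the boundary case where $C$ contains the original $\CF$ root only helps).

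The gap is the final counting, which you flag but do not execute. Writing $\text{mult}(w) = 3\cdot\mathbb{1}[|w|\leq k-2] + \sum_{u:|u|\leq k-2} d(u)\,c_u(w)$ where $c_u(w)$ is $1$ if $w$ is on the $r_C$-to-$u$ path and $1 - 2^{-(i+1)}$ if $w$ sits in the off-path subtree hanging at depth $i+1$, the shallow case $|w| \leq k-2$ gives $\text{mult}(w) \geq 3 + d(w)\cdot 1 \geq 4$, which is fine. But for deep $w$ (depth $> k-2$) the only guaranteed contributors are the ancestors $w_0, \dots, w_{k-2}$, each with $d \geq 2$, giving $\text{mult}(w) \geq \sum_{j=0}^{k-2} 2\bigl(1 - 2^{-(j+1)}\bigr)$. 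Your proposed replacement $1 - 2^{-(i+1)} \geq 1/2$ collapses this to $k - 1$, which is $3$ at $k=4$ and \emph{fails} to reach $7/2$. One must instead evaluate the geometric sum exactly: $\sum_{j=0}^{k-2} 2(1 - 2^{-(j+1)}) = 2k - 4 + 2^{2-k}$, which equals $17/4$ at $k=4$ and is increasing in $k$, so the threshold is met. So the approach goes through, but the proof as sketched does not: you must keep the geometric decay rather than flatten it to $1/2$, and you should verify the minimum over shallow and deep $w$ separately rather than asserting a single constant.
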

	We first complete the proof of Theorem \ref{thm:error_effect_elim_simple} given the lemma. 
	\begin{proof}[Proof of Theorem \ref{thm:error_effect_elim_simple}]
		Let $\{\tree_t\}_{t=1}^{\infty}$ be the sequence of knowledge state trees in the CKP. From Lemma \ref{lem:exponential_potential_contracts} we deduce that the sequence $\{\ExponentialPotential(\tree_t)\}_{t=1}^{\infty}$ is a positive super-martingale. Thus, by the martingale convergence theorem, \cite[Theorem 4.2.11]{durrett2016probability}, there exists a random variable $\ExponentialPotential(\tree_\infty)$, such that $\ExponentialPotential(\tree_t) \xrightarrow{t\to \infty}\ExponentialPotential(\tree_\infty)$ almost surely. We claim that $\PP\left(\ExponentialPotential(\tree_\infty) = 0\right) = 1$. Indeed, if $\ExponentialPotential(\tree_t) \neq 0$, then 
		$$|\ExponentialPotential(\tree_t) - \ExponentialPotential(\tree_{t+1})| \geq 1,$$
		which implies that $0$ is the only possible limit.
		Also observe that $|\tree_t|\leq \ExponentialPotential(\tree_t)$. Combining everything, we see, that almost surely,
		$$|\tree_t|\leq \ExponentialPotential(\tree_t) \xrightarrow{t\to \infty}0,$$
		which is the claim.
	\end{proof}
	
	It remains to prove Lemma \ref{lem:exponential_potential_contracts}, about the contraction (in expectation) of the exponential potential.
	\begin{proof}[Proof of Lemma \ref{lem:exponential_potential_contracts}]
		
		Let $C \in \CPT(t)$ be the $\PT$ component to which the new node in the process connects, and let $\sum\limits_{\substack{C'\in \CPT(t+1)\\C' \subset C}}C'$ denote the result of this connection. 
		Note that all components of $\tree_t$ other than $C$ remain unchanged in the transition to $\tree_{t+1}$, and so it suffices to analyze $\E\left[\sum\limits_{\substack{C'\in \CPT(t+1)\\C' \subset C}}\ExponentialPotential(C') - \ExponentialPotential(C)\right]$.
		Finally, define $D = \sum_{v \in C} d(v)$ and for each $v \in C$ set $q(v) = d(v) / D$. 
		
		\paragraph{Case I: $|C| \geq k$.} Suppose first that $C$ contains at least $k$ nodes. In this case, the probability that a node of distance less than $k$ from the root $r$ of $C$ is selected is at least $(2k-1)/D$, since there are at least $k$ such nodes, and at least $k-1$ edges in the connected component containing $r$ and all of these low-depth nodes. 
		
		Denote by $\ECR$ the event that the newly added node $u$ is of distance at most $k$ from $r$, and furthermore, $u$ is checked. Observe that $\PP(\ECR) \geq p \cdot (2k-1)/D$. When this event occurs, the root is checked and found to be $\CF$, and subsequently, all nodes on the path between (and including) $r$ and $u$ are marked $\PF$. The depth of all other nodes decreases by at least one. Thus, in this case, we always have
		\begin{equation*}
		\sum\limits_{\substack{C'\in \CPT(t+1)\\C' \subset C}}\ExponentialPotential(C') < \frac12 \cdot \ExponentialPotential(C)\ .
		\end{equation*}
		When $\ECR$ does not hold, a new node $u$ is added with parent $v$. The total contribution to the potential is $2^{|v|+1} + 2^{|v|}$, resulting from the depth of $u$ and the added degree of $v$. In this case, the expected change in potential satisfies
		\begin{align*}
		\E\left[\sum\limits_{\substack{C'\in \CPT(t+1)\\C' \subset C}}\ExponentialPotential(C') - \ExponentialPotential(C) \ | \ \tree_t \land \neg{\ECR} \right] &\leq \sum_{v \in C} q(v) \cdot (2^{|v|+1} + 2^{|v|}) \\
		&= 3 \cdot \sum_{v \in C} \frac{d(v)}{D} \cdot 2^{|v|} = \frac{3}{D} \cdot \ExponentialPotential(C).
		\end{align*}
		Combining the above two inequalities,
		\begin{align}\label{eq:exppotsuperfirst}
		\E\left[ \sum\limits_{\substack{C'\in \CPT(t+1)\\C' \subset C}}\ExponentialPotential(C') - \ExponentialPotential(C) \ | \ \tree_t \right]   
		&< p \cdot \frac{2k-1}{D} \cdot \left(\frac{1}{2} - 1 \right) \ExponentialPotential(C) + \left( 1 - p \cdot \frac{2k-1}{D} \right) \frac{3}{D} \cdot \ExponentialPotential(C) \nonumber\\
		&< \frac{\ExponentialPotential(C)}{D} \cdot \left( -\frac{1}{2} \cdot (2k-1)p + 3 \right).
		\end{align}
		The first multiplicative term in the RHS is non-negative. 
		The second term is non-positive if $(2k-1)p \geq 6$. Therefore, the conditions $p \geq \frac{6}{7}$ and $k \geq 4$ guarantee the non-positivity of the latter.
		
		\paragraph{Case II: $|C| < k$.} In the case where $C$ contains less than $k$ nodes, we know that all nodes have depth less than $k$, and so checking (if happens) will always reach the root and mark it $\PF$.
		Thus, similarly to above, with probability $p$ the depth of all surviving nodes in $C$ decreases by at least one, so
		$\ExponentialPotential(C') < \ExponentialPotential(C) / 2$. 
		If checking does not happen, the potential increases in expectation by an additive factor of at most $\frac{3}{D}\cdot \ExponentialPotential(C)$ similarly to Case I. Thus,
		\begin{equation} \label{eq:exppotsmallcomps}
		\E\left[ \sum\limits_{\substack{C'\in \CPT(t+1)\\C' \subset C}}\ExponentialPotential(C') - \ExponentialPotential(C) \ | \ \tree_t \right] < 
		\ExponentialPotential(C) \left[-\frac{p}{2} + \frac{3}{D}(1-p)\right] \leq
		\frac{\ExponentialPotential(C)}{D}\left[-\frac{p}{2} + 3(1-p)\right],
		\end{equation}
		where the RHS is non-positive for $p \geq \frac{6}{7}$.
		
	\end{proof}

	\subsection{Survival of Error Effects in the Simple Model}
	Next, we define another potential, which adds up the number of leaves of \PT\ components and the number of components.
	
	\begin{definition}[Leaves and components potential]\label{def:comps_leaves_potential}
		Consider any of the CKP models at time $t \geq 0$. A node $v$ in a component $C \in \CPT(t)$ is considered a leaf if it does not have any descendent in $C$ (we note that a leaf is allowed to have descendants not in $C$; in particular it might have proclaimed false children). 
		
		For a given component $C \in \CPT(t)$, the \emph{leaves and components potential} restricted to $C$ is $1$ if $|C|=1$, and otherwise it is one plus the number of leaves in $C$. Finally, the leaves and components potential $\LeavesComponentsPotential(\tree_t)$ is the sum of potentials of all $C \in \CPT(t)$.
	\end{definition}
	
	The leaves and components potential is used to show that for a small enough checking probability $p$, the error effect can survive forever with positive probability. This is the content of Theorem \ref{thm:simpleerrorsurvival}, which we now restate. 
	\simpleerrorsurvive*
	The main ingredient of the proof is showing that the potential grows to infinity in expectation and has bounded differences. This is formalized in the following claim.
	\begin{lemma} \label{lem:leavespotentialevol}
		Consider the $(p,k)$-Simple CKP process, let $t \in \N$, and suppose that $\LeavesComponentsPotential(\tree_t) > 0$. Denote by $\Delta_t = \LeavesComponentsPotential(\tree_{t+1}) - \LeavesComponentsPotential(\tree_t)$ the change in the leaves and components potential at time $t$. Then $\Delta_t \geq -2$ always holds and when $\LeavesComponentsPotential(\tree_t) > 0$ we have
		$$\E[\Delta_t | \tree_t] > \frac{1}{2}-2p.$$
	\end{lemma}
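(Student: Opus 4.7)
The plan is to split the one-step change $\Delta_t$ into two contributions: the change $\Delta_t^{\mathrm{add}}$ from attaching the new leaf $v$ to its sampled parent $u$ (ignoring the check), and the extra change $\Delta_t^{\mathrm{chk}}$ from actually performing the check, so that $\Delta_t=\Delta_t^{\mathrm{add}}+\mathbf{1}_{\text{check}}\cdot\Delta_t^{\mathrm{chk}}$. I would reduce the expectation bound to (i) $\E[\Delta_t^{\mathrm{add}}\mid\tree_t]>\tfrac{1}{2}$ and (ii) the pointwise lower bound $\Delta_t^{\mathrm{chk}}\ge -2$; linearity then yields $\E[\Delta_t\mid\tree_t]>\tfrac{1}{2}-2p$. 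The downward half of the bounded-difference claim, $\Delta_t\ge -2$, is immediate from $\Delta_t^{\mathrm{add}}\ge 0$ together with (ii).

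For (i), a short case analysis shows $\Delta_t^{\mathrm{add}}\in\{0,1\}$: it equals $0$ precisely when the component $C$ containing $u$ has $|C|\ge 2$ and $u$ is a leaf in $C$ (then $u$ ceases to be a leaf while $v$ becomes one, leaving the leaf count fixed), and equals $1$ in every other case. Let $N$ be the number of $\PT$ nodes and $c=|\CPT(\tree_t)|$ the number of $\PT$ components; since $u$ is drawn with weight $1+\deg_{\PT}(u)$, leaves have weight $1$, and the total selection weight is $D=2N-c$. The number of leaves in non-singleton components is at most $N-c$ (each non-singleton component has at least one non-leaf, namely its root), so the bad event has probability at most $(N-c)/(2N-c)<\tfrac{1}{2}$, yielding $\E[\Delta_t^{\mathrm{add}}\mid\tree_t]>\tfrac{1}{2}$.

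For (ii), if the checked path is all-$\CT$ then $\Delta_t^{\mathrm{chk}}=0$. Otherwise, the top-down structure of the simple CKP (only the global root begins as $\CF$, and $\PF$-labels propagate strictly outward from it) forces the first non-$\CT$ ancestor of $v$ to be at or just above the root of $C$, so the check relabels the full path from $v$ down to the root of $C$ as $\PF$. The remaining $\PT$ nodes of $C$ decompose into the off-path subtrees $T_{i,s}$ rooted at the non-path children $w_{i,s}$ of the path nodes $v_1,\ldots,v_d$. The key observation is that the only node of the removed path that can be a leaf of $C$ is $v_1$: any deeper $v_i$ (for $i\ge 2$) has the path-child $v_{i-1}\in C$ and is thus not a leaf. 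Consequently $\sum_{i,s}L(T_{i,s})=L(C)-\mathbf{1}[m_1=0]$, where $m_1$ is the number of off-path children of $v_1$; tallying leaves and non-singleton components across the transition then collapses into the clean identity $\Delta_t^{\mathrm{chk}}=\sum_{(i,s)}\mathbf{1}[|T_{i,s}|\ge 2]-2$, which is manifestly $\ge -2$.

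The main obstacle is the bookkeeping in (ii)---correctly identifying which nodes of $C$ are removed by the check, how many new $\PT$ components emerge, and how their leaves relate to $L(C)$. Once the ``only $v_1$ can be a path leaf'' observation is in hand, the bookkeeping collapses to the one-line identity above, and both the lower bound on $\Delta_t^{\mathrm{chk}}$ and, combined with (i), the expectation estimate $\E[\Delta_t\mid\tree_t]>\tfrac{1}{2}-2p$ follow cleanly.
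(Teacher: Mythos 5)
Your decomposition $\Delta_t=\Delta_t^{\mathrm{add}}+\mathbf{1}_{\mathrm{check}}\cdot\Delta_t^{\mathrm{chk}}$ and the estimates $\E[\Delta_t^{\mathrm{add}}\mid\tree_t]>\tfrac12$ and $\Delta_t^{\mathrm{chk}}\ge -2$ give the same content as the paper's case analysis, packaged slightly more cleanly (the paper conditions on $|C|=1$ versus $|C|>1$ and bounds $\PP(u\text{ is a leaf})$ within a fixed $C$; you instead bound the leaf-selection probability globally over the $\PT$ forest). Both arrive at $\E[\Delta_t\mid\tree_t]>\tfrac12-2p$ and the lower bound $\Delta_t\ge-2$ by the same mechanism, so the expectation bound and the ``downward half'' are fine.

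However, you only prove $\Delta_t\ge -2$ and explicitly defer the ``upward half'' $\Delta_t\le 2$ of the claimed $|\Delta_t|\le 2$. Notice that your own identity $\Delta_t^{\mathrm{chk}}=\sum_{(i,s)}\mathbf{1}[\,|T_{i,s}|\ge 2\,]-2$ already refutes that upward half: when the removed path $v_1,\dots,v_d$ is long and many $v_i$'s have non-singleton off-path subtrees hanging off them, the number $M$ of non-singleton $T_{i,s}$'s can be arbitrarily large, giving $\Delta_t=\Delta_t^{\mathrm{add}}+M-2$ as large as you like. For a concrete instance, take $C$ to be a path $c_0,\dots,c_d$ with each $c_i$ carrying one extra child $w_i$ that in turn has two leaf-children; after a successful deep check from a new child of $c_d$, the potential jumps from $1+L(C)$ to $3(d+1)$, i.e., $\Delta_t=d$. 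So the statement $|\Delta_t|\le 2$ in the lemma is false; the paper's own proof, like yours, only ever bounds the negative deviations and never addresses the positive side. The correct assertion is the one-sided bound $\Delta_t\ge -2$. This matters downstream, because the paper then feeds $|\Delta_t|\le 2$ into the Azuma step in Lemma~\ref{lem:positivesubmartingale}; repairing this requires either a one-sided concentration argument for submartingales with increments bounded below, or a variance control on $\Delta_t$, neither of which is automatic. You should flag this rather than silently omit the upward half.
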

	Note that the process stops if the potential reaches zero: if $\LeavesComponentsPotential(t) = 0$ then $\LeavesComponentsPotential(t') = 0$ for any $t' > t$.
	\begin{proof}[Proof of Lemma \ref{lem:leavespotentialevol}]
		Let $v$ denote the node added to the process at time $t+1$. Recall that its parent $u$ must be a proclaimed true node, and let $C \in \CPT(t)$ be the PT component containing $u$ at time $t$. 
		Note that attaching $v$ to $C$ does not modify any of the PT components $C' \neq C$. This follows from the next observation.
		\begin{observation}
			Let $v$ be a proclaimed false node in the simple CKP. Then all ancestors of $v$ in the process are also proclaimed false.
		\end{observation}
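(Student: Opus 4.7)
The plan is to proceed by induction on the process time $t$. For the base case $t=0$, the tree consists solely of the $\CF$ root, which carries label $\CF$ rather than $\PF$, so there are no $\PF$ nodes and the observation holds vacuously.

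For the inductive step I would track how $\PF$ labels come into existence. In the simple CKP the only mechanism that ever turns a node's label into $\PF$ is the error-correction step: when a new leaf $v_0 = v$ is added and a check is performed along the path $v_0, v_1, \ldots, v_k$, one locates the smallest index $j$ such that $v_j$ is not labelled $\CT$, and then relabels each of $v_0, v_1, \ldots, v_j$ as $\PF$. So at time $t+1$ suppose a node $w = v_i$ (with $0 \le i \le j$) is newly marked $\PF$; I need to show that every ancestor of $w$ in $\tree_{t+1}$ is also $\PF$.

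The ancestors of $w$ split naturally into two groups: the ancestors $v_{i+1}, \ldots, v_j$ that lie on the checked path, and the strict ancestors of $v_j$ above it in the tree. The first group is immediate since all of $v_{i+1}, \ldots, v_j$ are themselves relabelled $\PF$ in this very step. For the second group, I would exploit the defining feature of the simple CKP, namely $\eps = 0$: since no non-root node is ever initialised as $\CF$, the root is the only node that can carry a $\CF$ label at any time. Consequently, because $v_j$ is not $\CT$, either $v_j$ is the $\CF$ root (in which case it has no ancestors and we are done) or $v_j$ was already $\PF$ at time $t$; in the latter case the inductive hypothesis guarantees that every ancestor of $v_j$ was already $\PF$ at time $t$, and $\PF$ labels are never erased along the process, so they remain $\PF$ at time $t+1$.

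There is no substantial obstacle here; the only point requiring care is the remark that in the simple CKP a $\CF$ label can only sit at the root, which is precisely what forces the upward propagation of $\PF$ status and prevents any ``gap'' of non-$\PF$ nodes above a $\PF$ node.
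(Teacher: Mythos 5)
Your induction is correct, and it makes rigorous exactly what the paper treats as self-evident (the paper states the observation without proof, appealing implicitly to the remark in Section~\ref{ssec:model} that in the simple CKP ``a node may be labeled as $\PF$ only after the same happens to its parent''). The key point you identify — that with $\eps=0$ the only node that can ever be $\CF$ is the root, so the node $v_j$ stopping the check is either the root or an already-$\PF$ node whose ancestors are $\PF$ by the inductive hypothesis — is precisely the structural fact underlying the paper's top-down directionality claim.
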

		Thus, the potential of any $C' \in \CPT(t)$ other than $C$ remains unchanged at time $t+1$; it, therefore, suffices to analyze the change of potential in $C$.
		
		First, if $|C| = 1$, then $|C \cup \{v\}| = 2$. Suppose first that $v$ does not run the checking procedure (or runs it but does not reach a PF node); this holds with probability at least $1-p$. Since $C \cup \{v\}$ contains precisely one leaf in this case, its total potential is $2$, an increase of $1$ over the potential of $C$.
		In the other case, with probability at most $p$, checking takes place and both $u$ and $v$ are marked PF, thus removing $C$ from $\CPT$ without creating new PT nodes, which decreases the total potential by $1$. In total, the expected change in potential is at least $1 \cdot (1-p) - 1 \cdot p = 1-2p$.
		
		Otherwise, $|C| > 1$, and $\deg_{\PT}(\mathrm{root}) \geq 1$. Let $\ell$ stand for the number of leaves in $C$ and set $u:= \mathrm{parent}(v)$. Since $v$ chooses the parent $u \in C$ according to a preferential attachment distribution, we have,
		$$\PP\left(u \text{ is a leaf}\right) = \frac{\sum\limits_{w\in C \text{ is a leaf}}(\deg_{\PT}(w)+1)}{\sum\limits_{w\in C}(\deg_{\PT}(w)+1)} \leq \frac{\sum\limits_{w\in C \text{ is a leaf}}(\deg_{\PT}(w)+1)}{1 + \sum\limits_{w\in C \text{ is a leaf}}(\deg_{\PT}(w) + \deg_{\PT}(\mathrm{parent}(w)) +1)} \leq \frac{\ell}{2\ell +1} < \frac{1}{2}.$$
		Equivalently,
		$\PP\left(u \text{ is not a leaf}\right) > \frac{1}{2}.$
		Note that equality to $\ell / (2 \ell + 1)$ is attained if and only if $C$ is a rooted star. There are several cases to consider.
		\begin{enumerate}
			\item If $u$ is a leaf, and $v$ does not run a check, then the potential remains unchanged, i.e., $\Delta_t = 0$. 
			\item If $u$ is not a leaf, and again, $v$ does not run a check, then $v$ is a new leaf, and the potential increases by one: $\Delta_t = 1$. Thus, by the above,
			$$\PP\left(\Delta_t = 1\right)\geq \PP\left(u \text{ is not a leaf and no check was performed} \right) > \frac{(1-p)}{2}.$$
			\item If $v$ runs a check (with probability $p$), then the potential can decrease in two ways.
			The added node $v$ can remove a leaf from $C$, which can only happen if $u$ is a leaf.
			Note that any removed parent of $u$, other than the root, can only increase the potential, since it would create new connected components, without affecting the number of leaves.
			So, the other possibility to decrease the potential is to remove the root. This can happen regardless of whether $v$ is connected to a leaf or an internal node. 
			Thus,
			\begin{equation} \label{eq:rootremoval}
			    \PP\left(\Delta_t = -2\right)\leq \PP\left(u \text{ is a leaf and a check was performed} \right) =  \PP\left(u \text{ is a leaf} \right)\cdot p,
			\end{equation}
			and 
			$$\PP\left(\Delta_t = -1\right)\leq \PP\left(u \text{ is not a leaf and a check was performed} \right) =  (1-\PP\left(u \text{ is a leaf} \right))\cdot p.$$
		\end{enumerate}
		Denote $\alpha := \PP\left(u \text{ is a leaf} \right)$ and recall, as shown above, that $\alpha < 1/2$. Summarizing the above cases:
		$$ \E\left[\Delta_t|\tree_t\right] \geq \frac{1-p}{2} -2\alpha p -(1-\alpha)p= \frac{1}{2} -(\frac{3}{2}+\alpha)p >  \frac{1}{2} - 2p$$
		where the last expression is non-negative when $p \leq \frac{1}{4}$.
	\end{proof}
	Theorem \ref{thm:simpleerrorsurvival} will now follow by utilizing the following simple fact about sub-martingales.
	\begin{lemma}\label{lem:positivesubmartingale}
		Let $\{X_t\}_{t\geq 0}$ be a non-negative sub-martingale with filtration $\{\mathcal{F}_t\}_{t\geq 0}$ and such that $X_0 > 0$. Assume that there exist constants $c_1,c_2 > 0$, such that, for every $t\geq 0$, when $X_t \neq 0$:
		\begin{enumerate}
			\item $|X_{t+1} - X_t| \leq c_1$ almost surely.
			\item $\E\left[X_{t+1} - X_t|\mathcal{F}_t\right] > c_2$.
		\end{enumerate}
		Then with positive probability, $X_t > 0$ holds for all $t \in \N$.
	\end{lemma}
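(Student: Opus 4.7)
The plan is to bound the probability that $X_t$ ever hits $0$ via a standard exponential supermartingale (Doob) argument. Let $\tau := \inf\{t : X_t = 0\}$ and consider the stopped process $Y_t := X_{t \wedge \tau}$, together with its increments $D_t := Y_{t+1} - Y_t$. By construction, $|D_t| \le c_1$ everywhere, $\E[D_t \mid \mathcal{F}_t] \ge c_2$ on $\{t < \tau\}$, and $D_t = 0$ on $\{t \ge \tau\}$.

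Next, I would fix $\alpha > 0$ (to be tuned) and analyze $W_t := e^{-\alpha Y_t}$. On $\{t \ge \tau\}$ the conditional increment of $W_t$ vanishes, so the content lies in $\{t < \tau\}$. There I would invoke Hoeffding's lemma on $D_t$: since $|D_t| \le c_1$,
\[
\E\!\left[e^{-\alpha D_t} \mid \mathcal{F}_t\right] \;\le\; \exp\!\left(-\alpha\, \E[D_t\mid\mathcal{F}_t] + \tfrac{\alpha^2 c_1^2}{2}\right) \;\le\; \exp\!\left(-\alpha c_2 + \tfrac{\alpha^2 c_1^2}{2}\right).
\]
Choosing $\alpha := c_2/c_1^2$ makes the exponent equal to $-c_2^2/(2c_1^2) < 0$, so $\E[e^{-\alpha D_t}\mid \mathcal{F}_t] \le 1$ in both cases. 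Since $W_{t+1} = W_t \cdot e^{-\alpha D_t}$, this yields $\E[W_{t+1}\mid\mathcal{F}_t] \le W_t$, i.e.\ $\{W_t\}$ is a nonnegative supermartingale bounded in $[0,1]$.

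To conclude, apply the martingale convergence theorem: $W_t \to W_\infty$ almost surely and in $L^1$, so $\E[W_\infty] \le \E[W_0] = e^{-\alpha X_0}$. On the event $\{\tau < \infty\}$, $Y_t = 0$ for all $t \ge \tau$, so $W_t \equiv 1$ eventually and hence $W_\infty = 1$. Therefore
\[
\PP(\tau < \infty) \;=\; \E[\mathbf{1}_{\tau<\infty}] \;\le\; \E[W_\infty] \;\le\; e^{-\alpha X_0} \;<\; 1,
\]
which gives $\PP(\tau = \infty) \ge 1 - e^{-\alpha X_0} > 0$, as required.

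I do not expect a genuine obstacle: the only subtle point is the uniform exponential bound on $\E[e^{-\alpha D_t}\mid\mathcal{F}_t]$, which is handled cleanly by Hoeffding's lemma once one passes to the stopped process (so that the hypothesis $\E[D_t\mid\mathcal{F}_t] \ge c_2$ can be used exactly on $\{t<\tau\}$, while the inequality is trivially preserved on $\{t\ge\tau\}$). The submartingale hypothesis on $X_t$ itself is in fact not needed beyond the positive-drift condition; it is the drift bound $c_2 > 0$, combined with bounded increments, that drives the argument. Note also that the submartingale assumption in the statement is consistent with (in fact implied up to a constant by) the positive-drift hypothesis, so no separate use of it is required.
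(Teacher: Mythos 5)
Your proof is correct, and it takes a genuinely different route from the paper's. The paper considers the linearly shifted process $Y_t = X_t - t c_2$, shows it is a sub-martingale, applies Azuma's inequality to bound $\PP(X_t \le 0)$ for each fixed $t$, and then takes a union bound over all $t$; because the resulting geometric-type sum is only $< 1$ when $X_0$ is large, the paper needs a second stage that first runs the process until some $t_0$ where $X_{t_0}$ is large with positive probability and then restarts. You instead build the exponential supermartingale $W_t = e^{-\alpha X_{t\wedge\tau}}$ (with $\tau$ the hitting time of $0$), use Hoeffding's lemma together with the drift to choose $\alpha = c_2/c_1^2$ so that $W_t$ is a bounded non-negative supermartingale, and then pass to the limit: since $W_\infty = 1$ on $\{\tau<\infty\}$, the supermartingale property gives $\PP(\tau<\infty) \le \E[W_0] = e^{-\alpha X_0} < 1$. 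Your version avoids the two-stage ``boost $X_0$ first'' step entirely and directly yields the quantitative escape bound $\PP(\tau=\infty) \ge 1 - e^{-c_2 X_0/c_1^2}$ for any $X_0 > 0$, which is both cleaner and slightly stronger; the paper's route is perhaps more elementary in the tools it invokes (Azuma plus a union bound rather than Hoeffding's lemma and supermartingale convergence) but requires the extra patch. Both are standard devices for one-dimensional random walks with positive drift and bounded steps, and either is a valid proof. Your closing remark that the sub-martingale hypothesis is essentially redundant given non-negativity plus the drift bound is also correct.
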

	\begin{proof}
		Assume for now that $X_0$ is large enough (as a function of $c_1$ and $c_2$). Later we show that this assumption is not needed.
		
		Define the process $Y_t = X_t - tc_2$. It is straightforward to verify that $Y_{0} = X_0$ and $\E[Y_{t+1} | Y_t] \geq Y_t$, so $\{Y_t\}_{t=t_0}^{\infty}$ is a sub-martingale. We apply Azuma's inequality, \cite[Theorem 1.10.30]{doerr},
		to get that
		\begin{align}
		\label{eqn:bound_comps_and leaves}
		\PP\left(X_{t} \leq 0 \right) = \PP\left(Y_{t} \leq tc_2 \right) \leq \exp \left( \frac{-(X_0 +c_2t)^2 }{2c_1^2t}\right).
		\end{align}
		It is not hard to see that if $X_0$ is large enough (as a function of $c_1$ and $c_2$), we have
		$$
		\sum_{t=1}^{\infty} \exp \left( \frac{-(X_0 +c_2t)^2 }{2c_1^2t}\right) < \frac12.
		$$
		Thus, by a union bound over all $t \in \N$,
		$$\PP\left(\min\limits_{t \geq 0} X_t > 0 \right)\geq \frac{1}{2}.$$
		
		We conclude by relaxing the assumption that $X_0$ is large enough. Note that for any constant $C = C(c_1, c_2)$ there exists $t_0 = t_0(c_1, c_2)$ for which with positive probability $X_{t_0} \geq C$. Conditioning on this event and applying the above arguments on the sub-martingale $\{Z_t\}_{t \geq 0}$ defined by $Z_t = X_{t+t_0}$, the proof follows.
	\end{proof}
	\begin{proof}[Proof of Theorem \ref{thm:simpleerrorsurvival}]
		We use Lemma \ref{lem:leavespotentialevol} along with concentration inequalities for martingales with bounded differences.
		Clearly, for any fixed large constant $C_p > 0$ (possibly dependent on $p$), there exists some time $t_0$, depending on $C_p$, for which with probability bounded away from zero, $\LeavesComponentsPotential(\tree_{t_0}) \geq C_p$. Condition on this event (call it $E$), and consider the process $X_t = \LeavesComponentsPotential(\tree_{t_0+t})$. 
        
        In order to apply Lemma \ref{lem:positivesubmartingale}, we need $X_t$ to be a sub-martingale with bounded increments. However, note that while the increments of $X_t$ are bounded from below, by $-2$, it can have arbitrarily large positive increments. These positive increments may occur when a removed root creates many new fragmented components. To handle this technicality, we modify $X_t$ is in the following way. Define a new process $\tilde{X}_t$, such that $\tilde{X}_0 = X_0$ and for $t > 0$,
        $$\tilde{X}_t = \begin{cases}\tilde{X}_{t-1} + X_t - X_{t-1}& \text{if } X_t - X_{t-1} \leq 2\\ 
        \tilde{X}_{t-1} + 2 & \text{if } X_t - X_{t-1} > 2\end{cases}.$$
        Clearly $\tilde{X}_t$ is adapted to same filtration as $X_t$, $\tilde{X}_t \leq X_t$, and $|\tilde{X}_t - \tilde{X}_{t-1}| \leq 2$ almost surely, for every $t > 0$.
        Moreover, since $X_t - X_{t-1} > 2$ only when a root is removed from a $\CT$ component, the proof of Lemma \ref{lem:leavespotentialevol}, or more spefically the argument preceding \eqref{eq:rootremoval}, shows that 
        $$\E\left[\tilde{X}_{t} - \tilde{X}_{t-1}|X_t\right] \geq \frac{1}{2} - 2p.$$
  
  Thus, according to Lemma \ref{lem:leavespotentialevol}, when $p \leq \frac{1}{5}$, $\tilde{X}_t$ satisfies the conditions of Lemma \ref{lem:positivesubmartingale} with $c_1 = 2$ and $c_2 = \frac{1}{2} - \frac{5}{2}p.$
		Since $\PP(E) > 0$ and $\tilde{X}_t \leq X_t$, we have, 
		$$\PP\left(\min\limits_{t\geq 0}\tilde{X}_t > 0\right) > 0 \implies \PP\left(\min\limits_{t\geq 0} X_t > 0\right) > 0\implies \PP\left(\min\limits_{t\geq 0} \LeavesComponentsPotential(\tree_t) > 0\right)>0,$$
		where, by Lemma \ref{lem:positivesubmartingale}, the left expression holds as long as $C_p$ is large enough.
		
		Since $|\tree_t| \geq \LeavesComponentsPotential(\tree_t)$ the proof is complete.
	\end{proof}
	\subsection{Error effect survival for shallow checks}
	In this section, we focus on the case where $k=2$, and hence a new node only checks two levels up. We show in this case that the error effects can survive with positive probability, irregardless of the value of $p$. Let us recall the exact statement.
	\notwolevels*
	Our proof goes by comparing the CKP to an ever-growing branching process. Specifically, we shall say that $v$ is a \emph{univalent} node if $\deg_{\PT}(v) = 1$ and we say that two univalent nodes, $v$ and $u$ are independent if neither is the ancestor of the other. In the proof, we will show that if a CKP is rooted at a univalent node, then by the time the root is labeled as $\PF$, the expected number of independent univalent nodes is larger than $1$. A standard argument then shows that the tree has a positive probability to survive forever. 
	
	To control the expected number of independent univalent nodes, we will need the following combinatorial lemma.
	\begin{lemma}\label{lem:induni}
		Let $T$ be a tree of height $h$ with $n$ univalent nodes. Then, $T$ at least $\frac{n}{h}$ mutually independent univalent nodes.
	\end{lemma}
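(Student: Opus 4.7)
The plan is a direct pigeonhole argument on the depths of the univalent nodes. Define the depth of a node $v \in T$ as the number of edges on the path from the root of $T$ to $v$, so depths take values in $\{0, 1, \ldots, h\}$. The key observation is that every node lying at depth exactly $h$ must be a leaf of $T$, hence has $\deg_{\PT}(v) = 0$, and so cannot be univalent. Therefore every univalent node sits at some depth in $\{0, 1, \ldots, h-1\}$, giving only $h$ possible depth values.

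Now partition the set of $n$ univalent nodes of $T$ by their depth. By pigeonhole, at least one depth class contains $\lceil n/h \rceil$ univalent nodes. Finally, two distinct nodes of a rooted tree that share the same depth cannot lie in an ancestor-descendant relation (such a relation would force strictly different depths), so any set of nodes at a common depth is mutually independent in the sense defined just before the lemma. This yields an independent set of univalent nodes of size at least $n/h$.

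I do not expect any real obstacle in carrying this out; the argument is essentially one-line once one makes the observation that a univalent node cannot occupy the bottom level of $T$. Without that observation, the pigeonhole yields only $n/(h+1)$; the fact that a leaf fails to be univalent is precisely what sharpens the bound to $n/h$ as stated.
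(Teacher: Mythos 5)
Your proof is correct, and it takes a genuinely different route from the paper's. You partition the univalent nodes by depth, observe that the deepest level of a height-$h$ tree consists entirely of leaves (which have no children and hence are never univalent), so only $h$ depth classes $\{0,\ldots,h-1\}$ can contain univalent nodes, and conclude by pigeonhole that some level carries at least $n/h$ of them; same-level nodes in a rooted tree are automatically incomparable. The paper instead introduces the notion of a maximal univalent path (a maximal chain of univalent nodes ordered by ancestry), shows that terminal points of distinct maximal paths are mutually independent, bounds the length of each such path by $h$, and concludes by a covering/counting argument. Your argument is shorter and more elementary, and the sharpening from $n/(h+1)$ to $n/h$ via the observation that the bottom level contributes no univalent nodes is exactly the right place where the bound tightens — the paper's path-based proof achieves the same thing implicitly by excluding the leaf from each root-to-leaf path. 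One could argue the paper's phrasing points more directly at the structure that gets used downstream (branching on independent univalent nodes), but as a proof of this particular lemma your pigeonhole is cleaner.
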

	\begin{proof}
		We call a sequence of univalent nodes $P = (v_0,v_1,\dots,v_m)$ a univalent path, if it is a sub-sequence of a simple path, starting from $v_0$ and terminating at a leaf. In other words, for $i = 0,\dots, m-1$, $v_i$ is an ancestor of $v_{i+1}$, and no other node from $P$ lies in the path between them. 
		We say that a univalent path is maximal if it is not a strict sub-sequence of another univalent path. For $P$ as above, we call $v_m$ a terminal point.
		
		The main observation is that if $P \neq P'$ are two different maximal univalent paths, with respective terminal points $v$ and $v'$, then necessarily $v$ and $v'$ are independent. Indeed, $P \neq P' \implies v\neq v'$ and by maximality neither $v$ nor $v'$ have any univalent decedents. 
		
		Thus, we define the following set,
		$$A:=\{v\in T| v \text{ is a terminal point of a maximal univalent path}\}.$$ 
		From the above observation, we get that the number of mutually independent univalent nodes in $T$ is at least $|A|$. On the other hand, each univalent path contains at most $h$ nodes, and a simple counting argument gives,
		$$|A| \geq \frac{n}{h}.$$
	\end{proof}
	Next, we analyze the growth of a $\CT$ component in the simple CKP, when it is rooted at a univalent node. For this we make the following definition: a univalent initialization of a CKP is a knowledge state $\tree_0$ such that all nodes in $\tree_0$, except its root $r'$ are labeled $\PT$, with the root being $\PF$. We also require that both $r'$ and its single child $r$ are univalent.
	\begin{lemma} \label{lem:univalentgrowth}
		Let $p \in [0,1]$ and let $\tree_t$ be the knowledge state of a $(p,2)$-simple CKP.
		Assume that $\tree_0$ is a univalent initialization, and define the stopping time,
		$$\tau = \max\{t: \deg_{\PT}(r) = 1\},$$
		where $r'$ is the single child of the root.
		Then, for every $t \geq 0$,
		$$\PP\left(\tau = t\right) = \Omega\left(\frac{1}{t^2}\right).$$
		In particular,
		$$\E\left[|\tree_\tau|\right] = \infty.$$ 
	\end{lemma}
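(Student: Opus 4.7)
The plan is to lower bound $\PP(\tau = t)$ by exhibiting a concrete favorable family of trajectories parameterized by the time $s \in [1, t]$ at which $r$ first becomes proclaimed false. First I would observe that, since new nodes only attach to $\PT$ nodes and a check of depth $k=2$ can only reach two generations back, the only source of $\PF$ propagation while $r$ is $\PT$ is an attach-to-$r$-with-check event (the one that reaches the original root, labelled $\PF$ in the univalent initialization). Consequently, $\deg_{\PT}(r)$ is non-decreasing during the ``Phase~$1$'' window while $r$ is $\PT$, and non-increasing (by one at a time) during ``Phase~$2$'' after $r$ becomes $\PF$. In particular, $\tau = t$ forces $r$ to already be $\PF$ at time $t$ with exactly one surviving $\PT$ child, and for that child to be marked $\PF$ at step $t+1$.

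Second, I would isolate the clean sub-scenario in which the first attach-to-$r$ event is itself a check, so that Phase~$2$ begins with $\deg_{\PT}(r)=1$ and $u$ (the original univalent $\PT$ child of $r$) is the lone $\PT$ child of $r$ throughout Phase~$2$ until it too is marked $\PF$. As long as no attach-to-$r$ has occurred, every attachment falls into $T_u$, the $\PT$ subtree rooted at $u$; in particular the total $\PT$ weight grows deterministically by $+2$ per step, so the probability that steps $1, \ldots, s-1$ avoid $r$ telescopes to $1/(2s-1)$. Multiplying by the probability $2p/(2s+1)$ that step $s$ is an attach-to-$r$-with-check yields $\Theta(p/s^2)$ for the probability that Phase~$2$ starts exactly at time $s$ in this sub-scenario.

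Third, I would analyse Phase~$2$ as a linear preferential attachment process on $T_u$. Standard Yule-tree bounds give $d_u(n) = \Theta(\sqrt{n})$ in expectation, so the per-step hazard $p(1+d_u)/W$ for $u$ to be declared $\PF$ is of order $p/\sqrt{n}$, integrating to $\Theta(p(\sqrt{t}-\sqrt{s}))$ between steps $s$ and $t$. I would restrict $s$ to a window of length $\Theta(\sqrt{t}/p)$ just below $t$, on which the survival probability is $\Omega(1)$. The terminal death probability $p(1+d_u(t))/(2t-1)$ at step $t+1$ is at least $\Omega(p/\sqrt{t})$ (see below). Thus each $s$ in the window contributes $\Omega(p^2/t^{5/2})$, and summing over the $\Theta(\sqrt{t}/p)$ values of $s$ gives $\Omega(p/t^2) = \Omega(1/t^2)$, as claimed.

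The main technical obstacle will be the conditional distribution of $d_u(t)$ given $u$'s Phase~$2$ survival: the survival event is negatively correlated with high $d_u$, and a careless bound on the death probability would only yield $\Omega(1/t^{5/2})$. The clean workaround is monotonicity: while $u$ is $\PT$, its $\PT$ children are never demoted, so $d_u(\cdot)$ is non-decreasing and $d_u(t) \geq d_u(s)$. The event $\{d_u(s) \geq c\sqrt{s}\}$ is measurable with respect to Phase~$1$ events alone and is therefore independent of the Phase~$2$ survival event; a standard lower-tail estimate for linear preferential attachment gives $\PP(d_u(s) \geq c\sqrt{s}) = \Omega(1)$ for a suitable $c$, which supplies the $\sqrt{t}$ scaling needed in the terminal death probability. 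The ``in particular'' conclusion is immediate: since $|\tree_\tau| = |\tree_0| + \tau$, we have $\E[|\tree_\tau|] = |\tree_0| + \E[\tau] \geq \sum_{t \geq 1} t \cdot \Omega(1/t^2) = \infty$.
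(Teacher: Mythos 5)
Your proof takes a genuinely different, and considerably more elaborate, route than the paper's. The paper's argument is a direct telescoping computation: conditioned on the distinguished univalent node never having received a new child, the preferential-attachment weights are deterministic, so the probability of ``avoiding'' that node for $t$ steps telescopes exactly to $\Theta(1/t)$, and one more factor of $\Theta(1/t)$ for the terminal step gives $\Omega(1/t^2)$ in a few lines with no degree-concentration machinery at all. You instead condition on the Phase~$2$ start time $s$, analyze a Yule-tree hazard process on $c$'s subtree, invoke $d_u(n)=\Theta(\sqrt{n})$ concentration, and sum over a window of $s$; this is a valid line of attack but buys nothing in generality and costs substantially in complexity.

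More importantly, the step you flag as your ``clean workaround'' contains a genuine gap. You claim that $\{d_u(s)\geq c\sqrt{s}\}$, being measurable with respect to Phase~$1$ alone, ``is therefore independent of the Phase~$2$ survival event.'' That inference is false: the Phase~$2$ dynamics run forward from the state at time $s$, and the per-step removal hazard $p(1+d_u(n))/Z_n$ for $n\geq s$ is a function of $d_u(s)$ (larger $d_u(s)$ means larger hazard throughout Phase~$2$). Hence Phase~$2$ survival is \emph{negatively correlated} with $\{d_u(s)\geq c\sqrt{s}\}$, which is exactly the ``main technical obstacle'' you identified --- measurability with respect to an earlier sigma-field does not give independence from events about the future when the future process is driven by that earlier state. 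The fix is to condition on a two-sided window $c_1\sqrt{s}\leq d_u(s)\leq c_2\sqrt{s}$ (still $\Omega(1)$ probability by Yule-tree concentration) and then bound $\PP(\text{survive}\mid d_u(s)\in[c_1\sqrt{s},c_2\sqrt{s}])$ directly by a first-moment/union bound on the cumulative hazard, rather than appeal to independence. A secondary issue: your lower bound is $\Omega(p/t^2)$ and degenerates at $p=0$, where Phase~$2$ never starts; the lemma is stated for $p\in[0,1]$, and at $p=0$ the relevant event is the degree of $r$ increasing to $2$, which your trajectory family (which requires a check) misses entirely. The paper's telescoping argument does not condition on a check occurring and so does not have this problem.
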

	\begin{proof}
		Observe that at step $t$, $|\tree_t| = t + a$, for $a=|\tree_0|$. Thus, conditioned on $\tau > t-1$, since $\deg_{\PT}(r') = 1$, we have, from the preferential attachment rule,
		$$\PP\left(\tau = t| \tau > t-1\right) =  \frac{\deg_{\PT}(r') + 1}{2t + 2a}  = \frac{1}{t + a}.$$
		Moreover, the same argument also shows,
		$$\PP\left(\tau > t| \tau > t-1\right) \geq  1 - \frac{1}{t+a} = \frac{t + (a-1)}{t} \geq \frac{t-1}{t}.$$
		By iterating this argument, we get,
		$$\PP\left(\tau > t\right) = \prod\limits_{i=1}^t\frac{i-1}{i} = \frac{1}{t},$$
		where we have used that the product is telescopic. 
		Thus, 
		$$\PP\left(\tau = t\right) = \PP\left(\tau = t| \tau > t-1\right)\cdot \PP\left(\tau > t-1\right) \geq \frac{1}{t+a}\frac{1}{t-1} = \Omega \left(\frac{1}{t^2}\right).$$
		To get the bound on $|T_\tau|$, observe that as long as $\deg_{\PT}(r') = 1$, no new node was attached below $r'$ and hence, since $k = 2$, no check could reach the $\PF$ root $r$. Hence, as the size of the tree grows at each time step, $|\tree_\tau| \geq \tau$. Finally, a straightforward calculation for the expectations shows
		$$\E\left[|\tree_\tau|\right] \geq \E\left[\tau\right] = \sum\limits_{t=1}^\infty t\PP\left(\tau = t\right) \geq \sum\limits_{t=1}^\infty \Omega\left(\frac{1}{t}\right) = \infty.$$
	\end{proof}
	We now prove that when $k=2$, the error effects always survive with positive probability.
	\begin{proof}[Proof of Theorem \ref{thm:notwolevels}]
		Let $r'$ be the $\CF$ root of the simple CKP. Since $p < 1$ there is a positive probability that in $2$ steps, $r'$ will have a single univalent child. We treat this configuration as a univalent initialization, as in Lemma \ref{lem:univalentgrowth} (the fact that $r'$ is $\CF$, instead of $\PF$ does not matter). Let, $r$ be the univalent child of $r'$ and define
		\begin{align*}
		\tau' &= \max\{t: r \text{ is } \PT\},\\
		\tau &= \max\{t: \deg_{\PT}(r) = 1\}.
		\end{align*}
		Since $k=2$, the only way $r$ can become $\PF$ is if a node is attached as its child and performs a check, which realizes $r'$ is false. Thus, $\tau' \geq \tau$.
		Let us denote $h(\tree_{\tau'})$ as the height of $\tree_{\tau'}$ and $\mathrm{uni}(\tree_{\tau'})$ as the number of univalent nodes in $\tree_{\tau'}$.
		Observe that for $t \leq \tau'$, $\tree_t$ evolves like a preferential attachment tree. Thus, we invoke the following structural results, from \cite[Theorem 1.1]{brightwell2012vertices} and \cite[Theorem 1.2]{pain2022correction}, concerning the above parameters:
		\begin{enumerate}
			\item There exists a constant $c_1 > 0$, such that,
			$$\frac{h(\tree_{\tau'})}{\log(\tau')} \xrightarrow{\tau' \to \infty} c_1.$$
			\item There exists a constant $c_2 > 0$, such that,
			$$\frac{\mathrm{uni}(\tree_{\tau'})}{\tau'} \xrightarrow{\tau' \to \infty} c_2.$$
		\end{enumerate}
		Now, let $\mathrm{Iuni}(\tree_{\tau'})$ stand for the maximal number of mutually independent univalent nodes in $\tree_{\tau'}$. From Lemma \ref{lem:induni} we have,
		$$\mathrm{Iuni}(\tree_{\tau'}) \geq \frac{\mathrm{uni}(\tree_{\tau'})}{h(\tree_{\tau'})}.$$
		
		As alluded to previously, we would like to show that the number of mutually independent univalent nodes in each $\PT$ component behaves like a branching process with many offspring. To use such an argument we would need to show that the expectation $\E\left[\mathrm{Iuni}(\tree_{\tau'})\right]$ is large. In light of the above, bounding the expectation will be facilitated by showing that the sequence of random variables $(Z_t)_{t\geq0}:= \left(\frac{\mathrm{uni}(\tree_t)}{h(\tree_t)}\frac{\log(t)}{t}\right)_{t\geq 0}$ is uniformly integrable. 
		By \cite[Lemma 2.5]{dommers2010diameters}, we have for some $\eta,c_1' >0$, that,
		$$\PP\left(h(\tree_t) < c_1'\log(t)\right) \leq \frac{1}{t^\eta}.$$
		Thus, for $M > 0$ large enough, since $\frac{\mathrm{uni}(\tree_t)}{t}\leq 1$,
		$$\E\left[Z_t{\bf1}_{\{Z_t > M\}}\right]\leq \E\left[\frac{\log(t)}{h(\tree_t)}{\bf1}_{\{h(\tree_t) < \frac{\log(t)}{M}\}}\right] \leq \frac{\log(t)}{t^{\eta}} = o(1),$$
		and the sequence $Z_t$ is uniformly integrable.
		
		To bound the expectation of $\mathrm{Iuni}(\tree_{\tau'})$, we now observe that conditional on $\tau > t$, the sub-tree without the root $\tree_t \setminus \{r\}$ evolves like the usual preferential attachment tree.
		Combining this observation with the above estimates, and invoking Vitali's convergence theorem (as in, e.g., \cite{folland1999real}), shows that there exists some $t_0 >0$ and some constant $c_3>0$, such that for any $t \geq t_0$, 
		$$\E\left[\mathrm{Iuni}(\tree_{t}){\bf 1}_{\{\tau' = t\}}\right] \geq c_3\frac{t}{\log(t)}\PP\left(\tau' = t\right).$$
		We thus have,
		\begin{align*}
		\E\left[\mathrm{Iuni}(\tree_{\tau'})\right] & \geq\sum\limits_{t \geq t_0}\E\left[\mathrm{Iuni}(\tree_{t}){\bf 1}_{\{\tau' = t\}}\right] \geq c_3\sum\limits_{t \geq t_0}\frac{t}{\log(t)}\PP\left(\tau' = t\right)\\
		&\geq c_3\sum\limits_{t \geq t_0}\frac{t}{\log(t)}\Omega\left(\frac{1}{t^2}\right) = c_3\sum\limits_{t \geq t_0}\Omega\left(\frac{1}{t\log(t)}\right) =\infty,
		\end{align*}
		where we have used Lemma \ref{lem:induni} for the third inequality.
		
		To finish the proof, at time $\tau'$ we regard each univalent node accounted for in $\mathrm{Iuni}(\tree_{\tau'})$ as an eventual root for a new $\PT$ component. Repeating the above argument for each such component separately yields a branching process on univalent nodes with infinite offspring expectation. In particular, it is standard that this branching process has a positive probability to continue forever, e.g. \cite[Theorem 4.3.12.]{durrett2016probability}. Clearly, if this branching process survives then the same must be true for $\tree_t$ and we conclude the proof.
	\end{proof}
	
	\subsection{No Linear Components}
	We now show that even when the error effect survives,  components in $\CPT(t)$ tend to be sub-linear in $t$, and so the process is \emph{$\varphi$-reliable} for some $\varphi(t) =o(t)$. Formally, we prove the following theorem, which implies Theorem \ref{thm:reliableprocess}.
	
	\begin{theorem} \label{thm:smallcomps}
		For every $k$ large enough, and any $\frac{12}{2k-1} \leq p \leq \frac{1}{6}$,  the error effect in the $(p,k)$-Simple CKP survives with positive probability. Moreover,
		$$\PP\left(\exists C \in \CPT(t) : |C| \geq \sqrt{10}(k+1)\sqrt{2}^kt^{0.55}\right) \leq \frac{1}{t^{0.1}}.$$
	\end{theorem}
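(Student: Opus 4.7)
The plan is to combine the exponential potential $\ExponentialPotential$ from Definition \ref{def:exponential_potential} with a martingale concentration argument. Survival of the error effect is immediate, since $p \leq 1/6 < 1/4$ lies in the regime of Theorem \ref{thm:simpleerrorsurvival}; the heart of the proof is the quantitative size bound. As the first step, I would revisit the per-step contraction estimate from the proof of Lemma \ref{lem:exponential_potential_contracts}. The hypothesis $p \geq \tfrac{12}{2k-1}$ is exactly $(2k-1)p \geq 12$, which makes the bracket in \eqref{eq:exppotsuperfirst} at most $-3$. Hence, for every PT component $C$ with $|C| \geq k$,
\[
  \E\!\left[\,\textstyle\sum_{C'\subset C}\ExponentialPotential(C') - \ExponentialPotential(C) \,\Big|\, \tree_t,\,\text{added to }C\right] \;\leq\; -\frac{3\,\ExponentialPotential(C)}{D(C)}.
\]
For small components ($|C|<k$) the drift from \eqref{eq:exppotsmallcomps} can be positive, but the effect of one step on the global potential $\ExponentialPotential(\tree_t)$ is confined to a single root-to-leaf path of length at most $k+1$ in a component of height less than $k$, yielding an absolute increment bound of order $(k+1)\sqrt{2}^{\,k}$—exactly the $k$-dependent prefactor appearing in the theorem.

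Next I would assemble these estimates into a global drift bound. Writing $T_t := \sum_{C \in \CPT(t)} D(C) = \Theta(t)$ and decomposing $\ExponentialPotential(\tree_t) = \Phi^{\text{large}}_t + \Phi^{\text{small}}_t$ by component size, the probability-weighted step gives
\[
  \E\!\left[\ExponentialPotential(\tree_{t+1}) - \ExponentialPotential(\tree_t) \,\big|\, \tree_t\right] \;\leq\; -\frac{3\,\Phi^{\text{large}}_t}{T_t} + \frac{O\bigl((k+1)\sqrt{2}^{\,k}\bigr)\cdot \#\{\text{small }C\}}{T_t}.
\]
Since the second term is uniformly bounded by a $k$-dependent constant while the first is proportional to $\Phi^{\text{large}}_t/t$, the process $M_t := \ExponentialPotential(\tree_t) - A\cdot(k+1)\sqrt{2}^{\,k}\cdot t^{0.55}$ can be shown to be a supermartingale whenever $\ExponentialPotential(\tree_t)$ exceeds the target threshold, for an absolute constant $A$ fixed at the end. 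A maximal inequality for supermartingales with bounded increments (Doob's or Freedman's, exploiting the $O((k+1)\sqrt{2}^{\,k})$ step bound) then yields
\[
  \PP\!\left(\max_{s\leq t}\ExponentialPotential(\tree_s) \geq \sqrt{10}\,(k+1)\sqrt{2}^{\,k}\,t^{0.55}\right) \;\leq\; t^{-0.1},
\]
with the exponents $0.55$ and $0.1$ emerging from optimizing the concentration bound at scale $(k+1)\sqrt{2}^{\,k}$. Because $|C| \leq \ExponentialPotential(C) \leq \ExponentialPotential(\tree_t)$ for every $C \in \CPT(t)$, this immediately translates into the component-size bound stated in the theorem.

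The main obstacle is calibrating the compensator. Naively summing the positive drift of small components already allows $\ExponentialPotential(\tree_t)$ to grow polynomially before the contraction from large components kicks in, so the deterministic correction $A\cdot(k+1)\sqrt{2}^{\,k}\cdot t^{0.55}$ must be chosen so that $M_t$ becomes a supermartingale at \emph{exactly} the threshold stated. This requires carefully tracking how small components are promoted to large ones over time (which is where the process's growth feeds back into the drift) and how the quadratic variation of the increments scales with $k$; together these considerations pin down both the exponent $0.55$ and the constant $\sqrt{10}$.
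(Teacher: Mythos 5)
Your plan misses the central structural device that makes the $t^{0.55}$ exponent appear, and it also relies on a bounded-increment claim that is false for the plain exponential potential.

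The key gap is that you work with $\ExponentialPotential$ directly, whereas the paper introduces the \emph{adapted} potential $\AdaptedExponentialPotential(C) = |C| \cdot \sum_{v\in C} d(v) 2^{|v|}$ precisely because $\AdaptedExponentialPotential(C) \geq |C|^2$. Once you have $\E\bigl[\sum_{|C|>k}\AdaptedExponentialPotential(C)\bigr] = O\bigl((k+1)^2 2^k\, t\bigr)$, Markov's inequality at time $t$ gives $\sum\AdaptedExponentialPotential(C) \leq 10(k+1)^2 2^k\, t^{1.1}$ with probability $1 - t^{-0.1}$, and then the quadratic lower bound $\AdaptedExponentialPotential(C) \geq |C|^2$ instantly yields $|C| \leq \sqrt{10}(k+1)\sqrt{2}^k\, t^{0.55}$. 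With the plain $\ExponentialPotential$, you only have $\ExponentialPotential(C)\geq |C|$, so even a sharp bound $\ExponentialPotential(\tree_t) \leq t^{1.1}$ would give the useless $|C|\leq t^{1.1}$. There is no way to conjure the square-root improvement without some functional that grows at least quadratically in $|C|$, and this is exactly what the extra $|C|$ factor provides; your ``optimizing the concentration bound at scale $(k+1)\sqrt{2}^k$'' cannot recover it.

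Beyond that, the increment bound $|\Delta| = O\bigl((k+1)\sqrt{2}^k\bigr)$ you invoke for a Freedman/Azuma argument is false. When a node is attached at depth $|v|$ in a \emph{large} component, the potential jumps by $3\cdot 2^{|v|}$, and depths in large components are unbounded in $k$ (indeed logarithmic in $t$), so increments can be polynomial in $t$. Only inside components of size $<k$ are increments $O(2^k)$. The paper sidesteps maximal inequalities altogether: it builds the single auxiliary process $\Phi(\tree_t) = \LeavesComponentsPotential(\tree_t) - \tfrac{1}{5(k+1)^2 2^k}\sum_{|C|>k}\AdaptedExponentialPotential(C)$, verifies it is a submartingale by splitting into three cases (the leaves-and-components drift when $p\leq 1/4$, the promotion event $|C|=k \to |C|=k+1$ which deposits at most $(k+1)^2 2^k$ of new adapted potential, and the contraction of already-large components when $p \geq \tfrac{12}{2k-1}$), and then uses $\LeavesComponentsPotential(\tree_t)\leq 2t$ together with $\E[\Phi(\tree_t)]\geq 1$ and a single application of Markov at the terminal time. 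Your proposal identifies the promotion of small components to large ones as the main obstacle but offers no mechanism to pay for it; the coefficient $\tfrac{1}{5(k+1)^2 2^k}$ in $\Phi$ is chosen exactly so that the drift of $\LeavesComponentsPotential$ under $p\leq 1/6$ absorbs that promotion cost, and this is the step that pins down both the $(k+1)\sqrt{2}^k$ prefactor and the constraint $p \leq \tfrac16$ in the hypothesis.
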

	We begin by defining a slight adaptation of the exponential potential.
	\begin{definition}[Adapted exponential potential]\label{def:biased_exponential_potential}
		Consider any of the CKP models at time $t \geq 0$. The \emph{root} $r$ of a PT component $C \in \CPT(t)$ is the oldest (earliest birth) node in $C$. The \emph{depth} $|v|$ of a node $v \in C$ is defined as the length (number of edges) of the shortest path between $v$ and $r$. We also define the \emph{degree} as $d(v):= \deg_{PT}(v) +1$. 
		
		Finally, we define adapted the \emph{exponential potential} of $C$ as above by 
		$$
		\AdaptedExponentialPotential(C) = |C|\sum_{v \in C} d(v) \cdot 2^{|v|}.
		$$
	\end{definition}
	The following lemma shows that the exponential potential decreases in expectation when the component is large. It is an adaption of Case I from Lemma \ref{lem:exponential_potential_contracts} with slightly worse constants.
	\begin{lemma}
		\label{lem:exponential_potential_contracts_small}
		For all $k \geq 2$, $\frac{12}{2k-1} \leq p$, and $C \in \CPT(t)$ a \PT\ components with $|C| > k$
		$$
		\sum\limits_{\substack{C'\in \CPT(t+1)\\C' \subset C}}\E\left[\AdaptedExponentialPotential(C') \ | \ \tree_t\right] < \AdaptedExponentialPotential(C),
		$$
	\end{lemma}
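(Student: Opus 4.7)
The plan is to mirror Case I of the proof of Lemma \ref{lem:exponential_potential_contracts}, carefully tracking how the additional $|C|$ prefactor in $\AdaptedExponentialPotential$ propagates through the estimates. As before, let $\ECR$ denote the event that the newly attached node $u$ lies at distance at most $k$ from the root of $C$ \emph{and} a check is performed. Exactly as in Case I, the $k$ shallowest nodes of $C$ contribute aggregate selection weight at least $2k-1$ to $D = \sum_{v\in C} d(v) = 2|C|-1$, so $\PP(\ECR \mid \tree_t) \geq p(2k-1)/D$; the strengthened hypothesis $|C| > k$ only makes this inequality more comfortable.

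Under $\ECR$, the root of $C$ together with every node on the path from $u$ to it becomes $\PF$, so every surviving component $C' \subset C$ satisfies $|C'| \leq |C|-1$. Combined with the Case I estimate $\sum_{C'\subset C}\ExponentialPotential(C') < \tfrac{1}{2}\ExponentialPotential(C)$, this gives
\begin{equation*}
\sum_{C'\subset C}\AdaptedExponentialPotential(C') \leq (|C|-1)\sum_{C'\subset C}\ExponentialPotential(C') < \frac{|C|-1}{2|C|}\AdaptedExponentialPotential(C),
\end{equation*}
i.e.\ a drop of at least $\tfrac{|C|+1}{2|C|}\AdaptedExponentialPotential(C)$ in the adapted potential. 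When $\ECR$ fails, only a single new component $C^*$ forms, with $|C^*| = |C|+1$ and $\ExponentialPotential(C^*) = \ExponentialPotential(C) + 3\cdot 2^{|v|}$, where $v$ is the chosen parent. Expanding,
\begin{equation*}
\AdaptedExponentialPotential(C^*) - \AdaptedExponentialPotential(C) = \ExponentialPotential(C) + 3(|C|+1)\cdot 2^{|v|},
\end{equation*}
which after averaging $2^{|v|}$ against the weights $d(v)/D$ is in expectation at most $\ExponentialPotential(C)\bigl(1 + 3(|C|+1)/D\bigr) = (5|C|+2)\AdaptedExponentialPotential(C)/(|C|\cdot D)$.

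Combining the two cases (using positivity of the $\neg\ECR$ increment to drop its indicator against the bound above) yields
\begin{equation*}
\E\left[\sum_{C'\subset C}\AdaptedExponentialPotential(C') - \AdaptedExponentialPotential(C) \,\bigg|\, \tree_t\right] \leq \frac{\AdaptedExponentialPotential(C)}{|C|\cdot D}\left[-\frac{p(2k-1)(|C|+1)}{2} + (5|C|+2)\right],
\end{equation*}
and the bracket is non-positive whenever $p(2k-1) \geq \tfrac{10|C|+4}{|C|+1} = 10 - \tfrac{6}{|C|+1}$, a quantity strictly less than $10$; the hypothesis $p(2k-1)\geq 12$ therefore provides comfortable slack. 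The main obstacle is not conceptual but arithmetic: because enlarging $C$ from size $|C|$ to $|C|+1$ rescales the entire $\ExponentialPotential$ contribution, the ``growth'' term under $\neg\ECR$ balloons from $3\ExponentialPotential(C)/D$ (as in Case I) to roughly $5\ExponentialPotential(C)/D$, which is precisely why the required threshold on $p(2k-1)$ degrades from $6$ to (just under) $10$, leaving ample room under the stated hypothesis.
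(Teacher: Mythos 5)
Your proof is correct and follows essentially the same route as the paper's: split on the event $\ECR$, use the $(2k-1)/D$ selection-weight bound for the root-removal case, and bound the growth of $\AdaptedExponentialPotential$ under $\neg\ECR$ by expanding $(|C|+1)\ExponentialPotential(C^*) - |C|\ExponentialPotential(C)$. The only difference is that you track the arithmetic more tightly (using $D = 2|C|-1$ exactly and the sharper $-\frac{|C|+1}{2|C|}$ drop under $\ECR$), landing on the threshold $p(2k-1) > 10 - \frac{6}{|C|+1}$, whereas the paper uses the cruder bounds $D < 2|C|$ and $(|C|+1)/|C| \leq 4/3$ to arrive at the threshold $p(2k-1) \geq 12$ — both are covered by the hypothesis.
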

	\begin{proof}
		
		First, define $D = \sum_{v \in C} d(v)$ and for each $v \in C$ set $q(v) = d(v) / D$. Observe that, since $|C| > k$, the probability that a node of distance less than $k$ from the root $r$ of $C$ is selected is at least $(2k-1)/D$, since there are at least $k$ such nodes, and at least $k-1$ edges in the connected component containing $r$ and all of these low-depth nodes. 
		
		Denote by $\ECR$ the event that the newly added node $u$ is of distance at most $k$ from $r$, and furthermore, $u$ is checked. By the above $\PP(\ECR) \geq p \cdot (2k-1)/D$. When this event occurs, the root is checked and found to be \False, and subsequently, all nodes on the path between (and including) $r$ and $u$ are marked $\PF$. The depth of all other nodes decreases by at least one. Moreover, it is clear that if $C' \in \CPT(t+1)$ is a new component $C'\subset C$ in this case, then $|C'| < |C|$. Thus, we always have
		\begin{equation*}
		\sum\limits_{\substack{C'\in \CPT(t+1)\\C' \subset C}} \AdaptedExponentialPotential(C') < \frac12 \cdot \AdaptedExponentialPotential(C)\ .
		\end{equation*}
		When $\ECR$ does not hold, a new node $u$ is added, and as in Lemma \ref{lem:exponential_potential_contracts}, its contribution to the potential is $2^{|v|+1} + 2^{|v|}$, where $v$ is the parent to which $u$ is attached (note that the increase in the size of the component containing $u$ also contributes, separately, to the increase in potential). 
		In this case, the expected potential after the insertion satisfies
		\begin{align*}
		\E\left[\AdaptedExponentialPotential(C')\ | \ \tree_t \land \neg{\ECR} \right] &\leq
		(|C|+1) 
		\sum_{v \in C} q(v) (2^{|v|+1}  + 2^{|v|}) + \frac{|C|+1}{|C|} \AdaptedExponentialPotential(C)
		\\
		&= 3(|C| +1) \cdot \sum_{v \in C} \frac{d(v)}{D} \cdot 2^{|v|}  + \frac{|C|+1}{|C|}\AdaptedExponentialPotential(C)\\
		&= \frac{|C|+1}{|C|}\left(\frac{3}{D}+1\right)\AdaptedExponentialPotential(C)\\
		&\leq \left(\frac{4}{D}+\frac{|C|+1}{|C|}\right)\AdaptedExponentialPotential(C)
		\end{align*}
		where the last inequality holds since $|C| > k \geq 2$, so $(|C|+1) / |C| \leq 4/3$.
		
		Combining the above two inequalities, and noting that $D < 2|C|$
		\begin{align}\label{eq:exppotsuper}
		\E&\left[\sum\limits_{C' \in \CPT(t+1): C'\subset C} \AdaptedExponentialPotential(C') - \AdaptedExponentialPotential(C) \ | \ \tree_t \right] \nonumber  \\
		&< p \cdot \frac{2k-1}{D} \cdot \left(\frac{1}{2} - 1 \right) \AdaptedExponentialPotential(C) + \left( 1 - p \cdot \frac{2k-1}{D} \right) \left(\frac{4}{D} + \frac{1}{|C|}\right) \cdot \AdaptedExponentialPotential(C) \nonumber\\
		&< \frac{\AdaptedExponentialPotential(C)}{D} \cdot \left( -\frac{1}{2} \cdot (2k-1)p + 6\right).
		\end{align}
		Thus, the condition $(2k-1)p \geq 12$ ensures that \eqref{eq:exppotsuper} is decreasing.
	\end{proof}
	
	We may now prove Theorem \ref{thm:smallcomps}.
	\begin{proof}[Proof of Theorem \ref{thm:smallcomps}]
		Let us define the potential 
		$$\Phi(\tree_t):= \LeavesComponentsPotential(\tree_t) - \frac{1}{5(k+1)^22^k}\sum\limits_{C \in \CPT(t): |C| >k}\AdaptedExponentialPotential(C).$$
		It will suffice to show that $\{\Phi(\tree_t)\}_{t\geq 0}$ is a sub-martingale.
		Indeed, since $\Phi(\tree_0) = 1$, we shall deduce that $\E\left[\Phi(\tree_t)\right] \geq 1$.
		
		By combining the expectation bound with the facts $\LeavesComponentsPotential(\tree_t)  \leq 2t$ and $|C|^2 \leq \AdaptedExponentialPotential(C)$, we obtain,
		$$\E\left[\sum\limits_{C \in \CPT(t): |C| >k}|C|^2\right] \leq \E\left[\sum\limits_{C \in \CPT(t): |C| >k}\AdaptedExponentialPotential(C)\right] \leq \frac{10(k+1)^22^k}{p}t.$$
		
		In particular, by Markov's inequality, this shows for any $C \in \CPT(t)$, and $t > k$, that
		$$\PP\left(\exists C \in \CPT(t) : |C|^2 \geq 10(k+1)^22^kt^{1.1}\right) \leq \PP\left(\sum\limits_{C \in \CPT(t): |C| >k}|C|^2 \geq 10(k+1)^22^kt^{1.1}\right) \leq \frac{1}{t^{0.1}}.$$
		Put differently, we obtain the desired result,
		$$\PP\left(\exists C \in \CPT(t) : |C| \geq \sqrt{10}(k+1)\sqrt{2}^kt^{0.55}\right) \leq \frac{1}{t^{0.1}}.$$
		
		Thus, to finish the proof we show that $\{\Phi(\tree_t)\}_{t\geq 0}$ is a sub-martingale.
		
		We first note that by Lemma \ref{lem:leavespotentialevol}, when $p \leq \frac{1}{4}$ the potential $\LeavesComponentsPotential(\tree_t)$ is a sub-martingale:
		\begin{equation} \label{eq:supersizepot}
		\E\left[\LeavesComponentsPotential(\tree_{t+1}) - \LeavesComponentsPotential(\tree_{t})\ |\ \tree_t\right]>  \frac{1}{2}-2p \geq 0.
		\end{equation}
		Moreover, suppose that $C \in \CPT(t)$, with $|C| = k$, and that $C' \in \CPT(t+1)$ is such that $C' \subset C$ and $|C'| = k+1$. In other words, in step $t+1$ a new node was added to $C$ resulting in a new 
		`large' \PT\ component $C'$. Since $|C'| = k+1$, we have $\AdaptedExponentialPotential(C') = (k+1)\sum\limits_{v \in C'} d(v)2^{|v|} \leq (k+1)^22^k.$
		In this case, if a new node was attached to $C$, the above reasoning shows that
		\begin{align} \label{eq:supermovepot}
		\E\left[\Phi(\tree_{t+1}) -\Phi(\tree_{t}) |\ \tree_t\right] &\geq \E\left[\LeavesComponentsPotential(C') - \LeavesComponentsPotential(C) - \frac{\AdaptedExponentialPotential(C')}{5(k+1)^22^k}\ |\ \tree_t\right] \nonumber\\ 
		&> \frac{1}{2}- 2p - (1-p)\frac{(k+1)^22^k}{5 (k+1)^22^k} = \frac{3-18p}{10} \geq 0,
		\end{align}
		
		where the last inequality holds when $ p \leq \frac{1}{6}$. Finally,
		if $|C| > k$, because $\frac{12}{2k-1} \leq p$, we have by the proof of Lemma \ref{lem:exponential_potential_contracts} that
		\begin{equation} \label{eq:superexppot}
		\E\left[\AdaptedExponentialPotential(C) - \sum\limits_{C' \in \CPT(t+1): C'\subset C}\AdaptedExponentialPotential(C') \ | \ \tree_t\right] > 0.
		\end{equation}
		Summing up \eqref{eq:supersizepot}, \eqref{eq:supermovepot}, and \eqref{eq:superexppot} shows
		$$\E\left[\Phi(\tree_{t+1}) - \Phi(\tree_t)|\tree_t\right] > 0,$$
		which finishes the proof.
	\end{proof}

	\subsection{Lower Bound on Components Size}
	In Theorem \ref{thm:smallcomps} we showed that the size of the largest $\False$ component is sublinear in $t$; the upper bound obtained was polynomial in $t$. Our next result shows that this dependence is essentially correct: the largest component is indeed polynomially sized. 
	\begin{theorem}\label{thm:bigcomponents}
		Suppose that $k\geq 2$ and $p \leq \frac{1}{4}$. Conditional on the tree surviving, there exists a constant $0 < c_{p,k} < 1$, depending only on $p$ and $k$, such that, for every $t > 0$,
		$$\PP\left(\exists t' \leq t\ \exists C \in \CPT(t') : |C| \geq \frac{1}{2}\sqrt{t}^{\frac{0.35}{k}}\right) \geq c_{p,k}.$$
	\end{theorem}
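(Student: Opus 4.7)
The approach is to exhibit an explicit favorable initial event of positive probability—one on which a deep path is constructed early—and then leverage preferential-attachment growth inside a subtree rooted deep in the tree, which is initially shielded from destructive checks. Since the favorable event already implies survival, working with its unconditional probability immediately produces the required positive conditional probability. Concretely, I would first show that with probability at least $q_1(p,k)\geq (1-p)^k \prod_{i=2}^{k} 1/(2i-1) > 0$, the first $k+1$ nodes added form a path $r = u_0, u_1, \ldots, u_k$ of depth exactly $k$ rooted at the $\CF$ root, with no check performed during these steps. Indeed, at step $i\in\{2,\dots,k\}$ the current unique leaf has preferential-attachment weight $1$ out of total weight $2i-1$, and independently no check occurs with probability $1-p$.

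Conditional on this event, $u_k$ sits at tree-depth exactly $k$, so any check originating from a node inside the subtree rooted at $u_k$ travels at most $k$ edges up and cannot reach $r$. Hence, as long as no ancestor $u_j$ has become $\PF$, the subtree $T^{(u_k)}_t$ rooted at $u_k$ evolves exactly as a subtree in a standard preferential-attachment tree. Via classical martingale or Yule-process embedding arguments for the size of a fixed-node subtree in PA, one can show that conditional on the initial configuration, $|T^{(u_k)}_t| = \Omega(t^{\alpha(k)})$ with conditional probability bounded below by some $q_2(k)>0$, for some explicit exponent $\alpha(k)>0$.

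The next step is to bound the effect of kills along the ancestor chain $u_1,\dots,u_{k-1}$. A new node attaching at depth $\leq k-1$ in the tree may trigger a check that destroys a stretch of the chain, re-rooting the $\PT$ component that contains $u_k$'s descendants at the deepest surviving ancestor. Using that the vulnerable region (nodes at depth $\leq k-1$) has total weight of order $\tilde O(\sqrt{t})$ versus total tree weight $\Theta(t)$, the expected number of such kill events up to time $t$ is $\tilde O(pk\sqrt{t})$, and each removes at most $k$ nodes from the relevant $\PT$ component. I would then argue that even after these kills, the surviving descendants of $u_k$ still form one or more $\PT$ components whose maximum size is comparable to $|T^{(u_k)}_t|$ up to a lower-order correction, giving a polynomial lower bound $\Omega(t^{c/k})$ on the max component size; tuning $c$ appropriately yields the target $\tfrac12 \sqrt{t}^{\,0.35/k}$ and the theorem with $c_{p,k} := q_1(p,k)\cdot q_2(k) > 0$.

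The main technical obstacle is controlling what might be called the \emph{vulnerability cascade}: each successive ancestor kill re-roots the $\PT$ component containing $u_k$'s descendants and exposes the top $k-1$ levels of the new component to destructive checks, so subsequent kills can penetrate deeper and deeper into what was previously the safe subtree of $u_k$. Getting a clean polynomial lower bound of the form $\Omega(t^{c/k})$ presumably requires either a delicate potential-function argument or a coupling with a slowed preferential-attachment process, in which the depth of the current safe region is traded against the rate of destructive checks. This interplay is almost certainly where the specific constant $0.35/(2k)$ in the exponent emerges.
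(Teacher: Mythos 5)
The paper's proof does not rely on shielding a single deep node; it instead amplifies a polynomially small success probability over polynomially many independent trials. Concretely, the paper establishes three ingredients: (i) Lemma~\ref{lem:singlecomponentsize} shows that starting fresh, a single $\PT$ component of size $s$ can form with probability at least $c'_{p,k}/s^k$ --- the key idea being that after building an initial length-$k$ path with no checks, one conditions on all subsequent insertions landing \emph{below} the top $k$ levels, which has probability at least $1/\binom{s}{k}$; (ii) Lemma~\ref{lem:manycomponents} uses the leaves-and-components potential (already known to be a sub-martingale from Lemma~\ref{lem:leavespotentialevol}) together with Theorem~\ref{thm:smallcomps} to show that at time $\sqrt{t}$ there are $\Omega(\sqrt{t}^{\,0.35})$ disjoint $\PT$ components with constant probability; (iii) Lemma~\ref{lem:leafevolution} uses the P\'olya-urn description of a fixed leaf's subtree size to show each such component is likely to receive $\Omega(\sqrt{t})$ new insertions by time $t$. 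These $\Omega(\sqrt{t}^{\,0.35})$ independent trials, each with success probability $\Omega(\sqrt{t}^{\,-0.35})$ by Lemma~\ref{lem:singlecomponentsize}, combine to give a constant overall probability.

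Your approach attempts instead to create a single node $u_k$ at depth $k$ that is deterministically shielded from checks, and then argue that its subtree grows polynomially. This has a genuine gap that you yourself partially identify: the ``vulnerability cascade.'' The shielding of $u_k$ is only temporary. Checks triggered from nodes below $u_1,\dots,u_{k-1}$ can remove $u_k$'s ancestors, and once $u_{k-1}$ becomes $\PF$, a check triggered from a descendant of $u_k$ can remove $u_k$ itself. Worse, even if $u_k$ survives, the subtree rooted at $u_k$ will itself be riddled with internal checks: after each successful kill, the $\PT$ component containing $u_k$'s descendants is re-rooted, its new top $k-1$ levels become vulnerable, and the region of vulnerability moves deeper. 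There is no obvious potential argument that caps the cumulative damage, since each re-rooting creates a fresh opportunity to destroy a large fraction of the surviving component. You sketch that one might ``trade the depth of the safe region against the rate of destructive checks,'' but this is precisely the step that would need a full argument, and it is not clear it can succeed: a single component is not insulated indefinitely, which is why the paper deliberately settles for a polynomially small single-trial success probability (via the strong conditioning in Lemma~\ref{lem:singlecomponentsize}) and recovers a constant via many trials. The secondary claims (the initial $k$-path with probability $\geq (1-p)^k \prod 1/(2i-1)$, the P\'olya-urn growth estimate) are sound, and a version of the latter appears as Lemma~\ref{lem:leafevolution} in the paper, but they sit inside a plan whose central step is unproven and likely unprovable in the form stated.
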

	We begin with a simple lemma, which bounds the probability of a single component being large.
	\begin{lemma} \label{lem:singlecomponentsize}
		For any $k \geq 2$ and $p \in (0,1)$, $$\PP\left(\exists C \in \CPT(t) : |C| = t\right) \geq \frac{c'_{p,k}}{t^k},$$
		for some constant $c'_{p,k}$, depending only on $p$ and $k$.
	\end{lemma}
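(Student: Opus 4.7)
The plan is to exhibit an explicit favorable event $E$ of probability at least $c'_{p,k}\,t^{-k}$ on which the tree $\tree_t$ consists of a single \PT\ component of size at least $t$. In the simple CKP the only $\CF$ node that ever appears is the root $r'$, so it suffices to guarantee that no check ever reaches $r'$ throughout the first $t$ steps: if that holds, $r'$ remains $\CF$ (observably \PT), no other node is ever proclaimed false, and the entire tree is a single \PT\ component.

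I would take $E$ to be the intersection of two phases. In the first phase, covering steps $1,\dots,k$, the new node is required to attach to the current leaf and no check is to be performed, so that after step $k$ the tree is a root-to-leaf path $r'\to v_1\to\cdots\to v_k$. In the second phase, covering steps $k+1,\dots,t$, the new node is required to attach somewhere in the subtree rooted at $v_k$---equivalently, to a node at depth at least $k$. Under these conditions no check in phase one can reach $r'$ (none is performed), and in phase two every new node sits at depth at least $k+1$, so the length-$k$ check it might trigger stops at a \CT\ node of depth at least $1$, never reaching $r'$.

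For phase one, at step $i$ the tree is a path on $i$ nodes whose total preferential-attachment weight is $2i-1$, so the probability of attaching to the leaf and not checking equals $(1-p)/(2i-1)$, giving a $(p,k)$-dependent constant $(1-p)^k/(2k-1)!!$ over $i=1,\dots,k$. For phase two, as long as previous attachments have all landed below $v_k$, after $j$ such attachments this ``safe'' subtree is a tree on $j+1$ nodes contributing preferential-attachment weight $2j+1$, while the untouched unsafe path of $k$ nodes still contributes weight $2k$; so the conditional probability that the next attachment is safe is $(2j+1)/(2k+2j+1)$. Multiplying everything gives
$$
\PP(E)\;\ge\;\frac{(1-p)^k}{(2k-1)!!}\prod_{j=0}^{t-k-1}\frac{2j+1}{2k+2j+1}.
$$
The product over $j$ equals $\Gamma(k+\tfrac12)\,\Gamma(t-k+\tfrac12)/\bigl(\Gamma(\tfrac12)\,\Gamma(t+\tfrac12)\bigr)$, which by Stirling is $\Theta(t^{-k})$, yielding the claimed bound.

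The only delicate point is the weight accounting in phase two: one has to verify that under $E$ the $k$ unsafe nodes' contribution to the total weight really stays frozen at $2k$ (because no new node attaches to any of them and their \PT-children counts therefore do not change), so that the successive conditional probabilities multiply cleanly into a ratio of Gamma functions rather than into something more intricate.
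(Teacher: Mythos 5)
Your proof is correct and follows essentially the same strategy as the paper: condition on the first $k$ steps forming a check-free path, then require every subsequent attachment to land strictly below depth $k-1$ so that no length-$k$ check can reach the $\CF$ root, and bound the resulting product of conditional probabilities by $\Theta(t^{-k})$. Your weight bookkeeping in phase two (keeping track that the unsafe path contributes exactly $2k$ to the preferential-attachment weight while the safe subtree contributes $2j+1$ after $j$ safe insertions) is in fact a touch more careful than the paper's, which uses looser bounds $i/(k+i)$ on the conditional probabilities and sums them into $\binom{t}{k}^{-1}$, but the underlying event and conclusion are identical.
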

	\begin{proof}
		We first look at the process at time $k$, and denote by $c'_{p,k}$ the (positive) probability that no check was made and that the tree looks like a path of length $k$. The rest of the proof continues conditional on this event.
		
		Denote that set of nodes in the path, except the node farthest away from the root, by $P$. Note that the size of the tree can only shrink if a new node is added with a parent in $P$.
		For $t > k$, denote by $v_t$, the newly inserted node at time $k$, and by $P(v_t)$ its parent.
		We have,
		$$\PP\left(P(v_{k+1}) \notin P\right) \geq \frac{1}{2k},$$
		and 
		$$\PP\left(P(v_{k+2}) \notin P| P(v_{k+1}) \notin P\right) \geq \frac{2}{2(k+1)}.$$
		By inducting this argument we see, for fixed $t> k$,
		$$\PP\left(P(v_t),\dots,P(v_{k+2}),P(v_{k+1}) \notin P\right) \geq \frac{1}{2}\prod_{i=1}^{t-k}\frac{i}{(k+i)} =\frac{1}{2}\frac{(t-k)!k!}{t!} = \frac{1}{2 \binom{t}{k}} \geq \frac{1}{t^k}.$$
		The proof is complete since in the above event the tree contains a single component of size $t$.
	\end{proof}
	The next lemma shows that by a given time we have many small components, each one can serve as an origin of a new tree to which we may apply Lemma \ref{lem:singlecomponentsize}.
	\begin{lemma}\label{lem:manycomponents}
		For any $k \geq 2$ and $p \leq 1/4$ there exists a constant $c''_{p,k} > 0$, depending only on $p$ and $k$, such that for any $t > c''_{p,k}$, conditional on the tree surviving,
		$$\PP\left(|\CPT(t)| > c''_{p,k} \cdot t^{0.35}\right) \geq c''_{p,k},$$
		where $|\CPT(t)|$ stands for the number of $\PT$ components at time $t$.
	\end{lemma}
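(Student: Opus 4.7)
My plan is to track the number of PT components $N_t := |\CPT(t)|$ directly and show that it satisfies a positive drift of the form
$$\E[N_{t+1} - N_t \mid \tree_t] \gtrsim c(p,k) \cdot \frac{N_t}{t},$$
for some constant $c(p,k) > 0$. A standard comparison with the continuous ODE $N' = c N / t$ then yields $N_t \gtrsim t^{c(p,k)}$, and the parameter range $p \leq 1/4$, $k \geq 2$ should permit $c(p,k) \geq 0.35$.

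To obtain this drift, I would first quantify the per-step rate of \emph{successful} checks, meaning those reaching the PF/CF boundary of a PT component. A new node $v$ triggers a successful check precisely when a check is performed and its parent $u$ lies at depth at most $k-2$ in its PT component $C$; by the preferential-attachment weighting, $u$ is selected within this top-$(k-1)$ zone of $C$ with probability at least $(2k-3)/(2|C|)$, and summing over all $N_t$ components (using that the total preferential weight in the tree is at most $2t$) gives a per-step rate of successful checks at least $p(2k-3)N_t/(2t)$. Next, I would verify that each successful check produces a non-negative expected change in $N_t$: excising the path from $v$ back to the root of $C$ turns every off-path PT child of the removed nodes into a new component root, with net change $\sum_{i=1}^{d_u+1}\deg_{\PT}(v_i) - d_u - 1$, which I would argue is non-negative in expectation on the branching-rich components that are typical of preferential attachment.

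To upgrade the expectation bound to a positive-probability statement, I would apply Azuma's inequality to a suitable martingale decomposition of $N_t$ (whose step differences are uniformly bounded by a function of $k$), starting from an initial seed $N_{t_0} \geq 1$ attainable with positive probability conditional on survival, in the spirit of the proof of Lemma \ref{lem:univalentgrowth}.

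The main obstacle is making the net-change computation rigorous: degenerate components consisting of long univalent chains can produce locally negative drift when successfully checked, and these configurations must be dominated by the abundant branching-rich ones in order to extract the universal exponent $0.35$. A coupling to a known preferential-attachment forest, whose degree and depth statistics are well-understood, is a plausible route to controlling these contributions; alternatively, one might introduce an auxiliary potential that penalizes univalent concentration and then show it admits favorable drift under the CKP dynamics.
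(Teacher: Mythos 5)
Your proposal takes a genuinely different route from the paper, and unfortunately the route has a real gap that the paper is carefully designed to sidestep. The paper does not track $N_t = |\CPT(t)|$ directly; instead it argues indirectly by combining two separately established facts. First, $\LeavesComponentsPotential(\tree_t)$ is a sub-martingale (Lemma~\ref{lem:leavespotentialevol}), so conditional on survival $\E[\LeavesComponentsPotential(\tree_t)] > (\tfrac12 - 2p)t$; since the potential is also deterministically bounded above by $2t$, the \emph{reverse} Markov inequality yields $\PP(\LeavesComponentsPotential(\tree_t) \geq (\tfrac12-2p)t^{0.9}) \geq (1-4p)/8$. Second, Theorem~\ref{thm:smallcomps} (which requires $p \geq 12/(2k-1)$, accounting for the ``$k$ large enough'' hypothesis hidden in the surrounding theorem) bounds the maximum component size by $O(t^{0.55})$ with probability $1-o(1)$. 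Since $\LeavesComponentsPotential(C) \leq |C|$ for each component, on the intersection of these events one gets $N_t \geq \LeavesComponentsPotential(\tree_t)/\max_C |C| = \Omega(t^{0.35})$. The exponent $0.35$ is thus $0.9 - 0.55$, a byproduct of this decomposition, not a drift coefficient.

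Your direct-drift approach has the obstacle that you yourself flag, and it is fatal rather than merely technical: a successful check that excises a univalent chain strictly \emph{decreases} $N_t$, and there is no obvious uniform reason why the branching-rich configurations dominate at the per-check level. (Indeed, the paper's choice of the leaves-and-components potential, rather than the raw component count, is precisely the ``auxiliary potential that penalizes univalent concentration'' you gesture at; the point is that after making this substitution one no longer needs a multiplicative drift at all.) There is also a quantitative problem: even granting non-negative per-check drift, your successful-check rate $p(2k-3)N_t/(2t)$ gives a drift coefficient of at most roughly $p(2k-3)/2$, which at $p=1/4$, $k=2$ is $1/8 < 0.35$, so the ODE heuristic cannot deliver the claimed exponent across the full parameter range of the lemma. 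Finally, the Azuma step you propose does not straightforwardly apply to a process whose drift is proportional to its current value; one would need to pass to $\log N_t$ and control a variance correction term, which is another nontrivial step the paper's indirect argument avoids entirely.
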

	\begin{proof}
		Consider the leaves and components potential $\LeavesComponentsPotential(\tree_t)$. As seen in Lemma \ref{lem:leavespotentialevol}, $\LeavesComponentsPotential(\tree_t)$ is a sub-martingale and, moreover, conditional on the tree surviving,
		$\E\left[\LeavesComponentsPotential(\tree_t)\right] > (\frac{1}{2} - 2p)t.$ 
		Since $\LeavesComponentsPotential(\tree_t) \leq 2t$ always holds, by the reverse Markov inequality,
		$$\PP\left(\LeavesComponentsPotential(\tree_t) \geq (\frac{1}{2} - 2p)t^{0.9}\right) \geq \frac{\E\left[\LeavesComponentsPotential(\tree_t)\right] - (\frac{1}{2} - 2p)t^{0.9}}{2t - (\frac{1}{2} - 2p)t^{0.9}} \geq   (\frac{1}{2} - 2p)\frac{t-t^{0.9}}{2t + 4pt^{0.9}} > \frac{1-4p}{8},$$
		where the last equality holds when $t$ is large enough. Now, by Theorem \ref{thm:smallcomps}, we know that,
		$$\PP\left(\exists C \in \CPT(t) : |C| \geq \sqrt{10}(k+1)\sqrt{2}^kt^{0.55}\right) \leq \frac{1}{t^{0.1}}.$$
		Thus, as long as $t$ is large enough, with probability $\frac{1-4p}{8}$, $\LeavesComponentsPotential(\tree_t) > \frac{1}{2}(1-4p)t^{0.9}$, and there are no components of size larger than $\sqrt{10}(k+1)\sqrt{2}^kt^{0.55}$.
		Since $|C| \geq \LeavesComponentsPotential(C)$, it follows, by a counting argument, that $|\CPT(t)| \geq \frac{1-4p}{8\sqrt{10}(k+1)\sqrt{2}^k}t^{0.35}$.
	\end{proof}
	Having established the existence of many components the final ingredient of the proof is showing that a single component will not be starved out.
	\begin{lemma} \label{lem:leafevolution}
		Let $\tree_{t_0}$ be a $(p,k)$-simple CKP knowledge state tree at time $t_0$ and let $v$ be a leaf in $\tree_{t_0}$. For any $t > t_0$, suppose that $v$ was not removed from the tree and let $\tree^v_{t}$ be the sub-tree rooted at $v$ at time $t$. Then,
		$$\PP\left(|\tree^v_{t_0^2}| \geq \frac{t_0}{2} \right) \geq \frac{1}{8}.$$
		
	\end{lemma}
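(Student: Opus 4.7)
The plan is to lower bound the growth of $\tree^v$ via a coupling with a classical P\'olya urn and then invoke anti-concentration for the urn's terminal distribution.

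First, I would establish monotonicity of $\tree^v$ under the conditioning that $v$ survives. In the simple CKP every non-root node is born $\CT$, so the only way a node in $\tree^v_{t'}$ can become $\PF$ is through a check triggered by some descendant of $v$ whose $k$-step ancestry reaches back to a $\PF$ ancestor of $v$; such a check would, however, necessarily mark $v$ itself as $\PF$ along the path it traces. Hence, conditional on $v$ surviving to time $t_0^2$, the subtree $\tree^v_{t'}$ loses no nodes for $t_0 \leq t' \leq t_0^2$ and grows purely by preferential attachment.

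Next, I would lower-bound the growth by a P\'olya urn. Writing $S_{t'} = |\tree^v_{t'}|$, the component $\tree^v_{t'}$ has $S_{t'}$ nodes and $S_{t'} - 1$ internal edges, so its preferential-attachment weight is $\sum_{u \in \tree^v_{t'}}(1 + \deg_{\PT}(u)) = 2 S_{t'} - 1$, against a total weight of at most $2t'$ for the whole tree. Therefore the probability that the step-$t'$ insertion lands inside $\tree^v_{t'}$ is at least $(2 S_{t'} - 1)/(2 t')$. I couple $S_{t'}$ from below by a process $\widetilde S_{t'}$ with $\widetilde S_{t_0} = 1$ and growth probability exactly $(2\widetilde S_{t'} - 1)/(2t')$. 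Setting $\widetilde X_{t'} = 2\widetilde S_{t'} - 1$, the pair $(\widetilde X_{t'},\, 2t' - \widetilde X_{t'})$ is a standard two-color P\'olya urn with increment $2$ and initial composition $(1, 2t_0 - 1)$; the martingale $\widetilde X_{t'}/(2t')$ converges almost surely to $L \sim \mathrm{Beta}(1/2,\, t_0 - 1/2)$, and a routine telescoping computation yields $\E[\widetilde S_{t_0^2}] = (t_0 + 1)/2$, slightly above the target $t_0/2$.

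Finally, I would prove anti-concentration. For $t_0$ large, the rescaling $(t_0 - 1/2) L$ converges in distribution to $\mathrm{Gamma}(1/2, 1) = Z^2/2$ with $Z \sim N(0,1)$, so $\PP(L \geq 1/(2 t_0)) \to \PP(|Z| \geq 1) \approx 0.317$. Moreover, by the de Finetti representation of the P\'olya urn, conditional on $L$ the quantity $\widetilde S_{t_0^2} - 1$ is $\mathrm{Binomial}(t_0^2 - t_0, L)$-distributed and therefore concentrates around $(t_0^2 - t_0) L$; choosing the threshold for $L$ slightly above $1/(2 t_0)$ pushes the conditional mean comfortably past $t_0/2$ and yields $\PP(\widetilde S_{t_0^2} \geq t_0/2) \geq 1/8$ for $t_0$ above an absolute constant. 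The remaining finitely many small $t_0$ are handled by direct computation on the finite P\'olya urn. The main obstacle lies precisely in this anti-concentration step: naive tools (Markov, reverse Markov, Paley--Zygmund) are all too weak, because the limit $\mathrm{Beta}(1/2, t_0 - 1/2)$ has standard deviation of the same order as its mean and is skewed with a mode at zero, so one really needs the explicit Beta/Gamma asymptotic to lower-bound $\PP(L \geq \E L)$.
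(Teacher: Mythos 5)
Your proposal follows the same overall route as the paper --- couple the subtree size with a P\'olya urn and extract anti-concentration --- but it is actually \emph{more careful}, and in a way that matters. Under preferential attachment the subtree weight is $2S-1$ and the total weight is $2n-1$, both incrementing by $2$ when a node attaches inside the subtree, so the relevant urn has increment $2$ and limiting fraction $\mathrm{Beta}(1/2,\,t_0-O(1))$, giving $\E\bigl[|\tree^v_{t_0^2}|\bigr]\approx t_0/2$, exactly as you compute. The paper instead couples with an increment-$1$ urn of initial composition $(1,t_0-1)$, whose first-step growth probability is $1/t_0$ --- roughly twice the preferential-attachment value $1/(2t_0-1)$ --- and whose mean at time $t_0^2$ is $\approx t_0-1$. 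That urn stochastically \emph{dominates} $|\tree^v_t|$ rather than being dominated by it; it would be the right urn for a uniform random recursive tree (growth probability $s/n$), not preferential attachment (growth probability $(2s-1)/(2n-1)$). Consequently the paper's urn appears to be off by a factor of two, and its Paley--Zygmund step, which bounds $\PP(X\geq\tfrac12\E X)$, no longer suffices once the mean is corrected to $\approx t_0/2$: the target threshold then sits \emph{at} the mean, where second-moment methods return nothing, as you rightly observe. Your Beta/Gamma argument --- $(t_0-\tfrac12)L$ converges in distribution to $\mathrm{Gamma}(1/2,1)$, so $\PP(L\geq\E L)\to\PP(|Z|\geq 1)\approx 0.32$, followed by the conditional Binomial concentration via de Finetti --- is what is genuinely needed, and it does yield the constant $1/8$.

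Two further remarks. Your explicit monotonicity argument --- that conditional on $v$ surviving, $\tree^v$ loses no nodes, because any check marking a node of $\tree^v$ as $\PF$ must also mark $v$ (since in the simple CKP all ancestors of a $\PF$ node are $\PF$, so $v$ cannot sit above the triggering non-$\CT$ node on the checked path) --- is a genuine tightening of the paper's passing remark ``since other nodes may be removed, we have stochastic dominance,'' which does not actually rule out removals \emph{inside} $\tree^v$. On the other hand there is a small off-by-one in your coupling: with the paper's convention that $\tree_{t'}$ has $t'+1$ nodes, the total weight when no removals occur is $2t'+1>2t'$, so $(2S-1)/(2t')$ slightly \emph{over}-estimates the growth probability; the denominator should be $2t'+1$. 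This only shifts the Beta parameter by $O(1)$ and does not affect the asymptotics or the final constant.
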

	\begin{proof}
		Consider a Poly\'a urn such that at time $0$, there are $t_0-1$ black balls and $1$ white ball, and let $X_t$ stand for the number of white balls at time $t.$
		Observe that, if no nodes were removed from $\tree_t$, then $|\tree^v_{t}|$ has the same law as $X_{t-t_0}$. Thus, conditioned on $v$ not being removed, since other nodes may be removed, we have that $|\tree^v_{t}|$ stochastically dominates $X_{t-t_0}$. It is well known that $X_t$ follows a Beta-Binomial distribution with parameters $t$, $t_0-1$, and $1$.
		
		In particular, for $t = t_0^2 - t_0$ we have
		$$\E[X_t] = \frac{t}{t_0} = t_0 - 1,$$
		and
		$$\E[X_t^2] = \frac{t(2t+t_0-1)}{t_0(t_0+1)} = \frac{(t_0^2 - t_0) (2t_0^2 - t_0 - 1)}{t_0(t_0+1)} \leq 2(t_0-1)^2.$$ 
		Thus, conditioned on $v$ not being removed, by the Paley-Zygmund inequality, again for $t = t_0^2-t_0$
		\begin{align*}
		\PP\left(|\tree^v_{t_0^2}| > \frac{t_0-1}{2}\right)
		&\geq
		\PP\left(X_{t} > \frac{t_0-1}{2}\right) = \PP\left(X_{t} > \frac{1}{2}\E[X_{t}]\right)\\
		&\geq \frac{1}{4}\frac{\E\left[X_{t}\right]^2}{\E\left[X^2_{t}\right]} \geq \frac{1}{8}.
		\end{align*}
		Since the size of $\tree^{v}_{t_0^2}$ must be an integer, we conclude that $\PP\left(|\tree^v_{t_0^2}| \geq t_0/2\right) > 1/8$.
	\end{proof}
	
	We are now ready to prove our lower bound.
	\begin{proof}[Proof of Theorem \ref{thm:bigcomponents}]
		Fix $t$, and denote by $E$ the event that $|\CPT(\sqrt{t})| > c_{p,k}''\sqrt{t}^{0.35}$, where $c_{p,k}''$ is as in Lemma \ref{lem:manycomponents}. In particular, $\PP(E) \geq c_{p,k}''$, and under $E$, there are $c_{p,k}''\sqrt{t}^{0.35}$ different leaves, each one belonging to a different component. 
		Let $v$ such a leaf, and denote by $\tree^v_t$ the sub-tree rooted at $v$ at time $t$. Denote by $\tau$ the (random) number of insertions to $\tree^v_t$. By invoking Lemma \ref{lem:singlecomponentsize} on the sub-tree $\tree_t^v$, up to time $\tau^{\frac{0.35}{k}}$, we have 
		$$\PP\left(\exists t' \leq t\ \exists C\subset \tree^v_{t'} : |C| \geq  \tau^{0.35/k}\right) \geq \frac{c_{p,k}'}{\tau^{0.35}},$$ and by Lemma \ref{lem:leafevolution},
		$$\PP\left(\tau \geq \frac{1}{2}\sqrt{t} 
		\right)\geq \frac{1}{8}.$$
		Thus,
		$$\PP\left(\exists t' \leq t\ \exists C\subset \tree^v_{t'} : |C| \geq \frac{1}{2}\sqrt{t}^{0.35/k}\right) \geq \frac{c_{p,k}'}{8\sqrt{t}^{0.35}}.$$
		Since this is true for every different leaf, and since each sub-tree rooted at a leaf evolves independently, we have under $E$ that
		\begin{align*}
		\PP\left(\exists t' \leq t\ \exists C \in \CPT(t') : |C| \geq \frac{1}{2}\sqrt{t}^{\frac{0.35}{k}} \right) &\geq  1- \left(1-\frac{c_{p,k}'}{12\sqrt{t}^{0.35}} \right)^{|\CPT(\sqrt{t})|} \\
		&\geq  1 - \left(1-\frac{c_{p,k}'}{12\sqrt{t}^{0.35}} \right)^{c_{p,k}''\sqrt{t}^{0.35}}
		\\
		&\geq c_{p,k},
		\end{align*}
		where $c_{p,k} > 0$, depends only on $p$ and $k$.
	\end{proof}
	\section{Proofs for the General Model}\label{sec:generalproofs}
	We now turn to analyze the general model where $\eps > 0$ and new $\CF$ nodes may join the tree over time. The main differences caused by newly added $\CF$ nodes can be summarized by the following two points:
	\begin{itemize}
		\item When a check is performed, a removed node can potentially lie anywhere in the tree, while in the simple model, removed nodes are always roots of $\PT$ components. 
		\item If a new leaf is added to the tree at distance $k$ from a root of a $\PT$ component, it could be the case that the root will not be removed, even if a check is performed. This happens when the root has a $\CF$ descendent. 
	\end{itemize}
	Intuitively, the first point says that more nodes are removed when $\eps > 0$, which may reduce the odds of survival.
	On the other hand, the second point may give the impression that in the general model error effects can be harder to eliminate since roots have increased odds of survival.  
	Below, we address these conflicting views.
	
	\subsection{Error Effect Elimination in the General Model}
	We first investigate regimes where the error effects are completely eliminated.
	The following theorem, once established, will imply Theorem \ref{thm:error_effect_elim_general}.
	\begin{theorem}[Error effect elimination in the general model] \label{thm:generalerrorelim}
		Let $\eps,p \in (0,1)$ and $k \geq 2$ be such that 
		\begin{equation} \label{eq:superrelationelim}
		(1-\eps)\max\left(-\frac{1}{2}(2k-1)p + 3, -\frac{p}{2} + 3(1-p)\right) +2\eps(1-p) < 0.
		\end{equation}
		Then, the error effects in the $(\eps,p,k)$-CKP are completely eliminated. 
	\end{theorem}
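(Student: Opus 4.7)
The plan is to mimic the proof of Theorem \ref{thm:error_effect_elim_simple} for the simple model, using the exponential potential of Definition \ref{def:exponential_potential} but restricted to the $\PT$ components lying inside the subtree $\tree_t^u$ of a fixed $\CF$ node $u$. (One cannot expect the total potential summed over all components to be a super-martingale because the healthy root component may grow unboundedly, but this growth is irrelevant to elimination.) Writing the restricted sum as $\Phi_u(t)$, my aim is to show that $\Phi_u(t)$ is a non-negative super-martingale with integer-valued increments of magnitude at least $1$ whenever it is nonzero. The martingale convergence theorem, as used in the proof of Theorem \ref{thm:error_effect_elim_simple}, then forces $\Phi_u(t) \to 0$ almost surely, and since $\Phi_u(t)$ upper-bounds the number of $\PT$ descendants of $u$, the error effect from $u$ is eliminated; a countable union over all $\CF$ nodes completes the argument.

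To establish the super-martingale contraction, I would condition on the component $C$ into which the newly added node $v$ attaches and split into cases by the label of $v$. With probability $1-\eps$ the node $v$ is $\CT$, and the argument of Lemma \ref{lem:exponential_potential_contracts} applies essentially verbatim: combining its Case I ($|C|\geq k$) and Case II ($|C|<k$) bounds under a single $\max$ yields
\[
\E\!\left[\sum_{C'\subset C} \ExponentialPotential(C') - \ExponentialPotential(C)\,\Big|\,\tree_t,\,v\text{ is }\CT\right] \le \frac{\ExponentialPotential(C)}{D}\cdot \max\!\left(-\tfrac{1}{2}(2k-1)p+3,\ -\tfrac{p}{2}+3(1-p)\right).
\]
With probability $\eps$ the node $v$ is $\CF$: if a check is performed (probability $p$) then $v_0=v$ is itself the first non-$\CT$ node on the path, so only $v$ is relabeled $\PF$ and the rest of the tree is unchanged; if no check is performed (probability $1-p$) then $v$ joins $C$ as a $\PT$ leaf and contributes to the potential exactly as in the no-check $\CT$ case. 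Averaging and combining with the weights $1-\eps$ and $\eps$ produces an inequality of the shape appearing in \eqref{eq:superrelationelim}, and the stated hypothesis then guarantees a strictly negative drift.

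The main technical obstacle I foresee is carefully verifying that the bounds of Lemma \ref{lem:exponential_potential_contracts} remain valid when $C$ is allowed to contain internal $\CF$ nodes. In that regime, a check may catch an internal $\CF$ partway up the path of length $k$, splitting $C$ more intricately than the ``root caught, depths halve'' picture of the simple model. Fortunately, any such additional splitting only further decreases the exponential potential: the caught prefix is removed from the tree, and the surviving $\PT$ nodes end up closer to the roots of the resulting smaller components. Hence the upper bound used in Lemma \ref{lem:exponential_potential_contracts} is preserved. A secondary subtlety is that $u$ itself may still be $\PT$ and sit inside a larger $\PT$ component extending above $u$; I would handle this by measuring depths and summing contributions only within $\tree_t^u$, treating the edge from $u$ to its parent as a virtual boundary, so that components intersecting $\tree_t^u$ behave like the non-root components analyzed in Lemma \ref{lem:exponential_potential_contracts}.
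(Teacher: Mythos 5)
Your proposal has a genuine gap that sits exactly at the spot you flagged as ``the main technical obstacle'' and then waved away. If you restrict the exponential potential to the $\PT$ components lying inside $\tree_t^u$ for a single fixed $\CF$ node $u$, those components can contain \emph{internal} $\CF$ nodes that arrived after $u$. When a check is performed from a new node $v$ and the first non-$\CT$ node encountered is such an internal $\CF$ node $w$ (rather than the component root), the part of the component strictly above $w$ keeps all its depths unchanged. Your claim that ``the surviving $\PT$ nodes end up closer to the roots of the resulting smaller components'' is therefore false for that top piece, and the halving bound $\sum_{C'\subset C}\ExponentialPotential(C') < \tfrac12\ExponentialPotential(C)$ from Case~I of Lemma~\ref{lem:exponential_potential_contracts} does \emph{not} carry over. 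Without the factor-$\tfrac12$ drop on the event $\ECR$, the computation in \eqref{eq:exppotsuperfirst} degenerates: the best you can say is $\E[\Delta\mid\ECR]\le 0$, which leaves only the positive $\tfrac{3}{D}\ExponentialPotential(C)$ drift from $\neg\ECR$, and the super-martingale claim collapses. The paper avoids this precisely by choosing a finer decomposition: it roots subtrees at \emph{minimal} $\False$ nodes and strips away all descendants of later $\CF$ nodes, defining $\widetilde{\tree}^u_t:=\tree^u_t\setminus\CF(u,t)$. By construction each $\widetilde{\tree}^u_t$ contains at most one $\CF$ node, and it is its root, so ``check within distance $k$ of the root'' really does force the root to be caught, and Lemma~\ref{lem:exponential_potential_contracts}'s halving argument applies intact.

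There is a second, smaller discrepancy. You treat a newly added $\CF$ node (with no check) as ``joining $C$ as a $\PT$ leaf, contributing exactly as in the no-check $\CT$ case,'' i.e.\ at depth $|p(v)|+1$ with contribution $2^{|p(v)|+1}+2^{|p(v)|}=3\cdot 2^{|p(v)|}$. That produces a term of order $3\eps(1-p)\cdot\ExponentialPotential/D$, not the $2\eps(1-p)$ appearing in \eqref{eq:superrelationelim}. The paper instead makes the new $\CF$ node the root of its own fresh subtree $\widetilde{\tree}^v_t$ with potential $1$, so the only contribution inside the old subtree is the parent's degree increment $2^{|p(v)|}$; together with $\ExponentialPotential(\widetilde{\tree}^u_t)\ge D$ this yields the $1+\ExponentialPotential/D\le 2\cdot\ExponentialPotential/D$ bound and hence the factor $2$ in the theorem. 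So beyond the super-martingale gap, your accounting would in any case prove a strictly weaker quantitative statement than the one being claimed.
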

	To see how Theorem \ref{thm:generalerrorelim} implies Theorem \ref{thm:error_effect_elim_general}, note that we can always choose $k$ large enough so that the relation in \ref{eq:superrelationelim} becomes,
	$$ (1-\eps)\left(-\frac{p}{2} + 3(1-p)\right) +2\eps(1-p) < 0.$$
	Now, since $\eps < 1$, the above inequality is satisfied for any $1 \geq p > \frac{4\eps-6}{5\eps-7},$ which yields Theorem \ref{thm:error_effect_elim_general}.
	
	The proof of Theorem \ref{thm:generalerrorelim} goes by extending the exponential potential to the general model. Before giving the definition, we introduce some new concepts.
	
	We shall say that a \False\ node $u$ is minimal, at time $t$, if $u$ is $\PT$, and it is either $\CF$, or its parent is $\PF$. In other words, $u$ is an active node, which is the root of a \False\ sub-component.
	Now, suppose that, at time $t$, $u$ is a minimal \False\ node in the $(\eps,p,k)$-CKP process. Denote by $\tree^u_t$ the tree rooted at $u$, and by $\CF(u,t)$ the set of all $\CF$ descendants of $u$,
	\begin{equation} \label{eq:CFdesc}
	\CF(u,t) =\{v\in \tree^u_t| \text{ there exists } u \neq w\in \tree^u_t \text{ such that } w \text{ is } \CF \text{ and } w \text{ is an ancestor of } v\}.
	\end{equation}
	We shall be interested in the sub-graph $\widetilde{\tree}^u_t:= \tree_t^u \setminus \CF(u,t)$. Before proceeding, to gain some intuition, observe that if $\{u_i\}_{i=1}^m$ is the collection of all minimal $\False$ nodes at some given time $t \geq 0$, then the sub-trees $\{\widetilde{\tree}_t^{u_i}\}_{i=1}^m$ partition all $\False$, with a $\PT$ label, into mutually disjoint trees, each one containing at most one $\CF$ node.
	
	Based on this partition we generalize the exponential potential to the general model.
	\begin{definition}[Exponential potential in the general model]
		Consider the $(\eps,p,k)$-CKP model at time $t \geq 0$, represented by $\tree_t$ and let $\mathcal{MF}$ be the set of all minimal \False\ nodes, at time $t$. For $u \in \mathcal{MF}$ and $v\in \widetilde{\tree}^u_t$,  the \emph{depth} $|v|$ is defined as the length (number of edges) of the shortest path between $v$ and $u$. The \emph{degree} is given by $d(v) := 1 + \deg_\PT(v)$ (note that $\deg_\PT(v)$ may depend on nodes outside of $\widetilde{\tree}^u_t$). 
		
		Define the \emph{exponential potential} of $\widetilde{\tree}^u_t$ by 
		$$
		\ExponentialPotential(\widetilde{\tree}^u_t) := \sum_{v \in \widetilde{\tree}^u_t} d(v) \cdot 2^{|v|},
		$$
		and 
		the potential of the CKP process, at time $t$, is defined as the sum over minimal $\False$ nodes
		$$\ExponentialPotential(\tree_t) := \sum\limits_{u \in \mathcal{MF}} \ExponentialPotential(\widetilde{\tree}^u_t).$$
	\end{definition}
	Observe that when $\eps = 0$, and $u$ is a minimal \False\ node at time $t$, the definition of $\widetilde{\tree}^u_t$ coincides with the definition of a $\PT$ components. Hence, the above definition is indeed a generalization of the exponential potential. With this extension, we may now prove Theorem \ref{thm:generalerrorelim}
	\begin{proof} [Proof of Theorem \ref{thm:generalerrorelim}]
		Consider the process $\{\ExponentialPotential(\tree_t)\}_{t\geq 0}$. Exactly like in the proof of Theorem \ref{thm:error_effect_elim_simple}, it will be enough to show that when $\ExponentialPotential(\tree_t) >0$, the process is a super-martingale. The final result will follow by applying the martingale convergence theorem.
		
		We thus focus on establishing that $\ExponentialPotential(\tree_t)$ is a super-martingale. Let $v$ be the new node added at time $t$. Denote $p(v)$ to be its parent, and assume for now that $v$ is $\CT$. Since we care about error elimination, it is fine to assume that the original root is $\CF$, and hence all nodes, including $p(v)$, are \False. Therefore, there exists a minimal \False\ node $u$, such that $v$ is added to $\widetilde{\tree}_t^u$. Observe that when a check is performed, only $u$ (and the path to $v$) can be removed. Thus, the analysis of Lemma \ref{lem:exponential_potential_contracts} applies and we deduce, for $D = \sum_{w \in \widetilde{T}^u_t} d(w)$, that
		$$\E\left[\left(\ExponentialPotential(\tree_{t+1}) - \ExponentialPotential(\tree_{t})\right){\bf1}_{\{v \text { is } \CT\}}|\tree_t\right]\leq (1-\eps) \frac{\ExponentialPotential(\widetilde{T}^u_t)}{D} \cdot \max\left(\left(-\frac{1}{2}(2k-1)p + 3\right), -\frac{p}{2} + 3(1-p)\right).$$
		The first term comes from \eqref{eq:exppotsuperfirst} and by noting $D \leq 
		\ExponentialPotential(\widetilde{T}^u_t)$, and the second term follows from \eqref{eq:exppotsmallcomps}. (Observe that the definitions of $d(v)$ and $D$ are slightly different from the simple CKP definitions used in Lemma \ref{lem:exponential_potential_contracts}; still, the same arguments as in the proof of that lemma follow, word for word, in the current setting.)

		Suppose now that $v$ is $\CF$ and that no check is performed (otherwise, $v$ would just remove itself when added). Here, $v$ creates a new component  $\widetilde{\tree}^v_t$, with $\ExponentialPotential(\widetilde{\tree}^v_t) = 1$. The insertion of $v$ also increases the potential of the component containing $\widetilde{T}^u_t$ by $2^{|p(v)|}$, since $v$ is $\PT$, meaning that its parent's degree increases by one. The expected increase in potential of $\widetilde{T}^u_t$ conditioning on this case is thus $$ \sum_{w\in\widetilde{T}^u_t} q(w) 2^{|w|} = 
		\frac{1}{D}\sum_{w\in\widetilde{T}^u_t} d(w) 2^{|w|} = 
		\frac{\ExponentialPotential(\widetilde{T}^u_t)}{D}.$$  
		
		Summarizing both of these contributions, we have
		$$
		\E\left[\left(\ExponentialPotential(\tree_{t+1}) - \ExponentialPotential(\tree_{t})\right){\bf1}_{\{v \text { is } \CF \text{ with no check}\}}|\tree_t\right] \leq \eps(1-p)\left(1+\frac{\ExponentialPotential(\widetilde{T}^u_t)}{D}\right) \leq 2\eps(1-p)\cdot \frac{\ExponentialPotential(\widetilde{T}^u_t)}{D}.$$
		
		We conclude that when 
		\begin{equation*} 
		(1-\eps)\max\left(-\frac{1}{2}(2k-1)p + 3, -\frac{p}{2} + 3(1-p)\right) +2\eps(1-p) < 0,
		\end{equation*}
		$\{\ExponentialPotential(\tree_t)\}_{t\geq 0}$ is a super-martingale, which concludes the proof.
	\end{proof}
	With Theorem \ref{thm:generalerrorelim}, we now identify a regime of parameters in which \True\ nodes are much more likely than \False\ nodes.
	\begin{theorem}
		Let $\eps, p \in (0,1)$ and $k \geq 2$ satisfy \eqref{eq:superrelationelim}.
		Further, Let $\tree_t$ be an $(\eps,p,k)$-CKP process and let $T_t$ and $F_t$ respectively stand for the subset of \True\ and \False\ nodes, labeled as $\PT$, in $\tree_t$, at time $t$. Then,
		$$\eps(1-p)\E\left[|T_t|\right] \geq (1-\eps)\E\left[|F_t|\right],$$
		for every $t\geq 0.$ In other words, the process is $O(\eps(1-p))$-highly reliable.
	\end{theorem}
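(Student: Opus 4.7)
The plan is to control $|F_t|$ by the exponential potential $\ExponentialPotential(\tree_t)$ of the general model and then to show that the expected one-step growth of $\ExponentialPotential$ is at most $\frac{\eps(1-p)}{1-\eps}$ times the expected growth of $|T_t|$. A pointwise bound is built in: every $\False$ node carrying a $\PT$ label lies in some $\widetilde{\tree}^u_t$ and contributes $d(v)\cdot 2^{|v|}\ge 1$ to the potential, so $|F_t|\le \ExponentialPotential(\tree_t)$ deterministically.

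The core step is a one-step conditional expectation estimate, split on whether the parent $p(v_{t+1})$ of the newly inserted node is $\True$ or $\False$. In the $\True$-parent case, the entire upward path from $v_{t+1}$ consists of $\CT$ nodes, so a $\CT$ birth leaves $\ExponentialPotential$ untouched, a $\CF$ birth with a check only relabels $v_{t+1}$ itself to $\PF$ and again leaves $\ExponentialPotential$ unchanged, and a $\CF$ birth without a check (probability $\eps(1-p)$) creates a fresh singleton minimal $\False$ component of potential exactly~$1$; hence the conditional expected change equals $\eps(1-p)$. In the $\False$-parent case, the parent lies in some $\widetilde{\tree}^u_t$ and the computation already carried out in the proof of Theorem~\ref{thm:generalerrorelim} applies verbatim, yielding under~\eqref{eq:superrelationelim} that the conditional expected change is at most $0$. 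Combining the two cases,
\[
\E\!\left[\ExponentialPotential(\tree_{t+1}) - \ExponentialPotential(\tree_t) \,\big|\, \tree_t\right] \,\le\, \eps(1-p)\cdot \PP\!\left[p(v_{t+1})\text{ is }\True \,\big|\, \tree_t\right].
\]

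Next I would compare this with the growth of $|T_t|$. Because $\True$ nodes are absorbing --- a check can only relabel the initial segment of an upward path up to the first non-$\CT$ ancestor, and no such ancestor exists along an all-$\CT$ path to the root --- a $\True$ node is created precisely when $v_{t+1}$ is $\CT$ and its parent is $\True$, giving $\E[|T_{t+1}|-|T_t|\mid \tree_t]=(1-\eps)\cdot \PP[p(v_{t+1})\text{ is }\True\mid \tree_t]$. Thus $\E[\ExponentialPotential(\tree_{t+1})-\ExponentialPotential(\tree_t)\mid \tree_t]\le \tfrac{\eps(1-p)}{1-\eps}\E[|T_{t+1}|-|T_t|\mid \tree_t]$. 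Telescoping from $0$ to $t-1$ and using $\ExponentialPotential(\tree_0)=0$ yields $\E[\ExponentialPotential(\tree_t)]\le \tfrac{\eps(1-p)}{1-\eps}\E[|T_t|]$, which combined with $|F_t|\le \ExponentialPotential(\tree_t)$ rearranges to $(1-\eps)\E[|F_t|]\le \eps(1-p)\E[|T_t|]$ as claimed.

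The only real subtlety is the $\True$-parent case: one has to verify that firing a check along a path consisting entirely of $\CT$ nodes produces no relabellings (the ``smallest non-$\CT$ index'' simply fails to exist), so that the sole potential-generating event under a $\True$ parent is the birth of a singleton $\False$ component from an unchecked $\CF$ child. All remaining steps are direct reuses of the super-martingale calculation from Theorem~\ref{thm:generalerrorelim} together with a short telescoping argument.
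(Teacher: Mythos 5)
Your proposal is correct and is essentially the paper's argument: the paper packages the same comparison into a single potential $\Phi(\tree_t) = \tfrac{\eps(1-p)}{1-\eps}|T_t| - \ExponentialPotential(\tree_t)$, shows $\Phi$ is a sub-martingale via the identical case split on whether the new node's parent is $\True$ or $\False$, and reaches the same inequality by $\E[\Phi(\tree_t)]\ge\Phi(\tree_0)$ together with $|F_t|\le\ExponentialPotential(\tree_t)$. Your formulation as a drift comparison between $\ExponentialPotential$ and $|T_t|$ followed by telescoping is algebraically the same computation, not a different route.
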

	\begin{proof}
		Fix $t \geq 0$, and let $\mathcal{MF}(t)$ be the set of all minimal \False\ nodes, at time $t$. Consider the potential
		$$\Phi(\tree_t) = \frac{\eps(1-p)}{1-\eps}|T_t| - \sum\limits_{\tilde{\tree}^u_t: u\in \mathcal{MF}(t)}\ExponentialPotential(\tilde{\tree}^u_t).$$
		
		Now, let $v$ be the newly added node at time $t$, and let $p(v)$ be its parent. If $p(v)$ is a \False\ node, and if $\eps,p$, and $k$ satisfy \eqref{eq:superrelationelim}, then by the calculations in the proof of Theorem \ref{thm:generalerrorelim} we know that
		$$\E\left[\Phi(\tree_{t+1})- \Phi(\tree_t)|\tree_t\right] = \E\left[\sum\limits_{\tilde{\tree}^u_t: u\in \mathcal{MF}(t)}\ExponentialPotential(\tilde{\tree}^u_t) - \sum\limits_{\tilde{\tree}^u_{t+1}: u\in \mathcal{MF}(t+1)}\ExponentialPotential(\tilde{\tree}^u_{t+1})\right]>0.$$
		Otherwise, $p(v)$ is a \True\ node. In this case, $v$ is \CT\, with probability $1-\eps$, and then $|T_{t+1}| = |T_{t}|+1$, and $\Phi(\tree_{t+1}) -\Phi(\tree_t)=\frac{\eps}{1-\eps} $. The other possibility is that, with probability $\eps(1-p)$, $v$ is \CF\ and does no check, so $\mathcal{MF}(t+1) = \mathcal{MF}(t) \cup \{v\}$, and
		$\Phi(\tree_{t+1}) -\Phi(\tree_t) = -\ExponentialPotential(\{v\}) = -1.$ Thus, when $p(v)$ is \True,
		$$\E\left[\Phi(\tree_{t+1})- \Phi(\tree_t)|\tree_t\right] = (1-\eps)\frac{\eps(1-p)}{1-\eps} - \eps(1-p) =0.$$
		Altogether, we have shown that $\Phi(\tree_t)$ is a sub-martingale, so
		$\E\left[\Phi(\tree_t)\right] \geq \E\left[\Phi(\tree_0)\right] > 0.$
		Finally,
		$$\frac{\eps(1-p)}{1-\eps}|T_t| - |F_t| \geq \Phi(\tree_t) \implies \frac{\eps(1-p)}{1-\eps}\E[|T_t|] - \E[|F_t|] \geq 0 \implies \eps(1-p)\E[|T_t|]\geq (1-\eps)\E[|F_t|],$$
		which finishes the proof.
	\end{proof}
	
	\subsection{Survival in the General Model}
	As in the previous section, we begin by stating a more detailed version of Theorem \ref{thm:generalerrorsurvival}. 
	\begin{theorem}[Error effect survival in the general model] \label{thm:generalsurvival}
		For every $p,\eps \in [0,1]$ which satisfy
		$$\frac{1-p}{2}- 3(1-\eps)p > 0,$$
		and every $1 < k \leq \infty$, the error effects in the $(\eps,p,k)$-CKP survives.
	\end{theorem}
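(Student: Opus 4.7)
The plan is to adapt the sub-martingale argument used in the proof of Theorem \ref{thm:simpleerrorsurvival}, applying the leaves-and-components potential $\LeavesComponentsPotential$ from Definition \ref{def:comps_leaves_potential} to the sub-graph of $\False$ $\PT$ nodes in the general model. The overall strategy is: reduce the problem to a convenient sub-process in which all nodes are $\False$, show that $\LeavesComponentsPotential$ is a sub-martingale with bounded differences and strictly positive drift there, and then conclude via Lemma \ref{lem:positivesubmartingale}.

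For the reduction, I would condition on the positive-probability event (probability at least $\eps(1-p) > 0$) that the very first node added to the tree is $\CF$ and is not relabelled $\PF$ by its own birth check; call this node $u_0$. All nodes of the subtree $\tree^{u_0}_t$ are then $\False$, and restricted to its own internal time scale (the subsequence of external steps at which the newly-chosen parent lies in $\tree^{u_0}_t$), this subtree evolves exactly like an $(\eps,p,k)$-CKP initialized at the single $\CF$ root $u_0$: preferential-attachment weights depend only on $\PT$-degrees within the subtree, and a checking path originating below $u_0$ cannot continue above it because $u_0$ is not $\CT$. Survival of the restricted process (there always being at least one $\PT$ node in $\tree^{u_0}_t$) is precisely the error-effect survival condition.

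Next I would carry out a drift analysis on $\LeavesComponentsPotential$ for this internal process, following the blueprint of Lemma \ref{lem:leavespotentialevol}. The new wrinkles relative to the simple case are: when the new node $v$ is internally $\CF$, it contributes nothing destructive to existing components, since if a check fires it terminates at $v$ itself ($j=0$) and leaves the current $\PT$ components intact; when $v$ is internally $\CT$ and a check fires, the first non-$\CT$ ancestor on the path may now be an \emph{internal} $\CF$ of the $\PT$ component $C$ containing the parent, rather than only the $\PF$ boundary parent of $C$'s root as in the simple setting. A case analysis on the fragmentation of $C$ produced by removing $v_0,\ldots,v_j$, reusing the leaf-counting observation from the simple proof (that removing a path of internal nodes can only create new components without losing leaves), yields $|\Delta_t| \leq O(1)$ together with a one-step potential-loss bound of a small constant. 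Tracking the four cases and using $\alpha < 1/2$ for the probability that the parent is a leaf of its component gives
\[
\E[\Delta_t \mid \tree_t] \;\geq\; \tfrac{1-p}{2} - 3(1-\eps)p,
\]
which is strictly positive by the theorem's hypothesis.

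With bounded differences and a uniform positive drift in hand, the conclusion follows the template from the end of the proof of Theorem \ref{thm:simpleerrorsurvival}: wait in internal time until the restricted potential first reaches a sufficiently large constant (finite-time reachable with positive probability by the positive drift), and then apply Lemma \ref{lem:positivesubmartingale} to the shifted process to conclude that the potential remains strictly positive forever with positive probability. The main obstacle I expect is precisely the destructive-check case of the drift analysis: in the simple model the path removed by a check is always anchored at $C$'s root via the connected $\PF$ region (the Observation in the proof of Lemma \ref{lem:leavespotentialevol}), a fact that fails in the general model because new $\CF$ nodes can appear in the middle of a component. Consequently, the combinatorial book-keeping of leaves and components when an internal $\CF$ is excised must be done more carefully to retain an $O(1)$ one-step loss bound.
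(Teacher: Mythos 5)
Your proposal takes a genuinely different route from the paper. The paper's proof introduces a finer decomposition $\widetilde{\tree}^u_t = \tree^u_t \setminus \CF(u,t)$, which excises the subtree hanging off every internal $\CF$ node so that each piece contains at most one $\CF$ node, and then replaces the leaves-and-components potential by one that reweights each leaf $w$ by $\frac{1}{\deg_\CF(w)+1}$ (Definition \ref{def:adapted_comps_leaves_potential}, Lemma \ref{lem:adaptedleavespotentialevol}). That modification is forced on them because a leaf of a $\widetilde{\tree}$-component may have $\PT$ (namely $\CF$) children in the excised part, so its preferential-attachment weight can be much larger than $1$ and the $\alpha < \tfrac12$ estimate from the simple model breaks. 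Your framework sidesteps this cleanly: working directly with the $\PT$ components of $\tree^{u_0}_t$ means the $\CF$ nodes live \emph{inside} the components rather than being the roots of newly split-off pieces, so every leaf of a $\PT$ component has $\deg_\PT = 0$ and weight exactly $1$, and the $\alpha < \tfrac12$ argument carries over verbatim. The cost is precisely the one you flag: a check may now excise an internal $\CF$ node of a component, but the resulting bookkeeping does close — removing a path of internal, non-root nodes can only spawn new components (each contributing $+1$) and can only \emph{gain} a leaf at the newly-exposed parent $v_{j+1}$, so the only losses are the freshly added leaf $v$ and (if $p(v)$ was a leaf and the root is wiped) the component's $+1$, giving $\Delta_t \geq -2$ just as in the simple model. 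Both proofs then feed the same drift into the same submartingale lemma (Lemma \ref{lem:positivesubmartingale}). Your route is arguably simpler in that it reuses Definition \ref{def:comps_leaves_potential} unchanged, avoids the $\widetilde{\tree}$ apparatus, and if carried out carefully even yields a slightly better coefficient (a $-2(1-\eps)p$ loss term rather than $-3(1-\eps)p$); one small caveat worth noting — shared by the paper — is that fragmenting a component through a high-degree node makes the upward differences unbounded, so the $|\Delta_t| \le O(1)$ claim (and hence the direct appeal to Azuma in Lemma \ref{lem:positivesubmartingale}) needs the one-sided variant of that concentration step.
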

	Let us introduce some definitions to prepare for the proof of Theorem \ref{thm:generalsurvival}.
	Suppose that $u$ is a \CF\ node, in the $(\eps,p,k)$-CKP process, such that no ancestor of $u$ is \CF. Such a node will exist with probability $1$, for example, if it is the first $\CF$ node in the tree. As in the proof of Theorem \ref{thm:generalsurvival} denote by  $\tree^u_t$ the tree rooted at $u$ and by $\CF(u,t)$ the set of all $\CF$ descendants of $u$, as in \eqref{eq:CFdesc}.
	As before, we define $\widetilde{\tree}^u_t:= \tree_t^u \setminus \CF(u,t)$, but we now care about the dynamics of the forest $\widetilde{\tree}^u_t$ when $u$ stays fixed.
	Observe that $\widetilde{\tree}^u_t$ evolves like a (lazy) random tree with a different law than the preferential attachment tree. The discrepancy is caused by nodes in $\widetilde{\tree}_t^u$ which have descendants in $\CF(u,t)$, and hence have an increased probability, when compared to a preferential attachment tree, to spawn new nodes.
	We denote the $\PT$\ components of $\widetilde{\tree}_t^u$ by $\widetilde{\CPT}(t)$ and for $v \in \widetilde{\tree}_t^u$, we define $\deg_\CF(v)$ as the number of children of $v$ which lie in $\CF(u,t)$.
	
	To prove Theorem \ref{thm:generalsurvival}, we make the following generalization of the leaves and components potential, to the general model.
	
	\begin{definition}[Leaves and components potential in the general model]\label{def:adapted_comps_leaves_potential}
		A node $v$ in a component $C \in \widetilde{\CPT}(t)$ is considered a leaf if it does not have any descendent in $C$ (we note that a leaf is allowed to have descendants in $\CF(u,t)$). 
		
		For a given component $C \in \widetilde\CPT(t)$, the \emph{leaves and components potential} restricted to $C$ is $1$ if $|C|=1$, and otherwise it is, 
		$$1 + \sum\limits_{v \in C \text{ is a leaf}}\frac{1}{\deg_\CF(v) + 1}.$$ 
		For the sub-tree $\widetilde{\tree}_t^u$, the leaves and components potential $\AdapatedLeavesComponentsPotential(\widetilde{\tree}_t^u)$ is the sum of potentials of all $C \in \widetilde\CPT(t)$.
		Finally, we define the leaves and components potential for the entire tree $\tree_t^u$ recursively in the following way. Let $\{u_i\}_{i=1}^m$ the set of all nodes in $\CF(u,t)$ with parent in $\widetilde{\tree}_t^u$, then,
		$$\AdapatedLeavesComponentsPotential(\tree_t^u)=\AdapatedLeavesComponentsPotential(\widetilde{\tree}_t^u) + \sum\limits_{i=1}^m\AdapatedLeavesComponentsPotential(\widetilde{\tree}_t^{u_i}).$$
	\end{definition}
	We remark that if $v$ is a leaf according to the above definition in $\widetilde{\tree}^u_t$, then it has no $\CF$ children (by definition of $\widetilde{\tree}^u_t$) and no $\PT$ children (as a leaf), that is, $\deg_\CF(v) = \deg_\PT(v) = 0$. Thus, if $\eps = 0$, the above definition agrees with the definition of $\LeavesComponentsPotential$ from the simple model.
	
	We now prove a generalization of Lemma \ref{lem:leavespotentialevol} to the general model.
	\begin{lemma} \label{lem:adaptedleavespotentialevol}
		Consider the process $(\widetilde{\tree}^u_t)_{t \geq 0}$, generated from the $(\eps, p,k)$-process, let $t \in \N$, and suppose that $\AdapatedLeavesComponentsPotential(\widetilde{\tree}_t^u) > 0$. Denote by $\Delta_t = \AdapatedLeavesComponentsPotential(\widetilde{\tree}_{t+1}^u) - \AdapatedLeavesComponentsPotential(\widetilde{\tree}_t^u)$ the change in the leaves and components potential at time $t$. Then $\Delta_t \geq  -2$ always holds, and further 
		$$\E[\Delta_t | \widetilde{\tree}_t^u] \geq \frac{1-p}{2}- 3(1-\eps)p.$$ 
	\end{lemma}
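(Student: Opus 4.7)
The plan is to follow the case analysis in the proof of Lemma~\ref{lem:leavespotentialevol}, extending it to account for the additional leaf-weights $\frac{1}{\deg_{\CF}(v)+1}$ introduced in Definition~\ref{def:adapted_comps_leaves_potential} and for the possibility that the newly-inserted node $v$ is itself $\CF$. I begin by observing that $\widetilde{\tree}^u_t$ can change at time $t+1$ in only two ways: either (i) the new node $v$ is attached with parent $p(v) \in \widetilde{\tree}^u_t$, or (ii) a check performed at a node outside $\widetilde{\tree}^u_t$ propagates a $\PF$-relabeling into $\widetilde{\tree}^u_t$. For case~(ii), the observation inherited from the simple-model proof that removing a non-root, non-leaf node on the checked path only splits a component into pieces without losing any leaf-weight shows the change in $\AdapatedLeavesComponentsPotential$ is non-negative. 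I therefore condition on case~(i) and split into four subcases.

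\textbf{Case 1} ($v$ is $\CT$, no check, probability $(1-\eps)(1-p)$). The node $v$ joins the $\PT$ component $C$ containing $p(v)$ as a fresh leaf of weight $1$. If $p(v)$ was a leaf of $C$ it loses its weight $\tfrac{1}{\deg_{\CF}(p(v))+1} \le 1$; otherwise only the new weight-$1$ leaf is added. Either way the net change is at least $1 - \tfrac{1}{\deg_{\CF}(p(v))+1} \ge 0$. The same preferential-attachment computation as in Lemma~\ref{lem:leavespotentialevol} (the sum of $1 + \deg_{\PT}$ over leaves of $C$ is strictly less than half the corresponding sum over all of $C$) gives $\PP(p(v) \text{ is a leaf of } C) < 1/2$, so the conditional expected change here is at least $\tfrac{1}{2}$.

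\textbf{Case 2} ($v$ is $\CT$, check, probability $(1-\eps)p$). The check relabels the segment from $v$ up to the first non-$\CT$ ancestor as $\PF$. As in the simple model, the only removals that can decrease the potential are that of $p(v)$ (if it was a leaf) and that of the component root; each contributes at most $1$ in absolute value, since leaf weights lie in $(0,1]$ and the ``$+1$'' contribution is lost only once. Intermediate path-removals only split the component and so can only increase the potential. Thus $\Delta_t \ge -2$ always in this case.

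\textbf{Case 3} ($v$ is $\CF$, no check, probability $\eps(1-p)$). The node $v$ enters $\CF(u, t+1)$ and so lies outside $\widetilde{\tree}^u_{t+1}$. The component structure of $\widetilde{\CPT}$ is unchanged, but $\deg_{\CF}(p(v))$ increases by one. This affects the potential only when $p(v)$ is a leaf of its component, contributing $\frac{1}{\deg_{\CF}(p(v))+2} - \frac{1}{\deg_{\CF}(p(v))+1} \ge -\tfrac{1}{2}$.

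\textbf{Case 4} ($v$ is $\CF$, check, probability $\eps p$). The check begins at the $\CF$ node $v$ and immediately relabels $v$ itself as $\PF$; no other node is modified, so $\Delta_t = 0$.

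The uniform bound $|\Delta_t| \le 2$ follows directly from the four cases above. Summing the conditional contributions weighted by the corresponding probabilities and simplifying (absorbing a small $\eps$-dependent loss from Cases~2 and~3 into the $-3(1-\eps)p$ error term) yields the claimed inequality $\E[\Delta_t \mid \widetilde{\tree}^u_t] \ge \frac{1-p}{2} - 3(1-\eps)p$. The main technical obstacle is the bookkeeping in Case~2, namely verifying that intermediate-node removals on the checked path never decrease the weighted-leaf potential. This is the direct analog of the key observation in the simple-model argument: each such removal creates a fresh component (contributing a new ``$+1$'') while the leaf-set of the original component is preserved in the union of the new pieces, and $\deg_{\CF}$ of each surviving leaf is unaffected. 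Localizing the potential loss to the two distinguished positions (root of $C$ and the leaf $p(v)$) is what makes the crude bound $-2$ on Case~2 suffice.
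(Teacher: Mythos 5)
Your proposal follows the same four-case decomposition as the paper, and Cases~2 and~4 are handled correctly. However, there is a genuine gap in Case~3 and a smaller one in Case~1.

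The main issue is Case~3 ($v$ is $\CF$, no check). You account only for the decrease of the leaf weight $\frac{1}{\deg_{\CF}(p(v))+1}$, concluding the change is $\geq -\frac12$. What you have missed is that $v$ itself becomes the root of a new piece $\widetilde{\tree}^v_{t+1}$, which contributes $+1$ to the potential, so the correct bound is $\geq 1 - \frac12 = +\frac12$, not $-\frac12$. The reason this matters: the potential used downstream (and the one the paper actually analyzes in this proof) is the recursively-defined $\AdapatedLeavesComponentsPotential(\tree^u_t)$ from Definition~\ref{def:adapted_comps_leaves_potential}, which sums over \emph{all} minimal-$\CF$-rooted pieces in $\tree^u_t$ --- this is exactly what $X_t$ is in Lemma~\ref{lem:singletreesurvival}. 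The lemma statement's notation $\AdapatedLeavesComponentsPotential(\widetilde{\tree}^u_t)$ is misleading, but the claimed bound $\frac{1-p}{2}-3(1-\eps)p$ can only be reached with the $+1$ in place. Concretely, your Case~3 contribution is $\geq -\frac{\eps(1-p)}{2}$ while the paper's is $\geq +\frac{\eps(1-p)}{2}$, and the resulting $\eps(1-p)$ deficit cannot be ``absorbed into the $-3(1-\eps)p$ error term'' as you suggest (for example, at $\eps = 1/2$, $p$ small, your combined bound is near $0$ while the target is near $1/2$).

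The minor issue is Case~1: you cite the simple-model bound $\PP(p(v)\text{ is a leaf}) < 1/2$, but in the general model a leaf of $C$ can have $\CF$ children, inflating its preferential-attachment weight $\deg_{\PT}(w)+1 = \deg_{\CF}(w)+1$, so that probability need not be less than $1/2$. The correct argument, as in the paper, observes that those same $\deg_{\CF}(w)$ factors appear in the denominator of the leaf weight $\frac{1}{\deg_{\CF}(w)+1}$, so they cancel: the expected loss from attaching to a leaf is $\sum_{w \text{ leaf}} \frac{\deg_{\CF}(w)+1}{q(t)}\cdot\frac{1}{\deg_{\CF}(w)+1} = \frac{\#\{\text{leaves}\}}{q(t)} < \frac12$. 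Your numerical conclusion (expected change $\geq \frac12$) happens to be right, but the justification you give for it is not valid here.
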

	
	\begin{proof}
		Let $v$ denote the node added to the process at time $t+1$. Denote its parent by $p(v)$ and let $C \in \widetilde\CPT(t)$ be the $\PT$ component containing $p(v)$ at time $t$. 
		Note that attaching $v$ to $C$ does not modify any of the PT components $C' \neq C$. 
		Therefore suffices to analyze the change of potential in $C$.
		
		First, if $|C| = 1$, then $|C \cup \{v\}| = 2$. Suppose first that $v$ does not run the checking procedure; this holds with probability $1-p$. Since $C \cup \{v\}$ contains precisely one leaf in this case, its total potential is $2$, an increase of $1$ over the potential of $C$.
		In the other case, with probability at most $p$, checking takes place and both $u$ and $v$ are marked PF, thus removing $C$ from $\widetilde\CPT(t+1)$ without creating new \PT\ nodes, which decreases the total potential by $1$. In total, the expected change in potential is at least $1 \cdot (1-p) - 1 \cdot p = 1-2p$.
		
		Otherwise, $|C| > 1$, and $\deg_{\PT}(\mathrm{root}) \geq 1$. Since $v$ chooses its parent  $p(v)$ according to the \emph{preferential attachment distribution over $\tree^u_t$}, we have,
		$$\alpha:=\PP\left(p(v) \text{ is a leaf}\right) = \frac{\sum\limits_{w\in C \text{ is a leaf}}(\deg_{\PF}(w)+1)}{\sum\limits_{w\in C}(\deg_{\PT}(w) +1)} = \frac{\sum\limits_{w\in C \text{ is a leaf}}(\deg_{\PF}(w)+1)}{q(t)},$$
		where we have set $q(t):= \sum\limits_{w\in C}(\deg_{\PT}(w) +1)$.
		There are several cases to consider. We begin with the possibilities to increase the potential. For this, it will be helpful to note that if the new node, $v$, is $\CT$, then $\deg_\PF(u) = 0$, and so it contributes $1$ to the potential. Thus let us denote the event $E:=\{v \text{ is } \CT \text{ and no check was performed}$. It is immediate that $\PP(E) =(1-\eps)(1-p). $
		\begin{enumerate}
			\item If $p(v)$ is a leaf and $v$ is $\CT$, then after addition $p(v)$ will no longer be a leaf. Thus, with probability $1-p$ no check was performed, and so, when $E$ happens, the expected increase in the potential is at least,
			\begin{align*}
			(1-\eps)&(1-p)\sum\limits_{w\in C \text{ is a leaf}} \PP\left(p(v) = w\right)\left(1- \frac{1}{\deg_{\CF}(w) + 1}\right)\\
			&= \frac{  (1-\eps)(1-p)}{q(t)}\sum\limits_{w\in C \text{ is a leaf}} \left(\deg_{\CF}(w) + 1\right)\left(1- \frac{1}{\deg_{\CF}(w) + 1}\right)\\
			&= (1-\eps)(1-p)\left(\alpha - \frac{\#\{\text{leaves in } C\}}{q(t)}\right)
			.
			\end{align*}
			\item If $p(v)$ is not a leaf, under $E$, the expected increase in the potential is at least,
			\begin{align*}
			(1-\eps)(1-p)\PP\left(p(v) \text{ is not a leaf}\right) =  (1-\eps)(1-p)(1-\alpha)
			\end{align*}
		\end{enumerate}
		Combining the above two cases we see,
		$$\E\left[\Delta_t{\bf 1}_E|\AdapatedLeavesComponentsPotential(t)\right] \geq (1-\eps)(1-p)\left(1 - \frac{\#\{\text{leaves in } C\}}{q(t)}\right) \geq \frac{(1-\eps)(1-p)}{2}.$$
		Now let us address the cases where the potential may decrease.
		\begin{enumerate}
			\item Suppose that $v$ is $\CT$, but that it runs a check (with probability $p$), then the potential can decrease in two ways.
			The added node $v$ can remove a leaf from $C$, which can only happen if $p(v)$ is a leaf.
			Note that any removed parent of $p(v)$, other than the root, can only increase the potential, since it would create new connected components, without affecting the number of leaves.
			So, the other possibility to decrease the potential is to remove the root. This can happen regardless of whether $v$ is connected to a leaf or an internal node. 
			Thus,
			$$\PP\left(2\leq \Delta_t < -1 \text{ and } v \text{ is } \CT \right)\leq (1-\eps)\PP\left(u \text{ is a leaf and a check was performed} \right) \leq (1-\eps)p,$$
			and 
			$$\PP\left(\Delta_t = -1 \text{ and } v \text{ is } \CT \right)\leq (1-\eps)\PP\left(u \text{ is not a leaf and a check was performed} \right) \leq (1-\eps)p,$$
			\item The final possibility is that $v$ is $\CF$, which does not run a check, and that $p(v)$ is a leaf. In this case the weight $\frac{1}{\deg_\PF(p(v))+1}$ is going decrease to $\frac{1}{\deg_\PF(p(v))+2}$, and
			$$\frac{1}{\deg_\PF(p(v))+2} - \frac{1}{\deg_\PF(p(v))+1} \geq \frac{1}{2}.$$
			On the other hand now, $v \in \CF(u,t+1)$ and is the root of a new tree $\widetilde{\tree}^v_{t+1}$ with $\AdapatedLeavesComponentsPotential(\widetilde{\tree}^v_{t+1}) = 1$.
			Thus, the expected change is at least 
			$$\PP\left(v \text{ is } \CF \text{ and no check was performed}\right)\left(1-\frac{1}{2}\right) =\frac{\eps(1-p)}{2}.$$
		\end{enumerate}
		The above calculations show,
		$$\E\left[\Delta_t{\bf 1}_{E^c}|\widetilde{\tree}_t^u\right] \geq \frac{\eps(1-p)}{2} -2p(1-\eps) -p(1-\eps) = \frac{\eps(1-p)}{2} -3(1-\eps)p.$$
		Altogether we see,
		$$\E\left[\Delta_t|\widetilde{\tree}_t^u\right] \geq \frac{(1-\eps)(1-p)}{2}+\frac{\eps(1-p)}{2} -3(1-\eps)p = \frac{1-p}{2}- 3(1-\eps)p.$$
	\end{proof}
	
	\begin{lemma} \label{lem:singletreesurvival}
		Suppose that $\frac{1-p}{2}- 3(1-\eps)p > 0$, then $\{\tree^u_t\}_{t\geq 0}$ survives with positive probability.
	\end{lemma}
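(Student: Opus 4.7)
The plan is to apply the general sub-martingale argument of Lemma \ref{lem:positivesubmartingale} to the generalized potential $\AdapatedLeavesComponentsPotential(\widetilde{\tree}^u_t)$, mirroring the structure of the proof of Theorem \ref{thm:simpleerrorsurvival} almost verbatim but with Lemma \ref{lem:adaptedleavespotentialevol} replacing Lemma \ref{lem:leavespotentialevol}.

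First I would record the key reduction: showing that $\AdapatedLeavesComponentsPotential(\widetilde{\tree}^u_t)>0$ for all $t$ already implies that $\{\tree^u_t\}_{t \geq 0}$ survives, in the sense of the definition of survival of error effects. Indeed, whenever the potential is positive, $\widetilde{\tree}^u_t$ contains at least one $\PT$ node, and since $\widetilde{\tree}^u_t \subseteq \tree^u_t$, this node is also a $\PT$ descendant of $u$. Moreover, the potential being zero is an absorbing state for the subtree rooted at $u$, because the preferential attachment rule only selects $\PT$ parents, so once there are no $\PT$ nodes at or below $u$, no new node is ever attached below $u$.

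Next I would use Lemma \ref{lem:adaptedleavespotentialevol} to observe that the process $\{\AdapatedLeavesComponentsPotential(\widetilde{\tree}^u_t)\}_{t\geq 0}$ is a non-negative sub-martingale with bounded differences: the increments satisfy $|\Delta_t|\leq 2$, and on the event $\AdapatedLeavesComponentsPotential(\widetilde{\tree}^u_t)>0$ the conditional expectation of $\Delta_t$ is at least $c_2 := \frac{1-p}{2} - 3(1-\eps)p$, which is strictly positive by our hypothesis. Steps of the global CKP at which the new node attaches outside $\widetilde{\tree}^u_t$ contribute $\Delta_t = 0$ and therefore preserve both the sub-martingale inequality and the bounded-differences bound.

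To finish, I would mimic the bootstrap at the start of the proof of Theorem \ref{thm:simpleerrorsurvival}: for any large constant $C$ (depending on $p,\eps$) there is positive probability that a few initial attachments below $u$ occur without any check succeeding, producing a small $\PT$ subtree below $u$ whose potential exceeds $C$ at some deterministic time $t_0$. Conditioning on this event and shifting time, Lemma \ref{lem:positivesubmartingale} applied with $c_1 = 2$ and $c_2$ as above yields that $\AdapatedLeavesComponentsPotential(\widetilde{\tree}^u_t)$ remains strictly positive for all $t \geq t_0$ with positive probability, giving the desired survival.

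The only non-routine point is the accounting for CKP steps that do not modify $\widetilde{\tree}^u_t$ at all (the new node attaches outside, or attaches inside $\CF(u,t)$): this is the one spot where the general model differs qualitatively from the simple model, and one must check that such "null" steps neither spoil the drift bound of Lemma \ref{lem:adaptedleavespotentialevol} nor inflate the bounded-differences constant. Since both are preserved trivially (drift $0$, increment $0$), the Azuma-type tail bound inside Lemma \ref{lem:positivesubmartingale} goes through unchanged and the proof concludes.
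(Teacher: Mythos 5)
Your proof is correct and takes essentially the same route as the paper: reduce to positivity of the leaves-and-components potential, invoke Lemma~\ref{lem:adaptedleavespotentialevol} for the bounded-difference drift condition, bootstrap the initial value to a large constant by positive probability of a finite configuration, and apply Lemma~\ref{lem:positivesubmartingale} with $c_1 = 2$ and $c_2 = \frac{1-p}{2} - 3(1-\eps)p$. The only cosmetic deviation is that you track $\AdapatedLeavesComponentsPotential(\widetilde{\tree}^u_t)$ rather than the recursively defined $\AdapatedLeavesComponentsPotential(\tree^u_t)$ that the paper uses, but both dominate nontrivially survival of $\tree^u_t$, and your explicit remark that steps attaching outside $\widetilde{\tree}^u_t$ are null increments is a sound (and slightly more careful) piece of bookkeeping that the paper leaves implicit.
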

	\begin{proof}
		Write $X_t := \AdapatedLeavesComponentsPotential(\tree^u_t)$. Since $|\tree^u_t| \geq X_t$, it will be enough to show,
		$$\PP\left(\min\limits_{t\geq 0} X_t> 0\right) > 0.$$
		Keeping this in mind, with no loss of generality, it is fine to assume $X_0 = C_{\eps,p}$, for some large constant $C_{\eps,p} > 0$. This is because every finite configuration happens with positive probability.
        Similar to the case of Theorem \ref{thm:simpleerrorsurvival}, we need a sub-martingale with bounded increments. In a similar vein we define $\tilde{X}_t$ to satisfy $\tilde{X}_0 = X_0$ and 
         $$\tilde{X}_t = \begin{cases}\tilde{X}_{t-1} + X_t - X_{t-1}& \text{if } X_t - X_{t-1} \leq 2\\ 
        \tilde{X}_{t-1} + 2 & \text{if } X_t - X_{t-1} > 2\end{cases}.$$
		
		Repeating the same arguments as in the proof of Theorem \ref{thm:simpleerrorsurvival}, Lemma \ref{lem:adaptedleavespotentialevol} implies that, under the condition $c:= \frac{1-p}{2}- 3(1-\eps)p > 0$, $\tilde{X}_t$ is a sub-martingale such that,
		$|\tilde{X}_{t+1} - \tilde{X}_t| \leq 2,$ and when $X_t \neq 0$, $\E\left[\tilde{X}_{t+1} -\tilde{X}_t|X_t\right] \geq c>0$. The claim now follows by invoking Lemma \ref{lem:positivesubmartingale}, and noting $X_t \geq \tilde{X}_t$, almost surely.
	\end{proof}
	Given Lemma \ref{lem:singletreesurvival}, Theorem \ref{thm:generalsurvival} readily follows.
	\begin{proof}[Proof of Theorem \ref{thm:generalsurvival}]
		Let $u$ be any $\CF$ node in the $(\eps,p,k)$-CKP process. Then if $$\frac{1-p}{2}- 3(1-\eps)p > 0,$$
		Lemma \ref{lem:singletreesurvival} shows that $\tree^u_t$ survives with positive probability and we may conclude that the error effect survives in the $(\eps,p,k)$-CKP process.
	\end{proof}

	\section{Discussion and Open Questions} \label{sec:open_questions}
	In this paper, we have studied the different behaviors of CKPs when the underlying graph process is a tree. Our results reveal striking differences in the overall shape and persistence of the processes. These differences depend on the interactions between the different parameters.
	
	When $k$ is large, Theorems \ref{thm:error_effect_elim_simple} and \ref{thm:simpleerrorsurvival}, for the simple model, along with Theorems \ref{thm:error_effect_elim_general} and \ref{thm:generalerrorsurvival}, for the general model, identified a phase transition for the error effects, which mainly depends on $p$. The results establish that the propagation of errors can be completely eliminated, with absolute certainty, as long as we put some constant, which is necessarily not too small, fraction of knowledge units under scrutiny. In contrast, in Theorem \ref{thm:notwolevels}, we elucidated the dramatic role of the depth, $k$, and showed that very shallow checks completely nullify the above dependence in $p$. In particular, when $k=2$, there is no way to guarantee the elimination of error effects, which should discourage shallow checking procedures. 
	
	Other than considering phase transitions, we have also studied the structural properties of the processes in the different regimes. In Theorem \ref{thm:reliableprocess}, we focused on the case of small $p$, where the error effects in the simple process can survive. According to the theorem, as long as $p$ is not small, as dictated by $k$, even when the simple model survives, one may still guarantee that no single error can be connected to most of the entire process. In particular, each surviving component will only have sub-linear size. Finally, Theorem \ref{thm:highreliableprocess}, dealt with a regime of the general mode in which error effects are guaranteed to be eliminated. By design, even when error effects are eliminated, the general model continues to evolve, and we show that, by making $p$ still larger, we may also guarantee that the proportion of $\False$ remains almost minimal.

	While we aimed to cover the wide range of possible phenomena exhibited by the CKPs, our work also leaves some open questions. Below we list several such questions and other possible directions for research.

	\begin{itemize}
		\item \textbf{Shallowness of checks:} As detailed above, for the simple model, there is a strict phase transition, depending on $k$. However, our results do not cover the case $k = 3$, and it will be interesting to see whether the error effects can be eliminated in this case. 
		\begin{question}
			Is there some $p < 1$, such that the error effect in the $(p,3)$-simple CKP is completely eliminated?
		\end{question}
		It seems that a positive answer to the above question would require a more subtle potential than our exponential potential.
		
		\item \textbf{Critical parameters:} 
		In a similar vein to the previous question, the, arguably challenging, question of finding the critical $p$ for the transition remains open. 
		\begin{question}
			What is the value of $p_0 \in (0,1)$ (which may depend on $k$), such that for any $p<p_0$ the error effect in the $(p,k)$-simple CKP survives with positive probability, and for $p>p_0$ the error effect in the $(p,k)$-simple CKP is completely eliminated?
		\end{question}
		Similar questions are left open with respect to our other definitions. For example, finding the correct polynomial power in Theorem \ref{thm:reliableprocess} is also of interest.
		
		\item \textbf{Proportion of false nodes:} 
		Another possible direction would be to complete the picture presented in Theorem \ref{thm:highreliableprocess} and show a result in the converse direction.
		
		\begin{question}
			Is there a set of non-trivial parameters $p,k,\eps$, such that the expected proportion of \False\ nodes in the $(\eps,p,k)$-CKP is $1 - o(1)$?
		\end{question}
		
		Note that for $\True$ nodes, unlike their $\False$ counterparts, there is no a priori probabilistic guarantee on their expected proportion. Thus, it makes sense to study regimes where all nodes, except a negligible proportion, are $\False$. In particular, identifying the possible existence of intermediate regimes where $\False$ nodes exist in abundance, yet do not overwhelm the process, is also of interest.
		\item \textbf{More realistic models:}  As discussed in the introduction, we considered a simplified model of knowledge accumulation, where each knowledge unit relies on a single existing previous unit. This restrictive assumption naturally leads to the preferential attachment tree we've considered. However, in many cases of interest, trees do not necessarily provide a faithful representation of knowledge accumulation since new knowledge can rely on several different sources, which leads to directed acyclic graphs.
		\begin{question}
			Can similar results apply in more general CKPs where the underlying model is a DAG and nodes can have an in-degree larger than $1$?
		\end{question}
		The crucial point is that any extension of our model to general DAGs must also specify a natural way for a new node to choose a set of `parents'. Unlike the tree model, it is not satisfactory to choose a random subset of existing vertices since new units should be more likely to rely on existing units that are similar, in some sense to be defined. So, a general model should define a similarity metric on nodes and choose new parents based on both their degrees and this metric, leading to a more involved analysis. Towards this aim, a subsequent work, \cite{brandenberger2023combinative}, including a subset of the authors, showed that many of the phenomena shown in this paper also extend to more general settings. In particular, the new models allow for DAG-like CKPs and more general classes of growth models.
	\end{itemize}

	\bibliographystyle{alpha}
	\bibliography{references}
\end{document}